\pdfoutput=1
\documentclass[journal,onecolumn]{IEEEtran}
\usepackage{graphicx}
\usepackage{cite}
\usepackage{booktabs}
\usepackage{comment}
\usepackage{epstopdf}
\usepackage{amsthm}
\usepackage{amssymb}
\usepackage{amsmath}
\usepackage{mathtools}

\usepackage{xcolor}
\definecolor{linkblue}{RGB}{0,70,160}
\definecolor{citationpurple}{RGB}{128,0,128}
\usepackage[colorlinks=true,linkcolor=linkblue,citecolor=citationpurple,urlcolor=linkblue,bookmarks=false,hypertexnames=false]{hyperref}
\newtheorem{lemma}{Lemma}

\newtheorem{corollary}{Corollary}
\newtheorem{proposition}{Proposition}
\newtheorem{theorem}{Theorem}
\newcounter{algorithm}
\newcounter{algline}
\newenvironment{compactalgorithm}[1]{%
  \refstepcounter{algorithm}%
  \setcounter{algline}{0}%
  \par\begingroup\small
  \setlength{\tabcolsep}{0pt}%
  \setlength{\parskip}{0pt}%
  \setlength{\parsep}{0pt}%
  \noindent\begin{minipage}{0.96\textwidth}
  \hrule\vspace{2pt}
  \textbf{Algorithm~\thealgorithm}\quad #1\par
  \vspace{2pt}\hrule\vspace{3pt}
  \begin{tabular}{@{}r@{\hspace{0.75em}}p{0.90\textwidth}@{}}
}{%
  \end{tabular}
  \vspace{2pt}\hrule
  \end{minipage}\par
  \endgroup
}
\newcommand{\alginput}[1]{\multicolumn{2}{@{}p{0.96\textwidth}@{}}{\hspace{2.7em}\textbf{Input:} #1}\\[-1pt]}
\newcommand{\algoutput}[1]{\multicolumn{2}{@{}p{0.96\textwidth}@{}}{\hspace{2.7em}\textbf{Output:} #1}\\}
\newcommand{\algline}[1]{\stepcounter{algline}\textbf{\thealgline} & #1\\[-1pt]}

\begin{document}
\title{Age of Information in Time-Varying Multi-Priority Queues}
\author{Burak Karasakal, Aimin Li, \IEEEmembership{Member, IEEE}, and Elif Uysal, \IEEEmembership{Fellow, IEEE}%
\thanks{The authors are with the \href{https://cng-eee.metu.edu.tr/}{Communication Networks Research Group (CNG)}, METU, Ankara, T\"urkiye (e-mail: \{burak.karasakal, aimin, uelif\}@metu.edu.tr).}
\thanks{This work is supported by the European Union through ERC Advanced Grant 101122990-GO SPACE-ERC-2023-A. Views and opinions expressed are however those of the author(s) only and do not necessarily reflect those of the European Union or the European Research Council Executive Agency. Neither the European Union nor the granting authority can be held responsible for them.}}

\maketitle

\begin{abstract}
In networks with intermittent connectivity, such as mobile, aerial, and space systems, maintaining information freshness is complicated by \textit{time-varying arrivals}, \textit{service disruptions}, and interactions among traffic classes with different \textit{priorities}. To capture these effects, we study a multi-priority single-server queue with \textit{time-varying} arrivals and service rates under intermittent connectivity. Our main result shows that an appropriately selected collection of state-conditioned first moments closes exactly, leading to a finite-dimensional linear time-periodic Ordinary Differential Equation (ODE) system for the mean Age of Information (AoI) and mean Peak Age of Information (PAoI) of each priority class. For periodic arrival and service rates, we define a one-period state map by propagating the ODE over a single period, and use the periodicity condition to formulate the periodic steady state as a fixed point of this map. We then propose a fixed-point iteration algorithm and prove its convergence to the unique periodic steady state (PSS). Numerical results reveal that high-priority traffic can strongly reshape the service process seen by lower-priority classes.
\end{abstract}

\begin{IEEEkeywords}
Age of Information, Peak Age of Information, time-varying networks,
priority queueing with replacement, latest-only buffering,
periodic steady state, linear time-periodic systems.
\end{IEEEkeywords}

\section{Introduction}
\subsection{Background and Motivation}
   Timely status delivery is essential in mobile and intermittently connected cyberphysical systems, including autonomous driving, satellite sensing, and UAV relaying, where goal-oriented control and coordination decisions depend on recently generated information~\cite{yates2021survey,kadota2018scheduling}.
   In such systems, the value of an update is determined not only by whether it is eventually delivered, but also by whether it still describes the current state of the physical process, namely the freshness of information. 
   Age of Information (AoI), defined as the time elapsed since the generation time of the freshest update available at the monitor~\cite{kaul2012real,yates2021survey}, captures this receiver-side notion of freshness.

Freshness analysis becomes particularly challenging when mobility, intermittent connectivity, and heterogeneous traffic coexist. 
First, update generation may \textit{vary over time} due to energy-aware sampling, event-triggered sensing, or mission-dependent observation schedules. 
Second, service capability may also be \textit{time-varying}, as transmission opportunities disappear during outages and effective service rates fluctuate with contact geometry, channel quality, shared-spectrum contention, or bandwidth allocation. Third, status streams are rarely homogeneous: safety-critical control packets, alarms, sensing reports, and routine telemetry may have substantially different urgency levels and freshness requirements. 
This makes \textit{priority differentiation} operationally necessary rather than merely analytically convenient~\cite{yates2019multiple,xu2021peakpriority}. 
Finally, under \textit{intermittent connectivity}, multiple updates from the same class may be generated before the next service opportunity; older waiting updates are then often dominated by fresher ones, motivating latest-only packet management~\cite{costa2016packetmgmt,najm2017harq}.

These considerations give rise to time-varying multi-priority status-update queues. In such systems, the freshness of a low-priority class is determined not merely by the nominal service rate, but by the residual service capacity left after higher-priority traffic is served.
This residual-service dependence induces \textit{priority coupling}: the effective service process experienced by each class becomes endogenous to the joint system state.
When the arrivals, service rates, and connectivity patterns \textit{vary periodically}, as in satellite orbital passes, duty-cycled communication, or scheduled resource allocation, \textit{the relevant long-run object is not a constant steady state but a periodic steady state.}


    \subsection{Related Work}

With the long-run behavior governed by a \textit{periodic} rather than \textit{time-invariant} steady state, it is natural to look beyond stationary mean-age metrics. A complementary line of work has studied distributional and non-stationary aspects of AoI, including stationary AoI distributions~\cite{inoue2019stationarydist}, violation probabilities in latest-only buffer systems~\cite{champati2019distribution}, and SHS-based moment characterizations for Markovian networks~\cite{yates2020moments}. 
Related distributional analyses have considered preemptive multi-source queues~\cite{dogan2021probpreemptive,moltafet2022mgf,hu2021violation}, while recent work has developed multi-dimensional PDE frameworks for time-varying $M_t/G/1/1$ systems~\cite{xu2025distribution}. 
Classical non-stationary queueing approximations also provide useful intuition for time-varying systems~\cite{green2007coping,feldman2008staffing}. 
However, these approaches do not provide an explicit class-wise AoI and PAoI characterization for time-varying multi-priority queues with latest-only replacement. 
The missing difficulty is the priority-induced residual service: a lower-priority class does not observe the physical service process directly, but only the service opportunities left after higher-priority traffic has been served.

This coupling makes the problem fundamentally different from applying stationary AoI formulas with time-dependent rates. 
Time variation affects the external arrival and service primitives, while priority decisions endogenously reshape the effective service opportunities available to lower-priority classes. 
At the same time, latest-only replacement makes the age of the next delivered packet depend on the evolution of the per-class buffers, not only on the current server state. 
Consequently, class-wise AoI and PAoI cannot be characterized by separate single-class balance equations. 
A tractable analysis requires a joint description of the server state, replacement buffers, packet ages in service and in waiting, and class-specific completion events. 
This motivates an exact finite-dimensional first-moment framework for computing the periodic freshness trajectories of all priority classes.

\subsection{Contributions}

The main contributions of this paper are:
\begin{enumerate}
    \item \textbf{Time-varying priority freshness model.}
    We formulate a multi-class status-update queue with strict priority service, latest-only per-class replacement, time-varying arrivals and service rates, and intermittent ON/OFF service availability. 
    The model captures the residual-service-opportunity effect through which higher-priority traffic endogenously reshapes the service opportunities observed by lower-priority classes.

    \item \textbf{Exact closed ODE.}
     We show that an appropriately selected collection of state-conditioned first moments closes exactly, despite three coupled difficulties: priority-induced endogenous residual service, latest-only replacement across per-class buffers, and a periodic rather than stationary long-run regime. This yields a finite-dimensional linear time-periodic ODE system that jointly characterizes the class-wise mean AoI and completion-conditioned mean PAoI trajectories.

    \item \textbf{Periodic steady-state computation.}
    For periodic arrivals, service rates, and availability patterns, we define the one-period state map induced by the ODE system and formulate the periodic steady state as its \textit{fixed point}. 
    We establish existence and uniqueness of the periodic solution and provide a relaxed \textit{fixed-point iteration} for computing the resulting freshness trajectories.

    \item \textbf{Structural insights on low-priority staleness.}
    The framework reveals how residual service opportunities govern low-priority freshness under intermittent connectivity. 
    It further identifies regimes in which time-average AoI and completion-conditioned mean PAoI separate sharply; in particular, long unserved intervals can cause the mean AoI of low-priority traffic to exceed the PAoI observed at completion epochs.
\end{enumerate}

\section{Model and Formulation}
\subsection{System Model}

We consider a multi-priority queueing system designed to process time-sensitive information packets from $N$ independent sources (classes) to a destination, as shown in Fig.~\ref{fig:sys_model}. The system consists of $N$ classes with distinct priorities, a size-one waiting buffer for each class, and a single server. We characterize the system by the class currently in service and the occupancy of the waiting buffers. Specifically, we define the service state $J \in \{0,1,\ldots,N\}$ as the index of the class currently being served, where $J=0$ indicates that the server is idle, and the buffer state $B_i \in \{0,1\}$ for class $i \in \{1,\ldots,N\}$, where $B_i=1$ indicates that a packet is waiting and $B_i=0$ indicates that the buffer is empty. The system state is therefore represented by the $(N+1)$-tuple $(J,B_1,\ldots,B_N) \in \mathcal{Q}$, where $\mathcal{Q}$ denotes the finite state space. For any state $s \in \mathcal{Q}$, $J(s)$ and $B_i(s)$ denote the corresponding components of $s$. We now summarize the key properties of the system.

\begin{itemize}
    \item \textbf{Priority Discipline:} The system operates under a strict non-preemptive priority discipline. The scheduling rule is characterized by the map $\mathrm{next}: \mathcal{Q} \rightarrow \{0,1,\ldots,N\}$, which selects the next class after a service completion:
    \begin{align}
        \mathrm{next}(s) = \begin{cases} \min\mathcal{A}, & \mathcal{A} \neq \emptyset, \\ 0, & \text{otherwise,} \end{cases}
    \end{align}
    where $\mathcal{A} = \{i \in \{1,\ldots,N\} : B_i(s) = 1\}$ is the set of classes with a packet waiting in their buffer at state $s$. Thus, a lower-priority packet is served only when all higher-priority waiting buffers are empty.
    \item \textbf{Buffer Management:} Each class maintains a size-one waiting buffer operating under a \textit{preemptive replacement (latest-only)} policy. The buffer is emptied immediately when a packet enters service, at which point the packet becomes \textit{non-preemptible} until service completion. This policy ensures that the most recently generated packet is served next, thereby improving the freshness of the information delivered to the destination.
    \item \textbf{Arrival \& Service Processes:} Arrivals for class $i$ follow independent Poisson processes with time-varying rates $\lambda_i(t)$. Service times follow an exponential distribution with time-varying nominal rates $\mu_i(t)$.
    \item \textbf{Intermittent Connectivity:} The link between the server and the destination is subject to intermittent outages, causing the service rate $\mu_i(t)$ for class $i$ to vary over time and to drop to zero during outage periods.
    \item \textbf{Inherent Periodicity:} To account for physical cycles such as satellite orbital motion and daily network traffic patterns, all rates and availability functions are periodic with a common period $T$ \footnote{This assumption is not overly restrictive when the individual periods are rationally related; in that case, a common period can be formed by their least
common multiple.} Formally, $\lambda_i(t) = \lambda_i(t+T)$ and $\mu_i(t) = \mu_i(t+T)$ for all $i$.
\end{itemize}
    \begin{figure}[!t] 
    \centering
    \includegraphics[width=0.62\textwidth]{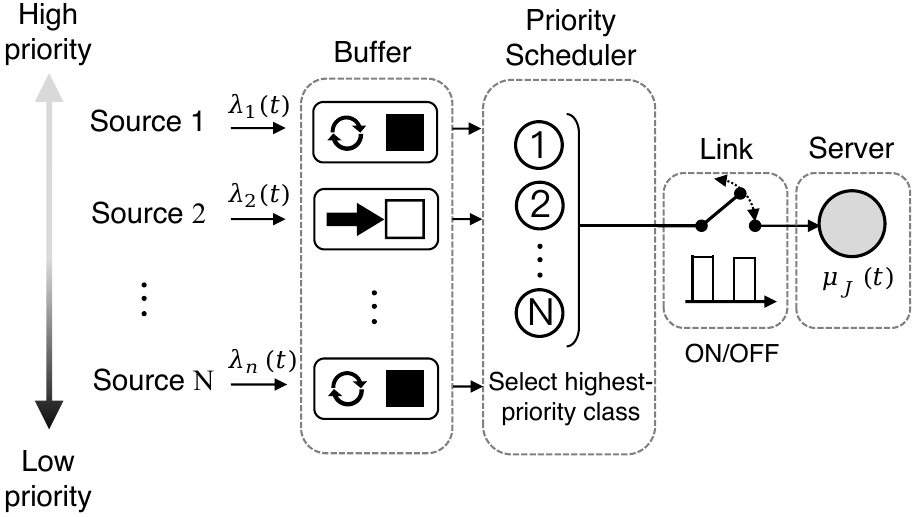}
    \caption{System model for a multi-priority latest-only status-update queue with time-varying arrival rate, a strict non-preemptive scheduler, time-varying service rate, and intermittent
link availability.}
    \label{fig:sys_model}
    \end{figure}
\subsection{Age of Information}

The Age of Information (AoI) for class $i$ is defined as $\Delta_i(t) \triangleq t - \max\{r_n^{(i)} : C_n^{(i)} \le t\}$, where $r_n^{(i)}$ and $C_n^{(i)}$ denote the generation time and completion time, respectively, of the $n$-th delivered update. We then introduce the following mean evolutions:
\begin{itemize}
    \item \textbf{Mean AoI:} $\bar{\Delta}_i(t) \triangleq \mathbb{E}[\Delta_i(t)]$.
    \item \textbf{Mean PAoI:} $\hat{\Delta}_i(t) \triangleq \text{lim}_{h\rightarrow0^{+}}\mathbb{E}[\Delta_i^{-}(t) \mid \mathcal{E}_i(t,t+h)]$, defined for all $t$ such that $P\{\mathcal{E}_i(t, t+h)\} > 0$ for $\forall h > 0$, where $\mathcal{E}_i(t,t+h)$ denotes the event that a class‑$i$ completion occurs in the infinitesimal interval $(t,t+h)$ and $\Delta_i^{-}(t) \triangleq \text{lim}_{k \rightarrow 0^{-}}\Delta_i(t+k)$. 
    \item \textbf{Unserved Mean Age:} $\Delta_i^c(t) \triangleq \mathbb{E}[\Delta_i(t) \mid J(t) \neq i]$, where $J(t)$ denotes the class in service at time $t$.
\end{itemize}

\subsection{Continuous-Time Markov Chain Formulation}
We formulate a Continuous-Time Markov Chain (CTMC) to capture the discrete transitions between the different queue and server states. We index the states in $\mathcal{Q}$ through the mapping $\sigma: \mathcal{Q} \rightarrow \{1, \ldots, |\mathcal{Q}|\}$ defined as:
\begin{align}
    \sigma(s) = \begin{cases} 1, & J(s) = 0, \\ 2 + 2^N(J(s)-1) + \displaystyle\sum_{i=1}^N 2^{i-1}B_i(s), & J(s) \neq 0. \end{cases}
\end{align}
The system state at time $t$ is described by a discrete process $q(t) \in \mathcal{Q}$. Let $p_{\sigma(s)}(t) = \Pr\{q(t) = s\}$ denote the probability that the system occupies state $s \in \mathcal{Q}$ at time $t$. We define the state probability row vector as $\mathbf{p}(t) = [p_{1}(t), \ldots, p_{|\mathcal{Q}|}(t)] \in [0, 1]^{1 \times |\mathcal{Q}|}$. Its evolution is governed by the Kolmogorov Forward Equation: $\dot{\mathbf{p}}(t) = \mathbf{p}(t)\mathbf{Q}(t)$,
where $\mathbf{Q}(t) \in \mathbb{R}^{|\mathcal{Q}| \times |\mathcal{Q}|}$ is the generator matrix. Let $a^{(k,+)}$ denote the post-arrival state when a class-$k$ arrival changes the CTMC state: an idle server starts serving class $k$, while a busy server fills the class-$k$ buffer if $B_k(a)=0$. Replacement arrivals are self-transitions. Let $a^{-}$ denote the post-completion state, with the server moved to $\mathrm{next}(a)$ and the selected buffer cleared. With $Q_{i,i}(t)=-\sum_{m\neq i}Q_{i,m}(t)$, the off-diagonal rates are, for $a\neq b$,
\begin{align}
Q_{\sigma(a),\sigma(b)}(t)=
\begin{cases}
\lambda_k(t), & b=a^{(k,+)},\ k\in\{1,\ldots,N\},\\
\mu_{J(a)}(t), & b=a^{-},\ J(a)\neq0,\\
0, & \text{otherwise,}
\end{cases}
\label{eq:generator_entries}
\end{align}

\subsection{Auxiliary Age Variables}

Direct analysis of $\bar{\Delta}_i(t)$ and $\hat{\Delta}_i(t)$ is intractable without auxiliary variables, because the age at delivery depends not only on the current system state but also on how long the packet in service and the waiting packets have been held. We therefore introduce continuous auxiliary variables. Let $Y(t)$ denote the age of the packet in service, and let $Z_i(t)$ denote the current waiting time of the packet in the class-$i$ buffer. Each variable is zero when the corresponding server or buffer is idle. We then define the state-dependent moment row vectors $\mathbf{a}_i(t)$, $\mathbf{y}(t)$, $\mathbf{z}_i(t) \in \mathbb{R}^{1 \times |\mathcal{Q}|}_{\geq 0}$ with entries:
\begin{align}
    a_{i,\sigma(s)}(t) &\triangleq \mathbb{E}[\Delta_i(t) 1\{q(t)=s\}], \\
    y_{\sigma(s)}(t) &\triangleq \mathbb{E}[Y(t) 1\{q(t)=s\}], \\
    z_{i,\sigma(s)}(t) &\triangleq \mathbb{E}[Z_i(t) 1\{q(t)=s\}],
\end{align}
where $1\{\cdot\}$ is the indicator function. Let $\mathrm{dest}(s)$ denote the state reached immediately after a service completion. We define the following sparse transition rate matrices:
\begin{align}
    \mathbf{M}_{\sigma(s), \sigma(\text{dest}(s))}^{(\text{comp})}(t) &= \mu_{J(s)}(t) \cdot 1\{J(s) \neq 0\}, \\
    \mathbf{M}_{\sigma(s), \sigma(\text{dest}(s))}^{(i)}(t) &= \mu_i(t) \cdot 1\{J(s) = i\}, \\
    \mathbf{M}_{\sigma(s), \sigma(\text{dest}(s))}^{(\text{next}=i)}(t) &= \mu_{J(s)}(t) \cdot 1\{\text{next}(s) = i\}.
\end{align}
We set to zero all entries that do not correspond to service departures. The resulting matrices enable a compact ODE characterization of the age dynamics, which we derive in the next section.

\section{Main Results}
\subsection{Governing Equations for System Evolution}
We next derive governing equations that characterize the joint evolution of the queueing state and the associated age moments. The main technical difficulty is that priority coupling and latest-only replacement render the effective service seen by each class state-dependent and endogenous, preventing standard decoupled or stationary AoI analyses. The following theorem establishes a closed, finite-dimensional system of linear time-periodic ODEs for the relevant moment vectors. This characterization enables direct numerical evaluation of class-wise AoI and PAoI, and forms the basis for studying convergence to the periodic steady state. Throughout, for any state predicate $\mathcal{C}(s)$ on $s\in\mathcal{Q}$, we define the row indicator vector $\boldsymbol{\delta}_{\mathcal{C}}\in\{0,1\}^{1\times|\mathcal{Q}|}$ by $[\boldsymbol{\delta}_{\mathcal{C}}]_{\sigma(s)}=\mathbf{1}\{\mathcal{C}(s)\}$ for all $s\in\mathcal{Q}$.
\begin{theorem}[ODEs for Age Moments and Recovery of Performance Metrics] \label{theorem_main}
The state-dependent moment vectors $\mathbf{a}_i(t)$, $\mathbf{y}(t)$, and $\mathbf{z}_i(t)$ satisfy the following closed system of linear time-periodic ODEs for each class $i \in \{1, \ldots, N\}$:

\textbf{(i) AoI moments:}
\begin{equation}
    \dot{\mathbf{a}}_i(t) = \mathbf{p}(t) + \mathbf{a}_i(t)\mathbf{Q}(t) + \bigl(\mathbf{y}(t) - \mathbf{a}_i(t)\bigr)\mathbf{M}^{(i)}(t),
\end{equation}

\textbf{(ii) In-service packet age moments:}
\begin{equation}
    \begin{split}
        \dot{\mathbf{y}}(t) = \,&\mathbf{p}(t) \odot \boldsymbol{\delta}_{J\neq 0} + \mathbf{y}(t)\mathbf{Q}(t) \\
        &+ \bigl(\mathbf{z}^{\mathrm{next}}(t) - \mathbf{y}(t)\bigr) \mathbf{M}^{(\mathrm{comp})}(t),
    \end{split}
\end{equation}
where $\mathbf{z}^{\mathrm{next}}(t) = \sum_{k=1}^{N} \mathbf{z}_k(t) \odot \boldsymbol{\delta}_{\mathrm{next}=k}$.

\textbf{(iii) Waiting packet age moments:}
\begin{align}
    \dot{\mathbf{z}}_i(t) &= \mathbf{p}(t)\odot \boldsymbol{\delta}_{B_i = 1} + \mathbf{z}_i(t)\mathbf{Q}(t) - \lambda_i(t)\big( \mathbf{z}_i(t)\odot \boldsymbol{\delta}_{B_i = 1} \big) \notag \\ 
    &- \mathbf{z}_i(t)\mathbf{M}^{(\mathrm{next}=i)}(t).
\end{align}

Furthermore, the average AoI $\bar{\Delta}_i(t)$ and average peak AoI $\hat{\Delta}_i(t)$ for each class $i$ are recovered from the state-dependent moment vector $\mathbf{a}_i(t)$ and the state distribution vector $\mathbf{p}(t)$ via:
\begin{equation}
    \bar{\Delta}_i(t) = \mathbf{a}_i(t)\mathbf{1}, \quad \hat{\Delta}_i(t) = \frac{\mathbf{a}_i(t)\mathbf{1}_{J=i}}{\mathbf{p}(t)\mathbf{1}_{J=i}}.
\end{equation}
\end{theorem}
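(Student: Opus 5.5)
The plan is to treat $(q(t),\Delta_i(t),Y(t),Z_1(t),\ldots,Z_N(t))$ as a piecewise-deterministic Markov process. Between jumps every continuous variable grows at rate one, but only where it is ``active'': $\Delta_i$ always, $Y$ only when $J\neq 0$, and $Z_i$ only when $B_i=1$. At each CTMC transition the continuous variables are reset by a linear map that depends only on the transition. The ODEs then follow from a first-order expansion of $\mathbb{E}[X(t+h)1\{q(t+h)=s\}]$ over a short interval $(t,t+h]$, in the style of the stochastic hybrid system (SHS) moment method \cite{yates2020moments}. We condition on $q(t)=a$ and on at most one transition occurring in $(t,t+h]$; two or more transitions have probability $o(h)$, and the moments are bounded on compact time intervals. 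This gives, for each target state $s$,
\begin{equation*}
\mathbb{E}[X(t+h)1\{q(t+h)=s\}] = \mathbb{E}[X(t)1\{q(t)=s\}]\Bigl(1+Q_{\sigma(s),\sigma(s)}(t)h\Bigr) + h\,p_{\sigma(s)}(t)\,1\{X\text{ active in }s\} + h\sum_{a\neq s}Q_{\sigma(a),\sigma(s)}(t)\,\mathbb{E}[X^{+}_{a\to s}1\{q(t)=a\}] + o(h),
\end{equation*}
where $X^{+}_{a\to s}$ denotes the post-jump value. Whenever the reset is the identity, the jump term combines with the diagonal term to produce exactly the column of $\mathbf{x}(t)\mathbf{Q}(t)$. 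It therefore suffices to list the transitions whose reset is \emph{not} the identity and to write each as ``$\mathbf{x}\mathbf{Q}$ plus a correction''.

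\textbf{(i) AoI moments.} $\Delta_i$ is always active, which gives the drift term $\mathbf{p}(t)$. Its only nontrivial reset is at a class-$i$ completion, where $\Delta_i\mapsto Y$. Arrivals of any class, and completions of classes $j\neq i$, leave $\Delta_i$ unchanged. Along the edges $a\to\mathrm{dest}(a)$ with $J(a)=i$ we therefore replace $\mathbf{a}_i$ by $\mathbf{y}$ in the flux, which yields the correction $(\mathbf{y}-\mathbf{a}_i)\mathbf{M}^{(i)}$.

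\textbf{(ii) In-service age.} $Y$ is active exactly when $J\neq0$, giving $\mathbf{p}\odot\boldsymbol{\delta}_{J\neq0}$. At a completion from state $a$, $Y\mapsto Z_{\mathrm{next}(a)}$, or $Y\mapsto 0$ if $\mathrm{next}(a)=0$; this is consistent with $\mathbf{z}^{\mathrm{next}}$ vanishing where $\mathrm{next}=0$. The result is the correction $(\mathbf{z}^{\mathrm{next}}-\mathbf{y})\mathbf{M}^{(\mathrm{comp})}$. An arrival to an idle server sets $Y=0$, which is already the identity because $Y=0$ when idle. Arrivals while the server is busy do not touch $Y$.

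\textbf{(iii) Waiting age.} $Z_i$ is active when $B_i=1$. A class-$i$ arrival with $B_i=1$ is a self-transition, so it is absent from $\mathbf{Q}$ off the diagonal, yet it resets $Z_i\mapsto0$. Since the process stays in state $a$, this removes mass at rate $\lambda_i$, producing the term $-\lambda_i(\mathbf{z}_i\odot\boldsymbol{\delta}_{B_i=1})$. A class-$i$ arrival with $B_i=0$ maps $0\mapsto0$, which is again the identity. At a completion with $\mathrm{next}(a)=i$, the buffer empties and $Z_i\mapsto0$, so the flux $\mathbf{z}_i\mathbf{M}^{(\mathrm{next}=i)}$ that $\mathbf{z}_i\mathbf{Q}$ would carry must be removed. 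Completions with $\mathrm{next}\neq i$ carry $Z_i$ unchanged. A subtle point is that $\mathrm{dest}(a)=a$ cannot occur, since completion always changes $J$ or a buffer bit. Hence completion edges are genuine off-diagonal entries of $\mathbf{Q}$, and the matrices $\mathbf{M}$ act only on those edges. Arrival self-loops are the one place where the bookkeeping must be done by hand. Closure is then immediate: every post-jump value is a linear function of the current moments restricted to the pre-jump state.

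\textbf{Recovery formulas.} Summing over states gives $\bar\Delta_i=\mathbf{a}_i\mathbf{1}$. For PAoI, the probability of a class-$i$ completion in $(t,t+h)$ is $h\sum_{s:J(s)=i}\mu_i(t)p_{\sigma(s)}(t)+o(h)$. The expected pre-jump age on that event is $h\sum_{s:J(s)=i}\mu_i(t)a_{i,\sigma(s)}(t)+o(h)$. Taking the ratio and letting $h\to0^{+}$, the common factor $\mu_i(t)$ cancels, since it is positive whenever $P\{\mathcal{E}_i\}>0$. This leaves $\mathbf{a}_i\mathbf{1}_{J=i}/\mathbf{p}\mathbf{1}_{J=i}$, with $\Delta_i^{-}$ justified by right-continuity of the paths.

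\textbf{Main obstacle.} The step I expect to be most delicate is the rigorous interchange of limit and expectation in this expansion, i.e. bounding the $o(h)$ terms uniformly. I would handle it by noting that every age is at most $t$ plus its initial value, so all moments are bounded on compact intervals and dominated convergence applies.
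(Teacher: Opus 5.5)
Your proposal is correct and follows essentially the paper's route: a first-order expansion of $\mathbb{E}[X(t+h)1\{q(t+h)=s\}]$ conditioned on $q(t)$, using the Markov property to separate the continuous variables from the transition probabilities, with completion and replacement resets generating the $\mathbf{M}$ and $-\lambda_i$ terms. Your ``identity resets give $\mathbf{x}\mathbf{Q}$, so only list the non-identity resets'' bookkeeping is a cleaner packaging of the paper's state-by-state split via $\mathcal{A}_s$ and $D_j$.

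One small correction in the PAoI step. $P\{\mathcal{E}_i(t,t+h)\}>0$ for all $h>0$ does not force $\mu_i(t)>0$; it fails at the start of a service window where $\mu_i$ rises continuously from zero. There you should expand with $\int_t^{t+h}\mu_i(u)\,du$ rather than $h\mu_i(t)$, which gives the same ratio. Also, $\Delta_i^-(t)=\Delta_i(t)$ a.s.\ holds because a jump at a fixed time has probability zero, not because of right-continuity.
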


\begin{proof}
See Appendix~\ref{appendix_a}.
\end{proof}

The ODE terms in Theorem~\ref{theorem_main} have direct physical interpretations. In (10), $\mathbf{p}(t)$ captures the unit-rate growth of monitor age, $\mathbf{a}_i(t)\mathbf{Q}(t)$ transports accumulated age mass across CTMC state transitions, and $\bigl(\mathbf{y}(t)-\mathbf{a}_i(t)\bigr)\mathbf{M}^{(i)}(t)$ resets the class-$i$ monitor age to the in-service packet age upon class-$i$ completions. Similarly, the first terms in (11) and (12) represent unit-rate aging of packets currently in service or waiting, while the remaining matrix terms transfer or remove packet-age mass when completions, priority selections, and latest-only replacements occur.

\subsection{Asymptotic Convergence to the Periodic Steady State} 


\begin{theorem}[Existence and Uniqueness of the Periodic Steady State]
\label{existence_thm_main}
The system of linear ODEs established in Theorem~\ref{theorem_main} admits a unique periodic steady state, denoted by $\mathbf{x}^\star(t)$, satisfying $\mathbf{x}^\star(t+T) = \mathbf{x}^\star(t)$ for all $t \geq 0$. Moreover, for any initial condition $\mathbf{x}(0)$, the solution converges exponentially to $\mathbf{x}^\star(t)$ in the standard Euclidean norm, satisfying:
\begin{equation}
    \|\mathbf{x}(t) - \mathbf{x}^\star(t)\|_2 \leq m e^{-\gamma t} \|\mathbf{x}(0) - \mathbf{x}^\star(0)\|_2,
\end{equation}
for some constants $\gamma > 0$ and $m \geq 1$.
\end{theorem}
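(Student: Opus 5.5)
We would treat the full state $\mathbf{x}(t)=[\mathbf{p}(t),\mathbf{a}_1(t),\ldots,\mathbf{a}_N(t),\mathbf{y}(t),\mathbf{z}_1(t),\ldots,\mathbf{z}_N(t)]$ as the solution of a linear $T$-periodic system $\dot{\mathbf{x}}=\mathbf{x}\mathbf{A}(t)$. Appending $\dot{\mathbf{p}}=\mathbf{p}\mathbf{Q}$ to the ODEs of Theorem~\ref{theorem_main} makes it homogeneous. This system is block lower-triangular with respect to the ordering $\mathbf{p}\to\mathbf{z}_i\to\mathbf{y}\to\mathbf{a}_i$. The forcing terms of each block depend only on blocks earlier in the ordering: $\mathbf{p}$ drives $\mathbf{z}_i$; $\mathbf{p}$ and $\mathbf{z}^{\mathrm{next}}$ drive $\mathbf{y}$; $\mathbf{p}$ and $\mathbf{y}$ drive $\mathbf{a}_i$.

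The plan is to analyze the monodromy (one-period) map $\Phi$ block by block, and show that a periodic steady state exists, is unique, and attracts all solutions exponentially. Floquet theory then supplies the bound with constants $m\ge1$ and $\gamma>0$.

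\textbf{Step 1: the CTMC block.} We would show that $\mathbf{p}$ converges exponentially to a unique periodic solution $\mathbf{p}^\star$. The transition matrix $\mathbf{P}(0,T)$ of the time-inhomogeneous chain is stochastic. We would argue it is primitive under a mild nondegeneracy assumption: $\int_0^T\lambda_i>0$ and $\int_0^T\mu_i>0$ for every $i$, which is exactly what intermittent connectivity should still guarantee. Under this assumption every state reaches the idle state within one period with positive probability, and the idle state reaches every state. Doeblin/Perron--Frobenius then gives a unique stationary row vector $\pi$ of $\mathbf{P}(0,T)$, and $\mathbf{p}(kT)\to\pi$ geometrically. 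We set $\mathbf{p}^\star(t)=\pi\mathbf{P}(0,t)$.

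\textbf{Step 2: the moment blocks.} Each moment block has the form $\dot{\mathbf{w}}=\mathbf{f}(t)+\mathbf{w}\mathbf{D}(t)$. We would show that each homogeneous part $\mathbf{D}$ is a \emph{strictly sub-stochastic} (killed) generator:
\begin{itemize}
\item For $\mathbf{z}_i$, $\mathbf{D}=\mathbf{Q}-\lambda_i\,\mathrm{diag}(\boldsymbol{\delta}_{B_i=1})-\mathbf{M}^{(\mathrm{next}=i)}$. Mass is killed at replacement arrivals and at entry into service.
\item For $\mathbf{y}$, $\mathbf{D}=\mathbf{Q}-\mathbf{M}^{(\mathrm{comp})}$. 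Mass is killed at completions.
\item For $\mathbf{a}_i$, $\mathbf{D}=\mathbf{Q}-\mathbf{M}^{(i)}$. Mass is killed at class-$i$ completions.
\end{itemize}
In each case the off-diagonal entries are nonnegative and the row sums are nonpositive. The one-period propagator $\mathbf{\Psi}(0,T)$ is therefore a nonnegative matrix with row sums at most one.

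The key claim is that its spectral radius satisfies $\rho(\mathbf{\Psi})<1$. Equivalently, from every state the killing event occurs within one (or finitely many) periods with positive probability. For $\mathbf{a}_i$, from any state one can reach a state serving class $i$ and then complete that service. This uses the positivity of $\int\lambda_i$ and $\int\mu_i$, together with the fact that, with positive probability, no higher-priority arrivals occur during the relevant window. Hence $\|\mathbf{\Psi}^K\|_\infty<1$ for some fixed $K$.

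With $\rho<1$, $\mathbf{I}-\mathbf{\Psi}$ is invertible. The periodic initial condition $\mathbf{w}^\star(0)=\bigl(\int_0^T\mathbf{f}^\star(s)\mathbf{\Psi}(s,T)\,ds\bigr)(\mathbf{I}-\mathbf{\Psi})^{-1}$ is then the unique fixed point. Errors split into two parts: a homogeneous decay of order $\rho^k$, and a forcing error inherited from earlier blocks, which decays geometrically by Step 1 and induction along the triangular order.

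\textbf{Step 3: assembling the bound.} Combining the blocks, the Floquet multipliers of the full system consist of those of $\mathbf{Q}$, namely one simple multiplier $1$ with all others of modulus $<1$, together with the moment multipliers, all of modulus $<1$. The multiplier $1$ corresponds to the conserved total mass $\mathbf{p}\mathbf{1}=1$. Restricted to the invariant affine set of probability vectors, every mode decays. We would therefore state the exponential bound for initial conditions with $\mathbf{p}(0)$ a probability vector, so that $\mathbf{x}(0)-\mathbf{x}^\star(0)$ lies in the zero-mass subspace. The bound $\|\mathbf{x}(t)-\mathbf{x}^\star(t)\|_2\le m e^{-\gamma t}\|\mathbf{x}(0)-\mathbf{x}^\star(0)\|_2$ then follows. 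Here $\gamma=-\tfrac1T\log\bar\rho$, where $\bar\rho<1$ is the largest remaining multiplier modulus, slightly enlarged to absorb Jordan blocks. The constant $m$ absorbs the intra-period growth, bounded via Gr\"onwall using $\sup_t\|\mathbf{A}(t)\|$.

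\textbf{Main obstacle.} The hard step is proving $\rho(\mathbf{\Psi})<1$ for each killed generator, particularly for $\mathbf{a}_i$ of low-priority classes. This requires a careful reachability argument across periods under intermittent ($\mu$ vanishing on intervals) rates, and the paper's hypotheses may need the integrated-positivity assumption made explicit. We would first try the direct route: bound below the probability of a class-$i$ completion within $K$ periods, uniformly in the starting state, by an explicit path from that state to a class-$i$ completion.
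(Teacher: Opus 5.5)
Your argument is correct under the nondegeneracy hypothesis you make explicit. It obtains the crucial strict contraction differently from the paper.

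\textbf{The paper's route.} The paper also uses the block-triangular structure. It removes the conserved mass by eliminating one probability ($\mathbf{p}_{\mathrm{red}}$, with affine forcing $\boldsymbol{\beta}(t)$). It then asserts that every diagonal block $\mathbf{A}_{ii}(t)$ is Metzler with \emph{strictly negative column sums at every instant}. From this it deduces that each $\boldsymbol{\Phi}_{ii}(T)$ is nonnegative with column sums below one, so all Floquet multipliers lie inside the unit disk. Finally it gets $m,\gamma$ from the factorization $\boldsymbol{\Phi}(t)=\mathbf{F}(t)e^{\mathbf{K}t}$ and a Lyapunov-equation weighted norm $\|\cdot\|_{\mathbf{H}}$.

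\textbf{Your route.} You keep the full $\mathbf{p}$ in a homogeneous system and treat the CTMC block by a Doeblin argument. You remove the multiplier $1$ by restricting to the zero-mass invariant subspace, which is the counterpart of the paper's $\mathbf{p}_{\mathrm{red}}$. You then prove $\rho(\boldsymbol{\Psi})<1$ for each killed block by reachability of the killing event over finitely many periods.

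\textbf{What your route buys.} It is more robust, because the paper's instantaneous condition does not hold as stated.
\begin{itemize}
\item $\mathbf{M}^{(i)}(t)\mathbf{1}$ vanishes at every state with $J(s)\neq i$, and all the $\mathbf{M}$ matrices vanish during outages. So several column sums of $\mathbf{Q}^\top-\mathbf{M}^{(i)\top}$ are zero, not negative.
\item $\mathbf{Q}_{\mathrm{red}}=\mathbf{Q}_{11}-\mathbf{1}\mathbf{q}_{N1}$ subtracts the last state's completion rate from an entire column. Hence $\mathbf{Q}_{\mathrm{red}}^\top$ need not be Metzler.
\end{itemize}
Strict decay arises only over time, through killing events, which is exactly what your reachability lemma captures. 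It also forces an assumption such as $\int_0^T\lambda_i>0$, $\int_0^T\mu_i>0$, without which the statement is false: with $\lambda_i\equiv 0$, $\bar{\Delta}_i(t)$ grows linearly.

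\textbf{What the paper's route would buy.} If its lemma held, it would give a one-period contraction in $\|\cdot\|_{\mathbf{H}}$ that feeds directly into the relaxed fixed-point iteration. Your argument gives contraction of the $K$-period map, or of the one-period map in an adapted norm, which serves the same purpose.

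\textbf{One small correction.} In Step~1, clearing a multi-class backlog can take several periods when classes have service windows in different parts of the cycle. Apply the Doeblin bound to $\mathbf{P}(0,T)^K$ rather than claiming idle is reached within a single period.
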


\begin{proof}
See Appendix~\ref{appendix_b}.
\end{proof}

\subsection{The Gap Between Mean and Peak Age}

\begin{proposition}[Gap Between Mean and Peak Age]\label{gap_prop}
The gap between the average peak age and the average age of class $i$ at time $t$ is given by 
\begin{equation}
    \hat{\Delta}_i(t) - \bar{\Delta}_i(t) = (1 - \pi_i(t))(\hat{\Delta}_i(t) - \Delta^{c}_i(t)),
    \label{eq:gap}
\end{equation}
where $\pi_i(t) \triangleq \mathbf{p}(t)\boldsymbol{\delta}_{J=i}$ denotes the probability that class $i$ is being served at time $t$.
\end{proposition}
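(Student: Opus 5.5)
The plan is a law-of-total-expectation decomposition of $\bar{\Delta}_i(t)$ according to whether class $i$ is in service at time $t$, combined with the recovery formula for $\hat{\Delta}_i(t)$ from Theorem~\ref{theorem_main}.

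\textbf{Step 1: split the mean age.} By definition,
\begin{equation*}
\bar{\Delta}_i(t)=\mathbf{a}_i(t)\mathbf{1}=\mathbf{a}_i(t)\boldsymbol{\delta}_{J=i}^{\top}+\mathbf{a}_i(t)\boldsymbol{\delta}_{J\neq i}^{\top}.
\end{equation*}
Since $a_{i,\sigma(s)}(t)=\mathbb{E}[\Delta_i(t)1\{q(t)=s\}]$, the second term equals $\mathbb{E}[\Delta_i(t)1\{J(t)\neq i\}]$. This in turn equals $(1-\pi_i(t))\Delta_i^c(t)$, because $\Pr\{J(t)\neq i\}=1-\mathbf{p}(t)\boldsymbol{\delta}_{J=i}^{\top}=1-\pi_i(t)$.

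\textbf{Step 2: identify the in-service term with the peak age.} The recovery formula in Theorem~\ref{theorem_main} gives $\hat{\Delta}_i(t)=\mathbf{a}_i(t)\mathbf{1}_{J=i}/\pi_i(t)$. Hence the first term equals $\pi_i(t)\hat{\Delta}_i(t)$. The intuition is that, conditioned on a class-$i$ completion in $(t,t+h)$ as $h\to0^+$, the completion rate $\mu_i(t)$ is state-independent among states with $J=i$. The posterior on states is therefore simply $p$ restricted to $\{J=i\}$ and normalized. Also, $\Delta_i^-$ coincides with $\Delta_i(t)$ by right-continuity away from jumps. I will take this identity from Theorem~\ref{theorem_main} rather than re-derive it. If needed, it can be re-derived by writing $\Pr\{\mathcal{E}_i\}=\mu_i(t)\pi_i(t)h+o(h)$ and $\mathbb{E}[\Delta_i^-1\{\mathcal{E}_i\}]=\mu_i(t)\,\mathbf{a}_i(t)\boldsymbol{\delta}_{J=i}^{\top}h+o(h)$. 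This relies on $\mu_i(t)>0$, which the definition of $\hat{\Delta}_i$ already presupposes.

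\textbf{Step 3: algebra.} Combining the two steps gives
\begin{equation*}
\bar{\Delta}_i(t)=\pi_i(t)\hat{\Delta}_i(t)+(1-\pi_i(t))\Delta_i^c(t).
\end{equation*}
Subtracting this from $\hat{\Delta}_i(t)$ yields $\hat{\Delta}_i(t)-\bar{\Delta}_i(t)=(1-\pi_i(t))(\hat{\Delta}_i(t)-\Delta_i^c(t))$, which is the claimed identity.

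\textbf{Main obstacle.} The only delicate point is Step 2, namely the justification that the completion-conditioned expectation equals the expectation conditioned on $J(t)=i$. I would handle it by citing the recovery formula of Theorem~\ref{theorem_main}, adding the short $o(h)$ argument above as a remark. When $\pi_i(t)\in\{0,1\}$, $\Delta_i^c$ or $\hat{\Delta}_i$ may be undefined, but the identity is understood whenever both sides are defined.
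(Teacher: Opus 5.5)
Your proposal is correct and follows essentially the same route as the paper: both split $\bar{\Delta}_i(t)$ over $\{J=i\}$ and $\{J\neq i\}$ by total expectation, then identify the in-service conditional mean with $\hat{\Delta}_i(t)$ through the recovery formula of Theorem~\ref{theorem_main} (the paper names that conditional mean $\Delta^s_i(t)$, derives the gap with it first, and proves $\Delta^s_i(t)=\hat{\Delta}_i(t)$ afterwards). One small nit in your side remark: the definition of $\hat{\Delta}_i$ only presupposes $P\{\mathcal{E}_i(t,t+h)\}>0$ for all $h>0$, which does not force $\mu_i(t)>0$ (e.g., at the opening instant of a service window), but this does not affect your argument since you rely on the cited recovery formula.
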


\begin{proof}
See Appendix~\ref{appendix_c}.
\end{proof}

Proposition~\ref{gap_prop} shows that the sign of the gap is not intrinsic to ``peak'' versus ``average'' age. Since $1-\pi_i(t)\ge 0$, the difference $\hat{\Delta}_i(t)-\bar{\Delta}_i(t)$ is entirely determined by $\hat{\Delta}_i(t)-\Delta_i^{c}(t)$. Thus, despite common intuition, the (average) peak age is not necessarily an upper bound on the (average) age; the ordering depends on how often the class is in service and on the conditional age experienced in those service states.

\section{Numerical Solution}\label{mumericalsolution}

This section presents a numerical algorithm for computing the periodic steady state (PSS) of the system of linear ODEs established in Theorem \ref{theorem_main}, whose existence and uniqueness are guaranteed by Theorem~\ref{existence_thm_main}. Let $F: \mathbb{R}^{d} \to \mathbb{R}^{d}$ denote the one-period map that takes an initial condition $\mathbf{x}(0)$ to the solution $\mathbf{x}(T)$ obtained by integrating the system of ODEs in Theorem~\ref{theorem_main} over one period $T$, where
\begin{equation}
 \mathbf{x}(t) = [\mathbf{a}_1(t), \dots, \mathbf{a}_N(t), \mathbf{y}(t), \mathbf{z}_1(t), \dots, \mathbf{z}_N(t), \mathbf{p}(t)]^T \in \mathbb{R}^{d}
\end{equation}
  is the stacked state vector of dimension $d = (2N+2)|\mathcal{Q}|$. A PSS initial condition 
$x^{\star}(0)$ is a fixed point of $F$, i.e.,
\begin{equation}
    x^{\star}(0) = F\bigl(x^{\star}(0)\bigr),
\end{equation}
ensuring $x^{\star}(t + T) = x^{\star}(t)$ for all $t \geq 0$. To solve for $x^{\star}(0)$, 
we use the relaxed fixed-point iteration summarized in Appendix~\ref{app:pss_algorithm}. Given a tolerance $\varepsilon$, maximum iterations $K$, and a relaxation parameter $\alpha \in (0,1]$ used to dampen numerical oscillations, the iteration updates the initial 
condition until the relative residual satisfies the stopping criterion $\varepsilon$ or the iteration limit $K$ is reached. Furthermore, as numerical solvers may introduce small floating-point drift, we apply a renormalization step after each period integration. This involves rescaling the probability block to enforce the unit-sum constraint:
\begin{equation}
    p_s \leftarrow \frac{p_s}{\sum_{s'} p_{s'}}, \qquad \forall\, s,
    \label{eq:renorm}
\end{equation}
ensuring consistency across the state vector blocks before the relaxed update is performed. The convergence of the procedure follows directly from the mathematical construction used in the proof of Theorem~\ref{existence_thm_main}.

\begin{figure}[!t]
    \centering
    \includegraphics[width=0.82\textwidth,trim=7bp 6bp 8bp 7bp,clip]{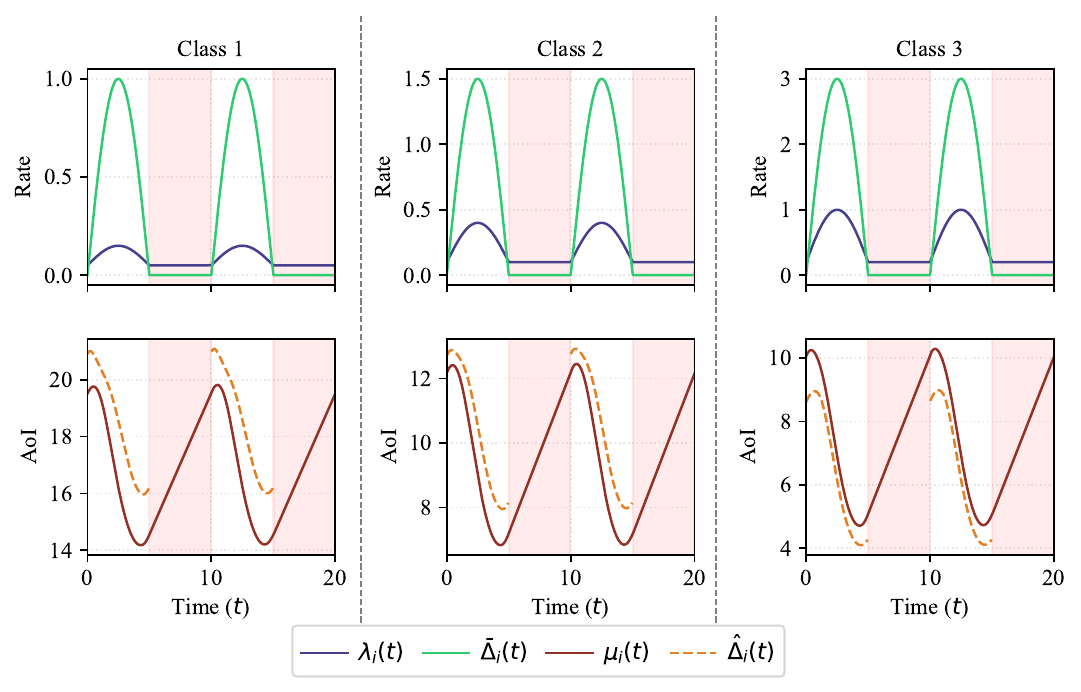}
    \caption{Rates and AoI metrics for three distinct priority classes. The top row illustrates the arrival rates ($\lambda_i$) and service rates ($\mu_i$), while the bottom row depicts the corresponding evolution of the mean AoI ($\bar{\Delta}_i$) and peak AoI ($\hat{\Delta}_i$). Red-shaded regions indicate periods of link outages.}
    \label{fig:rates_aoi_dynamics}
    \label{fig:aoi_dynamics}
    \end{figure}

\begin{corollary} \label{fixed_point}
The fixed-point iteration described in Appendix~\ref{app:pss_algorithm} converges to the unique periodic steady state $\mathbf{x}^\star(t)$ satisfying $\mathbf{x}^\star(t+T) = \mathbf{x}^\star(t)$ for all $t \geq 0$.
\end{corollary}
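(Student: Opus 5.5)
We prove the corollary by showing that the one-period map $F$ is an affine contraction in a suitable sense, and that the relaxed iteration inherits this contraction.

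\textbf{Structure of $F$.} Write the stacked system of Theorem~\ref{theorem_main}, together with $\dot{\mathbf{p}}=\mathbf{p}\mathbf{Q}$, as $\dot{\mathbf{x}}(t)=\mathbf{A}(t)\mathbf{x}(t)$. Here $\mathbf{A}(t)$ is $T$-periodic, and the unit-rate aging terms ($\mathbf{p}$, $\mathbf{p}\odot\boldsymbol{\delta}$) enter as couplings from the $\mathbf{p}$-block into the moment blocks. Let $\Phi(t,s)$ be the state-transition matrix. Then $F(\mathbf{x})=\Phi(T,0)\mathbf{x}$ is linear, and the renormalization step \eqref{eq:renorm} only fixes the $\mathbf{p}$-block on the simplex. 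Restricted to the affine set $\{\mathbf{p}\mathbf{1}=1\}$, $F$ acts as $F(\mathbf{x})=\mathbf{M}\mathbf{x}+\mathbf{c}$. In this form $\mathbf{M}$ is block-triangular, with the $\mathbf{p}$-dynamics on one block and the homogeneous moment dynamics (the ODEs without the $\mathbf{p}$ forcing terms) on the other.

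\textbf{Step 1: contraction.} The key input is Theorem~\ref{existence_thm_main}. Take the difference $\mathbf{e}(t)=\mathbf{x}(t)-\mathbf{x}^\star(t)$ of two solutions started on the probability simplex. It obeys the same linear periodic ODE, and the theorem gives $\|\mathbf{e}(kT)\|_2\le m e^{-\gamma kT}\|\mathbf{e}(0)\|_2$. Hence $\|\mathbf{M}^k\|_2\le m e^{-\gamma kT}$ for all $k$, so $\rho(\mathbf{M})\le e^{-\gamma T}<1$. Uniqueness of the fixed point $\mathbf{x}^\star(0)$ follows directly, since $\mathbf{I}-\mathbf{M}$ is invertible.

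\textbf{Step 2: relaxed iteration.} The iteration is $\mathbf{x}^{(k+1)}=(1-\alpha)\mathbf{x}^{(k)}+\alpha F(\mathbf{x}^{(k)})$. Its error satisfies $\mathbf{e}^{(k+1)}=\mathbf{G}_\alpha \mathbf{e}^{(k)}$ with $\mathbf{G}_\alpha=(1-\alpha)\mathbf{I}+\alpha\mathbf{M}$. The eigenvalues of $\mathbf{G}_\alpha$ are $(1-\alpha)+\alpha\nu$, where $\nu$ ranges over the eigenvalues of $\mathbf{M}$ and satisfies $|\nu|<1$. By the triangle inequality, $|(1-\alpha)+\alpha\nu|\le 1-\alpha+\alpha|\nu|<1$ for every $\alpha\in(0,1]$. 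Therefore $\rho(\mathbf{G}_\alpha)<1$, and $\mathbf{e}^{(k)}\to 0$ geometrically in any norm by Gelfand's formula. Because relaxation preserves convex combinations, the renormalization step leaves the iterates inside the affine set, so it does not change the argument.

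\textbf{Step 3: full trajectory.} Convergence of $\mathbf{x}^{(k)}$ to $\mathbf{x}^\star(0)$ transfers to the whole trajectory. Continuity of $\Phi(t,0)$ on $[0,T]$ gives convergence uniformly on $[0,T]$, and periodicity extends it to all $t$.

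\textbf{Main obstacle.} The delicate point is Step 1. Theorem~\ref{existence_thm_main} is stated for arbitrary initial conditions, but the $\mathbf{p}$-block of the unrestricted linear map has eigenvalue $1$ (probability mass is conserved), so $\Phi(T,0)$ itself is not contractive on all of $\mathbb{R}^d$. I would handle this by working on the invariant affine subspace $\{\mathbf{p}\mathbf{1}=1\}$ and noting that differences of two solutions lie in the subspace $\{\mathbf{p}\mathbf{1}=0\}$.

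On that subspace the $\mathbf{p}$-block contracts by ergodicity of the periodic chain: $\mathbf{Q}(t)$ always allows reaching the idle state, provided $\lambda_i$ and $\mu_i$ are positive on sets of positive measure. The moment blocks contract because their diagonal loss terms, $-\mathbf{M}^{(i)}$, $-\mathbf{M}^{(\mathrm{comp})}$ and the replacement term, are strictly dissipative over a period.

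These are exactly the facts used to prove Theorem~\ref{existence_thm_main}, so I would invoke that construction rather than re-derive it.
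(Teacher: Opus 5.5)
Your proof is correct and uses the same input as the paper, Theorem~\ref{existence_thm_main}, but the mechanism is different.

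\textbf{The paper's route.} It never passes through spectral radii. It reuses the Lyapunov-weighted norm $\|\cdot\|_{\mathbf{H}}$ built in the proof of Theorem~\ref{existence_thm_main}. Via the Floquet--Lyapunov factorization with $\mathbf{F}(T)=\mathbf{I}$, it gets $\boldsymbol{\Phi}(T)=e^{\mathbf{K}T}$, so the multiplier $m$ drops out. This gives the one-step contraction $\|F(\mathbf{x})-F(\mathbf{x}^\star(0))\|_{\mathbf{H}}\le e^{-\gamma T}\|\mathbf{x}-\mathbf{x}^\star(0)\|_{\mathbf{H}}$. The triangle inequality applied to the relaxed update then yields the factor $1-\alpha(1-e^{-\gamma T})$, and norm equivalence finishes.

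\textbf{Your route.} You keep only the Euclidean estimate with its constant $m$ and extract $\rho(\mathbf{M})\le e^{-\gamma T}$ via Gelfand's formula. You then push the spectrum through $\mathbf{G}_\alpha=(1-\alpha)\mathbf{I}+\alpha\mathbf{M}$, which gives the same asymptotic rate, $\rho(\mathbf{G}_\alpha)\le 1-\alpha(1-e^{-\gamma T})$.

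\textbf{What each buys.} The paper's argument gives a monotone, per-iteration error bound in a fixed norm, at the cost of the Floquet--Lyapunov and Lyapunov-equation machinery. Yours is more elementary and needs no special norm; the Euclidean error may grow transiently, but only by a bounded factor.

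Your version also makes explicit something the paper glosses over. Theorem~\ref{existence_thm_main} is proved for the reduced state with $\mathbf{p}_{\mathrm{red}}$, whereas the algorithm iterates on the full $\mathbf{p}$ with $d=(2N+2)|\mathcal{Q}|$. There $\boldsymbol{\Phi}(T)$ has eigenvalue $1$, and every multiple $c\,\mathbf{x}^\star(0)$ is a fixed point of the linear one-period map. Your restriction to the invariant affine set $\{\mathbf{p}\mathbf{1}=1\}$, with errors in $\{\mathbf{p}\mathbf{1}=0\}$, is what makes the corollary apply to the algorithm as written. The same holds for your remark that renormalization is the identity there in exact arithmetic.

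A small caution: your ``dissipative over a period'' is not ``exactly'' what the paper uses. Lemma~\ref{prop_of_A} asserts strictly negative column sums of every diagonal block at every $t$. That pointwise claim fails for states with $J\neq i$ in the $\mathbf{a}_i$-block and during outages. Your over-a-period formulation is the correct version. Since you invoke Theorem~\ref{existence_thm_main} as a black box, this does not affect your argument.
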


\begin{proof}
See Appendix~\ref{appendix_d}.
\end{proof}

    \section{Simulation Results}


        \subsection{Parameter Settings}

        We consider a time-varying multi-priority queueing system with a cycle period of $T = 10.0$ and an active service window of $T_{\text{pass}} = 5.0$, corresponding to a $50\%$ duty cycle. The system serves $N=3$ priority classes. The parameter set is chosen to stress priority coupling: lower-priority classes have larger arrival bursts, so their freshness is governed by the residual capacity left after higher-priority service. The instantaneous service and arrival rates are modeled using a windowed sinusoid. For each class $i$, the service rate is defined as:
    \begin{equation}
    \mu_i(t) = \mu_{i}^{\text{peak}} \max\left(0, \cos\left(\frac{\pi}{T_{\text{pass}}} \left(t - \frac{T_{\text{pass}}}{2}\right)\right)\right),
    \end{equation}
    when the link is available, and zero otherwise. The arrival process for each class follows a similar profile with a constant baseline:
    \begin{equation}
    \lambda_i(t) = \lambda_{i}^{\text{base}} + \lambda_{i}^{\text{peak}} \max\left(0, \cos\left(\frac{\pi}{T_{\text{pass}}} \left(t - \frac{T_{\text{pass}}}{2}\right)\right)\right),
    \end{equation}

    The baseline and peak rate parameters for each priority class are provided in Table \ref{tab:parameters}. To validate the analytical ODE framework, we compare the periodic steady-state metrics with progressive Monte Carlo simulations using up to $10{,}000$ sample paths averaged over $100$ trials. Fig.~\ref{fig:validation} shows that both mean-AoI and peak-AoI MAEs decrease rapidly as the number of instances increases, confirming that the ODE captures both continuous age growth and completion-time resets. Fig.~\ref{fig:rates_aoi_dynamics} shows the resulting long-term age trajectories, while Fig.~\ref{fig:probs} reports the corresponding idle and service-state probabilities.

    \begin{table}[!t]
    \centering
    \caption{Simulation Parameters for Priority Classes}
    \label{tab:parameters}
    \begin{tabular}{@{}lcccc@{}}
    \toprule
    \textbf{Parameter} & \textbf{Symbol} & \textbf{Class 1} & \textbf{Class 2} & \textbf{Class 3} \\ \midrule
    Peak Service Rate  & $\mu^{\text{peak}}$ & 1.0 & 1.5 & 3.0 \\
    Base Arrival Rate  & $\lambda^{\text{base}}$ & 0.05 & 0.10 & 0.20 \\
    Peak Arrival Rate  & $\lambda^{\text{peak}}$ & 0.10 & 0.30 & 0.80 \\ \bottomrule
    \end{tabular}
    \end{table}

    \begin{figure}[!t]
    \centering
    \begin{minipage}[t]{0.48\textwidth}
    \centering
    \includegraphics[width=\linewidth,trim=7bp 6bp 8bp 6bp,clip]{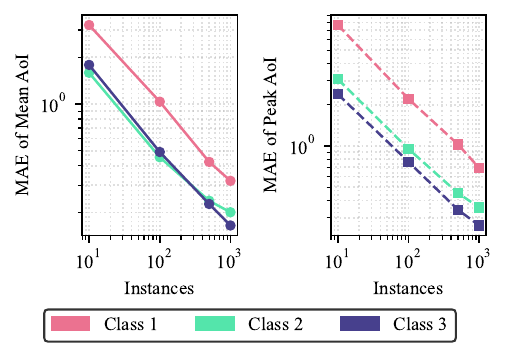}
    \caption{Validation results. The mean-AoI and peak-AoI mean absolute errors (MAEs) are shown as functions of the number of Monte Carlo instances.}
    \label{fig:validation}
    \end{minipage}\hfill
    \begin{minipage}[t]{0.48\textwidth}
    \centering
    \includegraphics[width=0.74\linewidth,trim=7bp 6bp 8bp 6bp,clip]{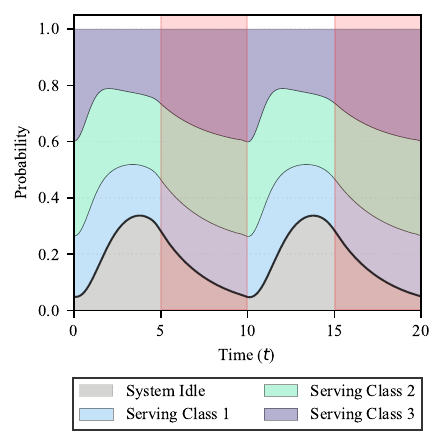}
    \caption{The evolution of the system idle and service state probabilities. Red-shaded regions indicate link outages.}
    \label{fig:probs}
    \end{minipage}
    \end{figure}


 	
    \subsection{Discussion}
    The trajectories reveal two related effects. First, higher-priority classes track the active service windows more closely, whereas lower-priority packets can wait through outages and high-priority busy periods. Second, the average Peak Age of Information (PAoI) may fall below the time-average AoI. Average AoI captures continuous aging during starvation and outages, whereas conditional PAoI is sampled only at completion epochs and can miss extreme age growth. As shown in Fig.~\ref{fig:rates_aoi_dynamics}, the lowest-priority class exhibits this inversion under intermittent connectivity, highlighting the need to evaluate time-varying freshness trajectories rather than relying only on aggregate metrics.



    Proposition~\ref{gap_prop} formalizes this separation through the service probability $\pi_i(t)$. Fig.~\ref{fig:probs} shows that the server spends substantial time either idle or serving higher-priority classes, which lowers the service probability of the lowest-priority class during parts of the cycle. Since $\pi_i(t)<1$, the factor $(1-\pi_i(t))$ in \eqref{eq:gap} is nonzero; the two age metrics differ whenever the conditional age during service differs from that outside service.




\section{Conclusion}

This paper develops an exact ODE-based framework for AoI in multi-priority networks with intermittent connectivity. The resulting fixed-point method computes the unique periodic solution and reveals how residual service can make mean AoI exceed mean PAoI for low-priority traffic. This suggests that priority design in intermittent systems should account for the full time-varying freshness trajectory, not only completion-sampled metrics.


\appendices
\allowdisplaybreaks[4]
\setlength{\jot}{2pt}

\section{Proof of Theorem \ref{theorem_main}}\label{appendix_a}

The proof comprises five subsections. Subsections 1–3 derive ODEs for the AoI moments, the in-service packet age moments, and the waiting packet age moments, respectively. Subsections 4–5 then establish identities for the mean AoI and peak AoI in terms of the AoI moments.

\subsection{AoI Moments}

Let $\mathcal{E}(t,t+h)$ denote the event that a completion occurs in $(t,t+h)$.  We partition the state space based on transition likelihood under conditioning on $\mathcal{E}(t,t+h)$. We define the set of states from which a completion is possible in the sense that:
\begin{align}
    \mathcal{A}_s &= \bigl\{ m \in \mathcal{Q} : \lim_{h \to 0} \frac{P\{q(t)=m,\, q(t+h)=s \mid \mathcal{E}(t,t+h)\}}{h} > 0 \bigr\},
\end{align}
for a given $s \in \mathcal{Q}$ i.e., $\mathcal{A}_s$ contains the states from which a completion can trigger a transition to state $s$ with a positive rate. That is, the conditional probability of a transition occurring to state $s$ given the event $\mathcal{E}(t,t+h)$ is proportional to $h$. We expand $a_{i,s}(t+h)$ by inserting indicator functions for all states at time $t$. Partitioning the state space into $\mathcal{A}_s$ and $\mathcal{A}_s^c$ lets us handle separately the cases where a service completion does and does not occur.
\begin{subequations}
\refstepcounter{equation}
\begin{align} \label{iterated}
a_{i,\sigma(s)}&(t+h) \overset{\mathclap{(\alph{equation})}}{=} \mathbb{E}\big(\Delta_i(t+h)\, 1\{q(t+h) = s\}\big) \tag{\theparentequation} \\ \noalign{\refstepcounter{equation}}
&\overset{\mathclap{(\alph{equation})}}{=} \mathbb{E}\big(\Delta_i(t+h)\, 1\{q(t+h) = s\}\, \sum_{m} 1\{q(t) = m\}\big) \notag  \\ \noalign{\refstepcounter{equation}}
&\overset{\mathclap{(\alph{equation})}}{=} \sum_{m \in \mathcal{A}_s} \mathbb{E}\big(\Delta_i(t+h)\, 1\{q(t) = m,\ q(t+h) = s\}\big) + \sum_{m \in \mathcal{A}_s^c} \mathbb{E}\big(\Delta_i(t+h)\, 1\{q(t) = m,\ q(t+h) = s\}\big) \notag \\ \noalign{\refstepcounter{equation}}
&\overset{\mathclap{(\alph{equation})}}{=} \sum_{m \in \mathcal{A}_s} \mathbb{E}(Y(t+h) 1\{q(t)=m, q(t+h)=s\}) + \sum_{m \in \mathcal{A}_s^c} \mathbb{E}(\Delta_i(t+h) 1\{q(t)=m, q(t+h)=s\}) + o(h)
\notag \\ \noalign{\refstepcounter{equation}}
&\overset{\mathclap{(\alph{equation})}}{=} \sum_{m \in \mathcal{A}_s} \mathbb{E}[(Y(t)+h) 1\{q(t)=m, q(t+h)=s\}] + \sum_{m \in \mathcal{A}_s^c} \mathbb{E}[(\Delta_i(t)+h) 1\{q(t)=m, q(t+h)=s\}] + o(h) \notag \\ \noalign{\refstepcounter{equation}}
&\overset{\mathclap{(\alph{equation})}}{=} \sum_{m \in \mathcal{A}_s} \mathbb{E}(Y(t) 1\{q(t)=m, q(t+h)=s\}) + \sum_{m \in \mathcal{A}_s^c} \mathbb{E}(\Delta_i(t) 1\{q(t)=m, q(t+h)=s\}) \notag\\
&\quad + \sum_{m} h \cdot P\{q(t)=m, q(t+h)=s\} + o(h) \notag \\ \noalign{\refstepcounter{equation}}
&\overset{\mathclap{(\alph{equation})}}{=} \sum_{m \in \mathcal{A}_s} \mathbb{E}\Big(\mathbb{E}(Y(t) 1\{q(t)=m\} \cdot 1\{q(t+h)=s\} \mid q(t))\Big) \notag \\ 
&+ \sum_{m \in \mathcal{A}_s^c} \mathbb{E}\Big( \mathbb{E}(\Delta_i(t) 1\{q(t)=m\} \cdot 1\{q(t+h)=s\})|q(t)\Big) + h \cdot P\{q(t+h)=s\} + o(h).  \notag
\end{align}
\end{subequations}
Equation (\ref{iterated}$b$) inserts the identity $\sum_m 1\{q(t)=m\} = 1$.  Equation (\ref{iterated}$d$) captures two cases: for $m \in S$, a completion occurs, so $\Delta_i(t+h) = Y(t+h)$; for $m \in S^c$, no completion occurs and the AoI simply ages. Since the AoI grows deterministically at rate~$1$ between events, it follows that $Y(t+h) = Y(t) + h$. Moreover, we have exactly $\Delta_i(t+h) = \Delta_i(t) + h$ on any interval containing no service completion. Equation (\ref{iterated}$f$) expands the $(+h)$ terms and combines them into a single sum $h \sum_{m} P(q(t)=m,\, q(t+h)=s)$. Equation (\ref{iterated}$g$) then applies iterated expectations to the $m \in \mathcal{A}_s$ and $m \in \mathcal{A}_s^{c}$ terms with inner expectation conditioned on $q(t)$. Since the future queue state $1\{q(t+h)\}$ depends only on the current queue state $q(t)$, and not on the elapsed service time $Y(t)$ or the current AoI $\Delta_i(t)$, it is conditionally independent of both $Y(t)1\{q(t)=m\}$ and $\Delta_i(t)1\{q(t)=m\}$ given $q(t)$. This conditional independence is applied in (\ref{eq:vector_form}$a$).
\begin{subequations}
\refstepcounter{equation}
\begin{align} \label{eq:vector_form}
a_{i,\sigma(s)}&(t+h) \overset{\mathclap{(\alph{equation})}}{=} \sum_{m \in \mathcal{A}_s} \mathbb{E}\bigl(Y(t) 1\{q(t)=m\}\bigr) \mathbb{E}\bigl(1\{q(t+h)=s\} \mid q(t)=m)\bigr) \tag{\theparentequation} \\ 
&+ \sum_{m \in \mathcal{A}_s^c} \mathbb{E}\bigl(\Delta_i(t) 1\{q(t)=m\}\bigr) \mathbb{E}\bigl(1\{q(t+h)=s\} \mid q(t)=m)\bigr) + h \cdot P\{q(t+h)=s\} + o(h)  \notag \\ \noalign{\refstepcounter{equation}}
&\overset{\mathclap{(\alph{equation})}}{=} \sum_{m \in \mathcal{A}_s} \mathbb{E}\bigl(Y(t) 1\{q(t)=m\}\bigr) P\{q(t+h)=s \mid q(t)=m\} \notag\\
&+ \sum_{m \in \mathcal{A}_s^c} \mathbb{E}\bigl(\Delta_i(t) 1\{q(t)=m\}\bigr) P\{q(t+h)=s \mid q(t)=m\} + h \cdot P\{q(t+h)=s\} + o(h) \notag \\ \noalign{\refstepcounter{equation}}
&\overset{\mathclap{(\alph{equation})}}{=} \sum_{m \in \mathcal{A}_s} y_{\sigma(m)}(t) P\{q(t+h)=s \mid q(t)=m\} + \sum_{m} a_{i,\sigma(m)}(t) P\{q(t+h)=s \mid q(t)=m\} \notag\\
& - \sum_{m \in \mathcal{A}_s} a_{i,\sigma(m)}(t) P\{q(t+h)=s \mid q(t)=m\} + h \cdot P\{q(t+h)=s\} + o(h) \notag \\ \noalign{\refstepcounter{equation}}
&\overset{\mathclap{(\alph{equation})}}{=} \sum_{m \in \mathcal{A}_s} (y_{\sigma(m)}(t) - a_{i,\sigma(m)}(t)) P\{q(t+h)=s \mid q(t)=m\} \notag \\
&+ \sum_{m} a_{i,\sigma(m)}(t) \Big[ P\{q(t+h)=s \mid q(t)=m\} - 1\{m=s\} \Big] \notag \\ 
&+ \sum_{m} a_{i,\sigma(m)}(t) 1\{m=s\} + h \cdot P\{q(t+h)=s\} + o(h) \notag \\ \noalign{\refstepcounter{equation}}
&\overset{\mathclap{(\alph{equation})}}{=} \sum_{m \in \mathcal{A}_s} (\mathbf{y}(t) - \mathbf{a_i}(t))_{\sigma(m)} Q_{\sigma(m),\sigma(s)}(t) h + \sum_{m} (\mathbf{a}_i(t))_{\sigma(m)} Q_{\sigma(m),\sigma(s)}(t) h \notag \\
&+ (\mathbf{a}_i(t))_{\sigma(s)} + h \cdot (\mathbf{p}(t+h))_{\sigma(s)} + o(h). \notag
\end{align}
\end{subequations}
In (\ref{eq:vector_form}$c$), the sum over $m \in \mathcal{A}_s^c$ is rewritten as the full sum over all $m$ minus the sum restricted to $m \in \mathcal{A}_s$. We express transitions using the the generator matrix $\mathbf{Q}(t)$. For $m \in \mathcal{A}_s$, note that $m \neq s$, i.e. a completion changes the state. Equation (\ref{eq:vector_form}$e$) collects 
terms into vector notation: $(\mathbf{y}(t))_{\sigma(m)} = y_{\sigma(m)}(t)$ and $(\mathbf{a}_i(t))_{\sigma(m)} = a_{i,\sigma(m)}(t)$. It also utilizes the following fact: $P\{q(t+h)=s \mid q(t)=m\} - 1\{m=s\} = Q_{\sigma(m),\sigma(s)}(t)h + o(h)$. 
Collecting all components into the vector $\mathbf{a}_i(t)$:
\begin{align}
\mathbf{a}_i(t+h) &= (\mathbf{y}(t) - \mathbf{a}_i(t)) \mathbf{M}_i(t) h + \mathbf{a}_i(t) \mathbf{Q}(t) h + \mathbf{a}_i(t) + \mathbf{p}(t+h) h + o(h).\label{eq:vector_ode}
\end{align}
Here recall that $\mathbf{M}_i(t)$ is the subgenerator associated with the transitions triggered by class-$i$ completions, and $\mathbf{p}(t+h)$ is the state probability vector at $t+h$. Subtracting $\mathbf{a}_i(t)$ from both sides, dividing by $h$, and letting $h \to 0$:
\begin{align}
\dot{\mathbf{a}}_i(t) &= \lim_{h \to 0} \frac{\mathbf{a}_i(t+h) - \mathbf{a}_i(t)}{h} \notag\\
&= \lim_{h \to 0} \frac{1}{h} \Bigl[ (\mathbf{y}(t) - \mathbf{a}_i(t)) \mathbf{M}_i(t) h + \mathbf{a}_i(t) \mathbf{Q}(t) h + \mathbf{a}_i(t) + \mathbf{p}(t+h) h + o(h) - \mathbf{a}_i(t) \Bigr] \notag \\
&= \mathbf{p}(t) + \mathbf{a}_i(t) \mathbf{Q}(t) +  (\mathbf{y}(t) - \mathbf{a}_i(t)) \mathbf{M}_i(t). \notag
\end{align}
\begin{equation*}
\therefore \quad \boxed{\dot{\mathbf{a}}_i(t) = \mathbf{p}(t) + \mathbf{a}_i(t) \mathbf{Q}(t) + (\mathbf{y}(t) - \mathbf{a}_i(t)) \mathbf{M}_i(t)} 
\end{equation*}

\subsection{In-Service Packet Age Moments}

We define $D_i$ as the set of states from which a transition to a given state $s$ occurs with positive rate when a packet from class $i$ enters service. Specifically, we say that $m \in D_i$ if
\begin{equation}
\lim_{h \rightarrow 0} \frac {P(q(t)=m, q(t+h)=s \mid \mathcal{E}(t, t+h), B_i(q(t))=1, B_i(q(t+h))=0)} {h} > 0,
\end{equation}
for a given state $s$ under consideration where $\mathcal{E}(t,t+h)$ denotes the event that a completion occurs in $(t,t+h)$. 
\begin{subequations}
\refstepcounter{equation}
\begin{align} \label{eq:vanishing_term}
y_{\sigma(s)}&(t+h) \overset{\mathclap{(\alph{equation})}}{=} \mathbb{E}(Y(t+h)1\{q(t+h)=s\}) \tag{\theparentequation} \\ \noalign{\refstepcounter{equation}}
&\overset{\mathclap{(\alph{equation})}}{=} \mathbb{E}\Big(Y(t+h)1\{q(t+h)=s\}\sum_{m}1\{q(t)=m\}\Big) \notag \\ \noalign{\refstepcounter{equation}}
&\overset{\mathclap{(\alph{equation})}}{=} \sum_{m \in \mathcal{A}_s^c}\mathbb{E}(Y(t+h)1\{q(t)=m, q(t+h)=s\}) + \sum_{j=1}^{N}\sum_{m \in D_j}\mathbb{E}(Y(t+h)1\{q(t)=m, q(t+h)=s\}) \notag \\ \noalign{\refstepcounter{equation}}
&\overset{\mathclap{(\alph{equation})}}{=} \sum_{m \in \mathcal{A}_s^c}\mathbb{E}\Big((Y(t)+h \cdot 1\{J(s) \neq 0\}) \cdot 1\{q(t)=m, q(t+h)=s\}\Big) \notag \\
&+ \sum_{j=1}^{N}\sum_{m \in D_j}E(Z_j(t+h) 1\{q(t)=m, q(t+h)=s\}) + o(h) \notag \\ \noalign{\refstepcounter{equation}}
&\overset{\mathclap{(\alph{equation})}}{=} \sum_{m \in \mathcal{A}_s^c}\mathbb{E}\Big( \mathbb{E}\big( Y(t) 1\{q(t)=m\} 1\{q(t+h)=s\} \mid q(t) \big) \Big) \notag \\
&+ \sum_{m \in \mathcal{A}_s^c}h \cdot 1\{J(s) \neq 0\} \, \mathbb{E}(1\{q(t)=m, q(t+h)=s\}) \notag \\
&+ \sum_{m \in D_{j_0}}\mathbb{E}\Big( \mathbb{E}\big( Z_{j_0}(t+h) 1\{q(t)=m\} 1\{q(t+h)=s\} \mid q(t) \big) \Big) + o(h) \notag \\ \noalign{\refstepcounter{equation}}
&\overset{\mathclap{(\alph{equation})}}{=} \sum_{m \in \mathcal{A}_s^c} \mathbb{E}\Big(Y(t) 1\{q(t)=m\} \, P(q(t+h)=s \mid q(t))\Big) + \sum_{m} h \cdot 1\{J(s) \neq 0\} \, P\{q(t)=m, q(t+h)=s\} \notag \\ 
&+ \sum_{m \in D_{j_0}} \mathbb{E}\Big(Z_{j_0}(t+h) 1\{q(t)=m\} \, P\{q(t+h)=s \mid q(t)\}\Big) + o(h)
\notag \\ \noalign{\refstepcounter{equation}}
&\overset{\mathclap{(\alph{equation})}}{=} \sum_{m \in \mathcal{A}_s^c} y_{\sigma(m)}(t) \, P(q(t+h)=s \mid q(t)=m) + h \cdot 1\{J(s) \neq 0\} \, P(q(t+h)=s) \notag \\ 
&\quad + \sum_{m \in D_{j_0}} E\Big((Z_{j_0}(t)+h) 1\{q(t)=m\}\Big) \, P(q(t+h)=s \mid q(t)=m) + o(h) \notag \\ \noalign{\refstepcounter{equation}}
&\overset{\mathclap{(\alph{equation})}}{=} \sum_{m} y_{\sigma(m)}(t) \Big[ P\{q(t+h)=s \mid q(t)=m\} - 1\{m=s\} \Big] + \sum_{m} y_{\sigma(m)}(t) 1\{m=s\} \notag \\
&- \sum_{m \in \mathcal{A}_s} y_{\sigma(m)}(t) P\{q(t+h)=s \mid q(t)=m\} + h \cdot 1\{J(s) \neq 0\} \, P\{q(t+h)=s\} \notag \\ 
&+ \sum_{m \in D_{j_0}} z_{j_0,{\sigma(m)}}(t) \, P\{q(t+h)=s \mid q(t)=m\} \notag \\ 
&+ \underbrace{\sum_{m \in D_{j_0}} h \cdot P\{q(t)=m\} \, P\{q(t+h)=s \mid q(t)=m\}}_{o(h)} + o(h). \notag
\end{align}
\end{subequations}
We separate the sum into two parts in (\ref{eq:vanishing_term}$c$): states where the age accumulates smoothly over time, and states where a job completion causes an instantaneous jump in the age. Moreover, for a given target state $s$, at most one class $j_0$ can have $D_{j_0} \neq \emptyset$ (since a transition to state $s$ can only occur through at most one specific class entering service). Note that $J(s) = 0$ if and only if the system is in the idle state. The $o(h)$ terms arise from the probability of multiple events in $[t, t+h]$, which is negligible as $h \to 0$. Since $Z_{j_0}(t+h) 1\{q(t)=m\}$ and $1\{q(t+h)=s\}$ are conditionally independent given $q(t)$, the expectation factorizes in (\ref{eq:vanishing_term}$f$). The remaining term $E(1\{q(t+h)=s\} \mid q(t))$ is exactly the transition probability $P(q(t+h)=s \mid q(t))$. Since $Z_{j_0}$ increases at rate $1$ while in states of class $j_0$, we have $Z_{j_0}(t+h) = Z_{j_0}(t) + h$ in (\ref{eq:vanishing_term}$g$).
Note that the term with $h \cdot P\{q(t+h)=s \mid q(t)=m\}$ is $O(h^2)$, hence absorbed into $o(h)$ in \eqref{eq:vanishing_term}. Recall that $P\{q(t+h)=s \mid q(t)=m\} - 1\{m=s\} = Q_{m,s}(t)h + o(h)$. 
$\mathbf{M}^{(\mathrm{comp})}(t)$ denotes the matrix of transition rates due to completions only. Then:
\begin{equation}
\sum_{m} y_{\sigma(m)}(t) \Big[ P\{q(t+h)=s \mid q(t)=m\} - 1\{m=s\} \Big] = (\mathbf{y}(t) \mathbf{Q}(t))_{\sigma(s)} h + o(h).
\end{equation}
For the terms involving $z_{j_0}$:
\begin{align}
&\sum_{m \in D_{j_0}} z_{j_0,{\sigma(m)}}(t) \, P(q(t+h)=s \mid q(t)=m) \notag \\
& \quad = \sum_{m} (\mathbf{z}_{j_0}(t) \odot \boldsymbol{\delta}_{\text{next}=j_0})_{\sigma(m)} (\mathbf{M}^{(\mathrm{comp})}(t))_{{\sigma(m)},{\sigma(s)}} h + o(h) \notag \\
& \quad = \Big( (\mathbf{z}_{j_0}(t) \odot \boldsymbol{\delta}_{\text{next}=j_0}) (\mathbf{M}^{(\mathrm{comp})}(t)) \Big)_{\sigma(s)} h + o(h),
\label{eq:z_next}
\end{align}
where $\odot$ denotes element-wise product, and $\boldsymbol{\delta}_{\text{next}=j_0}$ is an indicator vector for class $j_0$. Notice that we can add terms for $j \neq j_0$ with zero contribution since $D_j = \emptyset$ for those classes. In other words, since $D_j = \emptyset$ for $j \neq j_0$, there is no state $m$ from which a class-$j$ completion leads to the state $s$, meaning the corresponding entries of column $\sigma(s)$ of $M^{\text{comp}}(t)$ are zero for all rows selected by $\boldsymbol{\delta}_{\text{next}=j}$. Therefore the product $(\mathbf{z}_j(t) \odot \boldsymbol{\delta}_{\text{next}=j}) M^{\text{comp}}(t)$ vanishes at index $\sigma(s)$ when $j \neq j_0$:
\begin{equation*}
\left( \Big(\sum_{\substack{j=1 \\ j \neq j_0}}^N \mathbf{z}_j(t) \odot \boldsymbol{\delta}_{\text{next}=j} \Big) \mathbf{M}^{(\mathrm{comp})}(t) \right)_{\sigma(s)} = 0.
\end{equation*}
Then, we sum over \emph{all} classes $j$ and define:
\begin{equation*}
\mathbf{z}^{\text{next}}(t) = \sum_{j=1}^N \mathbf{z}_j(t) \odot \boldsymbol{\delta}_{\text{next}=j}.
\end{equation*}
We express \eqref{eq:z_next} as follows:
\begin{equation*}
    \Big( (\mathbf{z}_{j_0}(t) \odot \boldsymbol{\delta}_{\text{next}=j_0}) 
    \mathbf{M}^{(\mathrm{comp})}(t) \Big)_{\sigma(s)} h + o(h) 
    = \Big( \mathbf{z}^{\text{next}}(t)\, \mathbf{M}^{(\mathrm{comp})}(t) \Big)_{\sigma(s)} h + o(h).
\end{equation*}
We now write the expression in vector-matrix form.
\begin{align*}
y_s(t+h) 
&= (\mathbf{y}(t) \mathbf{Q}(t))_{\sigma(s)} h + y_{\sigma(s)}(t) - (\mathbf{y}(t) \mathbf{M}^{(\mathrm{comp})}(t))_{\sigma(s)} h \notag \\
&\quad + h \cdot 1\{J(s) \neq 0\} \, (\mathbf{p}(t+h))_{\sigma(s)} + (\mathbf{z}^{\text{next}}(t) \mathbf{M}^{(\mathrm{comp})}(t))_{\sigma(s)} h + o(h).
\end{align*}
Recall that  $\boldsymbol{\delta}_{J\neq 0}$ is the indicator vector with $(\boldsymbol{\delta}_{J\neq 0})_{\sigma(s)} = 1\{J(s) \neq 0\}$ (1 if state $s$ is not idle). Then:
\begin{align*}
y_{\sigma(s)}(t+h) 
&= (\mathbf{y}(t) \mathbf{Q}(t))_{\sigma(s)} h + y_{\sigma(s)}(t) - (\mathbf{y}(t) \mathbf{M}_{\text{comp}}(t))_{\sigma(s)} h \notag \\
&\quad + h (\boldsymbol{\delta}_{J\neq 0} \odot \mathbf{p}(t+h))_{\sigma(s)} + (\mathbf{z}^{\text{next}}(t) \mathbf{M}^{(\mathrm{comp})}(t))_{\sigma(s)} h + o(h).
\end{align*}
Stacking over all states $s$, we have:
\begin{align*}
\mathbf{y}(t+h) &= \mathbf{y}(t) \mathbf{Q}(t) h + \mathbf{y}(t) - \mathbf{y}(t) \mathbf{M}^{(\mathrm{comp})}(t) h \notag \\
&\quad + \mathbf{p}(t+h) \odot \boldsymbol{\delta}_{J\neq 0} \cdot h + \mathbf{z}^{\text{next}}(t) \mathbf{M}_{\text{comp}}(t) h + o(h).
\end{align*}
Subtract $\mathbf{y}(t)$ from both sides, divide by $h$, and let $h \to 0$:
\begin{align*}
\dot{\mathbf{y}}(t) 
&= \lim_{h \to 0} \frac{\mathbf{y}(t+h) - \mathbf{y}(t)}{h} \notag \\[1ex] 
&= \lim_{h \to 0} \frac{1}{h} \Big[ \mathbf{y}(t)\mathbf{Q}(t)h - \mathbf{y}(t)\mathbf{M}_{\text{comp}}(t)h + \mathbf{p}(t+h) \odot \boldsymbol{\delta}_{J\neq 0} \cdot h + \mathbf{z}^{\text{next}}(t)\mathbf{M}^{(\mathrm{comp})}(t)h + o(h) \Big] \notag \\[1ex] 
&= \mathbf{p}(t) \odot \boldsymbol{\delta}_{J\neq 0} + \mathbf{y}(t)\mathbf{Q}(t) + (\mathbf{z}^{\text{next}}(t) - \mathbf{y}(t)) \mathbf{M}^{(\mathrm{comp})}(t).
\end{align*}
\begin{equation*}
\therefore \quad \boxed{\dot{\mathbf{y}}(t) = \mathbf{p}(t) \odot \boldsymbol{\delta}_{J\neq 0} + \mathbf{y}(t)\mathbf{Q}(t) + (\mathbf{z}^{\text{next}}(t) - \mathbf{y}(t)) \mathbf{M}^{(\mathrm{comp})}(t)} 
\end{equation*}

\subsection{Waiting Packet Age Moments}

We use the following notation throughout: $N_i(t, t+h)$ denotes the number of arrivals to class $i$ in the interval $(t, t+h)$. Recall that $B_i(s)$ is an indicator function equal to 1 if class $i$ is active in state $s$ and 0 otherwise. The derivation is decomposed into two cases according to the value of $B_i(s)$.
\subsubsection{Case 1: $B_i(s)=1$} 
 We begin by conditioning on the number of arrivals in a small interval $(t, t+h)$.
 \begin{subequations}
\refstepcounter{equation}
\begin{align} \label{eq:conditional_indep}
z_{i,s}&(t+h) \overset{\mathclap{(\alph{equation})}}{=} \mathbb{E} \Big(Z_i(t+h) 1\{q(t+h)=s\} \Big) \tag{\theparentequation} \\ \noalign{\refstepcounter{equation}} 
&\overset{\mathclap{(\alph{equation})}}{=} \mathbb{E} \Big(Z_i(t+h) 1\{q(t+h)=s, N_i(t, t+h)=0\} \Big) \notag \\ 
&+ \sum_{n=1}^{\infty} \underbrace{\mathbb{E} \Big(Z_i(t+h) 1\{q(t+h)=s, N_i(t, t+h)=n\} \mid N_i=n \Big)}_{\mathcal{O}(h)} \cdot \underbrace{P\{N_i(t, t+h)=n\}}_{\mathcal{O}(h)} + o(h) \notag \\ \noalign{\refstepcounter{equation}}
&\overset{\mathclap{(\alph{equation})}}{=} \mathbb{E} \Big(Z_i(t+h) 1\{q(t+h)=s, N_i(t, t+h)=0\} \Big) + o(h).  \notag \\ \noalign{\refstepcounter{equation}}
&\overset{\mathclap{(\alph{equation})}}{=} \mathbb{E} \Big( (Z_i(t)+h) 1\{q(t+h)=s, N_i(t, t+h)=0\} \Big) + o(h) \notag \\  \noalign{\refstepcounter{equation}}
&\overset{\mathclap{(\alph{equation})}}{=} \mathbb{E} \Big(Z_i(t) 1\{q(t+h)=s, N_i(t, t+h)=0\} \Big) + h \cdot P\{q(t+h)=s, N_i(t, t+h)=0\} + o(h) \notag \\ \noalign{\refstepcounter{equation}}
&\overset{\mathclap{(\alph{equation})}}{=} \sum_m \mathbb{E}\Big(Z_i(t) 1\{q(t+h)=s, N_i(t, t+h)=0\} \mid q(t)=m\Big) \cdot P\{q(t)=m\} \notag \\
&+ h \sum_n P\{q(t+h)=s, N_i(t, t+h)=0 \mid q(t)=n\} \cdot P\{q(t)=n\} + o(h) \notag \\ \noalign{\refstepcounter{equation}}
&\overset{\mathclap{(\alph{equation})}}{=} \sum_m\mathbb{E}\Big(Z_i(t) \mid q(t)=m\Big) P\{q(t)=m\} \cdot \mathbb{E}\Big(1\{q(t+h)=s, N_i(t, t+h)=0\} \mid q(t)=m\Big) \notag \\ 
&+ h \sum_n P\{q(t+h)=s, N_i(t, t+h)=0 \mid q(t)=n\} \cdot P\{q(t)=n\} + o(h). \notag
\end{align}
\end{subequations}
The key observation in (\ref{eq:conditional_indep}$b$) is that the conditional expectation is $\mathcal{O}(h)$ (since $Z_i$ can increase by at most $h$ in the interval) and the probability of one or more arrivals is also $\mathcal{O}(h)$. Their product is therefore $\mathcal{O}(h^2)$, which is absorbed into $o(h)$. This leaves only the no-arrival case in (\ref{eq:conditional_indep}$c$). When no arrival occurs to class $i$ in $(t, t+h)$, since no service completion occurs, $Z_i(t+h) = Z_i(t) + h$ simply increases by $h$. We apply the law of total expectation by conditioning on the state at time $t$ in (\ref{eq:conditional_indep}$f$). Since $Z_i(t)$ depends only on the history up to time $t$, and the evolution after $t$ is conditionally independent given $q(t)=m$, we can factor the conditional expectation in (\ref{eq:conditional_indep}$g$). Recognizing that $z_{i,\sigma(m)}(t) = \mathbb{E}[Z_i(t) \mid q(t)=m] P\{q(t)=m\}$ by definition, we obtain:
\begin{subequations}
\refstepcounter{equation}
\begin{align} \label{eq:B_i_0_vanish}
z_{i,\sigma(s)}&(t+h) \overset{\mathclap{(\alph{equation})}}{=} \sum_m z_{i,\sigma(m)}(t) P\{q(t+h)=s, N_i(t, t+h)=0 \mid q(t)=m\} \tag{\theparentequation} \\  
& + h \sum_n P\{q(t+h)=s, N_i(t, t+h)=0 \mid  q(t)=n\} \cdot P\{q(t)=n\} + o(h). \notag \\ \noalign{\refstepcounter{equation}}
&\overset{\mathclap{(\alph{equation})}}{=} \sum_{\substack{m \\ B_i(m)=1}} z_{i,\sigma(m)}(t) P\{q(t+h)=s \mid q(t)=m\} \cdot P\{N_i(t, t+h)=0 \mid q(t)=m\} \notag \\ 
&+ h \sum_{\substack{n \\ B_i(n)=1}} P\{q(t+h)=s \mid q(t)=n\} \cdot P\{N_i(t, t+h)=0 \mid q(t)=n\} P\{q(t)=n\} + o(h)  \notag \\ \noalign{\refstepcounter{equation}}
&\overset{\mathclap{(\alph{equation})}}{=} \sum_{\substack{m \\ B_i(m)=1}} z_{i,\sigma(m)}(t) P\{q(t+h)=s \mid q(t)=m\} \notag \cdot P\{N_i(t, t+h)=0\} \notag \\
& + h \Bigg[ \sum_{\substack{n \\ B_i(n)=1}} \Big[ \underbrace{P\{q(t+h)=s \mid q(t)=n\} - 1\{n=s\}}_{\mathcal{O}(h)} \Big] \cdot P\{N_i(t, t+h)=0\} P\{q(t)=n\} \notag \\
& + \sum_{\substack{n \\ B_i(n)=1}} 1\{n=s\} P\{N_i(t, t+h)=0\} P\{q(t)=n\} \Bigg] + o(h) \notag \\ \noalign{\refstepcounter{equation}}
&\overset{\mathclap{(\alph{equation})}}{=} \sum_{\substack{m \\ B_i(m)=1}} z_{i,\sigma(m)}(t) P\{q(t+h)=s \mid q(t)=m\} \cdot P\{N_i(t, t+h)=0\} + h \cdot (1 - \lambda_i(t) h) P\{q(t)=s\} + o(h).
\notag
\end{align}
\end{subequations}
We split the sums based on whether class $i$ is active in state $m$ in (\ref{eq:B_i_0_vanish}$b$). When $B_i(m)=0$, we have $z_{i,m}(t)=0$ because no time is accumulated in states where class $i$ is inactive. This eliminates the second group of terms. Note that the dependence between the events $\{q(t+h)=s\}$ and $\{N_i(t,t+h)\}$ is negligible to first order in $h$; hence, their joint probability is well approximated by the product of the respective conditional probabilities, with any residual coupling absorbed into the $o(h)$ term in (\ref{eq:B_i_0_vanish}$b$). In the the second term of (\ref{eq:B_i_0_vanish}$b$), we expand the transition probability by adding and subtracting the indicator $1\{n=s\}$. The bracketed difference is $\mathcal{O}(h)$ by definition of the generator, and when multiplied by the outer $h$ it becomes $\mathcal{O}(h^2)$ and is absorbed into $o(h)$. Hence, only $1\{n=s\}$ term remains, yielding $hP\{N_i(t,t+h)=0\}P\{q(t)=s\} = h(1-\lambda_i(t)h)P\{q(t)=s\}$. We replace the conditional no-arrival probability $P\{N_i(t,t+h)=0 \mid q(t)=m\}$ with the unconditional probability $P\{N_i(t,t+h)=0\}$, since the event $N_i(t,t+h)=0$ is independent of the event $q(t)=m$.
\begin{subequations}
\refstepcounter{equation}
\begin{align} \label{eq:generator}
z_{i,\sigma(s)}&(t+h) \overset{\mathclap{(\alph{equation})}}{=} \sum_{\substack{m \\ B_i(m)=1}} z_{i,\sigma(m)}(t) P\{q(t+h)=s \mid q(t)=m\} \cdot P\{N_i(t, t+h)=0\} + h P\{q(t)=s\} + o(h) \tag{\theparentequation} \\ \noalign{\refstepcounter{equation}}
&\overset{\mathclap{(\alph{equation})}}{=}\sum_{\substack{m \\ B_i(m)=1}} z_{i,\sigma(m)}(t) P\{q(t+h)=s \mid q(t)=m\} \cdot (1 - \lambda_i(t)h) + h P\{q(t)=s\} + o(h) \notag \\ \noalign{\refstepcounter{equation}}
&\overset{\mathclap{(\alph{equation})}}{=} \sum_{\substack{m \\ B_i(m)=1}} z_{i,\sigma(m)}(t) \Bigg[ \Big( P\{q(t+h)=s \mid q(t)=m\} - 1\{m=s\} \Big) + 1\{m=s\} \Bigg] (1 - \lambda_i(t)h) \notag \\
&+ h P\{q(t)=s\} + o(h) \notag \\ \noalign{\refstepcounter{equation}}
&\overset{\mathclap{(\alph{equation})}}{=} \sum_{\substack{m \\ B_i(m)=1}} z_{i,\sigma(m)}(t) \Big( P\{q(t+h)=s \mid q(t)=m\} - 1\{m=s\} \Big) + \sum_{\substack{m \\ B_i(m)=1}} z_{i,\sigma(m)}(t) 1\{m=s\} \notag \\
&- \lambda_i(t)h \sum_{\substack{m \\ B_i(m)=1}} z_{i,\sigma(m)}(t) \underbrace{\Big[ P\{q(t+h)=s \mid q(t)=m\} - 1\{m=s\} \Big]}_{\mathcal{O}(h)} \notag \\
&- \lambda_i(t)h \sum_{\substack{m \\ B_i(m)=1}} z_{i,\sigma(m)}(t) 1\{m=s\} + h P\{q(t)=s\} + o(h). \notag 
\end{align}
\end{subequations}
We expand the transition probability in the first sum by adding and subtracting $1\{m=s\}$, then distribute the factor $(1-\lambda_i(t)h)$ across the two resulting terms in (\ref{eq:generator}$c$). The term with the bracketed $\mathcal{O}(h)$ difference multiplied by $h$ is $\mathcal{O}(h^2)$ and is absorbed into $o(h)$. The indicator $1\{m=s\}$ collapses the second and fourth sums to $m=s$, giving $z_{i,s}(t)$ and $-\lambda_i(t)h z_{i,s}(t)$ respectively. Recall that $B_i(s) = 1$ in this case. The difference $P\{q(t+h)=s \mid q(t)=m\} - 1\{m=s\}$ in the first sum in (\ref{eq:generator}$d$) equals $Q_{m,s}(t)h$ by the definition of the generator matrix.
\begin{align}
z_{i,\sigma(s)}&(t+h) = \sum_{\substack{m \\ B_i(m)=1}} z_{i,\sigma(m)}(t) Q_{\sigma(m),\sigma(s)}(t)h + z_{i,\sigma(s)}(t) - \lambda_i(t) z_{i,\sigma(s)}(t)h + h P\{q(t)=s\} + o(h). \notag \\
&= (\mathbf{z}_i(t) \mathbf{Q}(t))_s h + (\mathbf{z}_i(t))_{\sigma(s)} - \lambda_i(t) (\mathbf{z}_i(t))_{\sigma(s)} h + h (\mathbf{p}(t))_{\sigma(s)} + o(h). \label{eq:matrix_eq}
\end{align}
We identify the vector-matrix product $(\mathbf{z}_i(t) \mathbf{Q}(t))_{\sigma(s)} = \sum_m (\mathbf{z}_i(t))_{\sigma(m)} (\mathbf{Q}(t))_{\sigma(m),\sigma(s)}$, and write $(\mathbf{z}_i(t))_{\sigma(s)} = \mathbf{z}_{i,\sigma(s)}(t)$ and $(\mathbf{p}(t))_{\sigma(s)} = P\{q(t)=s\}$ in \eqref{eq:matrix_eq}.
Taking the limit as $h \to 0$, we obtain the time derivative:
\begin{align*}
\dot{z}_{i,\sigma(s)}(t) &= \lim_{h \to 0} \frac{z_{i,\sigma(s)}(t+h) - z_{i,\sigma(s)}(t)}{h} \notag \\ 
&= \lim_{h \to 0} \frac{1}{h} \Big[ (\mathbf{z}_i(t) \mathbf{Q}(t))_{\sigma(s)} h + z_{i,\sigma(s)}(t) - \lambda_i(t) (\mathbf{z}_i(t))_s{\sigma(s)} h + h (\mathbf{p}(t))_{\sigma(s)} + o(h) - z_{i,\sigma(s)}(t) \Big] \notag \\
&= (\mathbf{z}_i(t)\mathbf{Q}(t))_{\sigma(s)} - \lambda_i(t)(\mathbf{z}_i(t))_{\sigma(s)} + (\mathbf{p}(t))_{\sigma(s)}.
\end{align*}

\subsubsection{Case 2: $B_i(s)=0$} 

When class $i$ is not active in state $s$,  $Z_i(t)=0$ as long as the system remains in that state. Therefore:
\begin{align*}
\dot{z}_{i,s}(t) &= \lim_{h \to 0} \frac{z_{i,s}(t+h) - z_{i,s}(t)}{h} = 0.
\end{align*}

We combine both cases using the element-wise product with the indicator vector $\boldsymbol{\delta}_{B_i = 1}$ where $(\boldsymbol{\delta}_{B_i = 1})_s = 1\{B_i(s)=1\}$, we obtain the unified equation:

\begin{equation}
\dot{\mathbf{z}}_i(t) = \mathbf{p}(t) \odot \boldsymbol{\delta}_{B_i = 1} + (\mathbf{z}_i(t) \mathbf{Q}(t)) \odot \boldsymbol{\delta}_{B_i = 1} - \lambda_i(t)(\mathbf{z}_i(t) \odot \boldsymbol{\delta}_{B_i = 1}). \notag\\
\end{equation}
We now carry out the algebraic manipulations on the resulting ODE. We consider the $\sigma(s)$-th component and distinguish between the two possible values of $B_i(s)$. Consider the case when $B_i(s) = 0$.
\begin{align*}
\Big( \mathbf{z}_i(t)&\big( \mathbf{Q}(t) - \mathbf{M}^{(\mathrm{next}=i)}(t) \big) \Big)_{\sigma(s)} = \sum_q (\mathbf{z}_i(t))_{\sigma(q)} \Big( \mathbf{Q}(t) - \mathbf{M}^{(\mathrm{next}=i)}(t) \Big)_{\sigma(q),\sigma(s)} \\
&= \sum_{\substack{q \\ B_i(q)=0}} (\mathbf{z}_i(t))_{\sigma(q)} \Big( \mathbf{Q}(t) - \mathbf{M}^{(\mathrm{next}=i)}(t) \Big)_{\sigma(q),\sigma(s)} + \sum_{\substack{q \\ B_i(q)=1}} (\mathbf{z}_i(t))_{\sigma(q)} \Big( \mathbf{Q}(t) - \mathbf{M}^{(\mathrm{next}=i)}(t) \Big)_{\sigma(q),\sigma(s)}. 
\end{align*}
Recall that $\mathbf{M}^{(\mathrm{next}=i)}(t)$ is the matrix containing only those transition rates that lead to class $i$ becoming the next class in service. The first sum is zero because $z_{i,\sigma(q)}(t)=0$ whenever $B_i(q)=0$. For the second sum, since $B_i(q)=1$ and $B_i(s)=0$, the transition from state $q$ to state $s$ corresponds to an event that makes class $i$ become the next class in service. Therefore,
\begin{equation*}
(\mathbf{Q}(t))_{\sigma(q),\sigma(s)} = (\mathbf{M}^{(\mathrm{next}=i)}(t))_{\sigma(q),\sigma(s)},
\end{equation*}
and the difference vanishes componentwise. Hence,
\begin{equation*}
\Big( \mathbf{z}_i(t)\big( \mathbf{Q}(t) - \mathbf{M}^{(\mathrm{next}=i)}(t) \big) \Big)_{\sigma(s)} = 0.
\end{equation*}
This agrees with
\begin{equation*}
\Big( (\mathbf{z}_i(t)\mathbf{Q}(t)) \odot \boldsymbol{\delta}_{B_i = 1} \Big)_{\sigma(s)} = 0,
\end{equation*}
since the elementwise product keeps only the components for which $B_i(s)=1$. Now, consider the case when $B_i(s) = 1$.
In this case, the state $s$ already belongs to the active set of class $i$, so transitions into the state indexed by $s$ are not arrivals that trigger the next service of class $i$. Consequently, 
\begin{equation*}
(\mathbf{M}^{(\mathrm{next}=i)}(t))_{\sigma(q),\sigma(s)}=0 \qquad \text{for all } q.
\end{equation*}
Thus,
\begin{align*}
\Big( \mathbf{z}_i(t)\big( \mathbf{Q}(t) - \mathbf{M}^{(\mathrm{next}=i)}(t) \big) \Big)_{\sigma(s)}
&= \sum_q \big( \mathbf{z}_i(t) \big)_{\sigma(q)} \Big( \mathbf{Q}(t) - \mathbf{M}^{(\mathrm{next}=i)}(t) \Big)_{\sigma(q),\sigma(s)} \\
&= \sum_q (\mathbf{z}_i(t))_{\sigma(q)} (\mathbf{Q}(t))_{\sigma(q),\sigma(s)} \notag \\
&= (\mathbf{z}_i(t)\mathbf{Q}(t))_{\sigma(s)} \notag \\
&= \Big( (\mathbf{z}_i(t)\mathbf{Q}(t)) \odot \boldsymbol{\delta}_{B_i = 1} \Big)_{\sigma(s)},
\end{align*}
which is exactly the desired expression. Since the two cases agree for every component $s$, we conclude that
\begin{equation*}
\mathbf{z}_i(t)\big( \mathbf{Q}(t) - \mathbf{M}^{(\mathrm{next}=i)}(t) \big)
=
(\mathbf{z}_i(t)\mathbf{Q}(t)) \odot \boldsymbol{\delta}_{B_i = 1}.
\end{equation*}
Substituting this identity into the ODE gives
\begin{align*}
\dot{\mathbf{z}}_i(t)
&= \mathbf{p}(t)\odot \boldsymbol{\delta}_{B_i = 1}
+ \mathbf{z}_i(t)\big( \mathbf{Q}(t) - \mathbf{M}^{(\mathrm{next}=i)}(t) \big) - \lambda_i(t)\big( \mathbf{z}_i(t)\odot \boldsymbol{\delta}_{B_i = 1} \big) \notag \\
&= \mathbf{p}(t)\odot \boldsymbol{\delta}_{B_i = 1}
+ \mathbf{z}_i(t)\mathbf{Q}(t)
- \lambda_i(t)\big( \mathbf{z}_i(t)\odot \boldsymbol{\delta}_{B_i = 1} \big) - \mathbf{z}_i(t)\mathbf{M}^{(\mathrm{next}=i)}(t).
\end{align*}
\begin{equation*}
\therefore \quad \boxed{\dot{\mathbf{z}}_i(t) = \mathbf{p}(t)\odot \boldsymbol{\delta}_{B_i = 1}
+ \mathbf{z}_i(t)\mathbf{Q}(t)
- \lambda_i(t)\big( \mathbf{z}_i(t)\odot \boldsymbol{\delta}_{B_i = 1} \big) - \mathbf{z}_i(t)\mathbf{M}^{(\mathrm{next}=i)}(t)} 
\end{equation*}

\subsection{Mean AoI}

We decompose the instantaneous AoI $\Delta_i(t)$ over the countable state space of the underlying CTMC. 
\begin{subequations}
\refstepcounter{equation}
\begin{align} \label{eq:mean}
\bar{\Delta}_i(t) 
&\overset{\mathclap{(\alph{equation})}}{=} \mathbb{E}\left(\Delta_{i}(t)\sum_{s}1_{\{q(t)=s\}}\right) \tag{\theparentequation}  \\
\noalign{\refstepcounter{equation}}
&\overset{\mathclap{(\alph{equation})}}{=} \sum_{s}\mathbb{E}\bigl(\Delta_i(t)1_{\{q(t)=s\}}\bigr) \notag\\
\noalign{\refstepcounter{equation}}
&\overset{\mathclap{(\alph{equation})}}{=} \sum_{s}a_{i,\sigma(s)}(t) \notag \\
\noalign{\refstepcounter{equation}}
&\overset{\mathclap{(\alph{equation})}}{=} \mathbf{a_i(t)}\mathbf{1}. \notag
\end{align}
\end{subequations}
Equation (\ref{eq:mean}$b$) follows from the linearity of expectation and the fact that the indicator partitions the sample space. The resulting scalar $\bar{\Delta}_i(t)$ is therefore the sum of state-dependent moment functions across all feasible states.

\subsection{Mean PAoI} \label{peak_AOI}
Recall $\mathcal{E}_i(t,t+h)$ denotes the event that a class‑$i$ completion occurs in the infinitesimal interval $(t,t+h)$. The mean peak AoI $\hat{\Delta}_i(t)$ is defined as the limit of the conditional expectation:
\begin{subequations}
\refstepcounter{equation}
\begin{align} \label{eq:cond_prob_def}
\hat{\Delta}_i(t) 
&\overset{\mathclap{(\alph{equation})}}{=} \lim_{h\rightarrow0}\mathbb{E}\bigl(\Delta_i(t) \mid \mathcal{E}_i(t,t+h)\bigr) \tag{\theparentequation} \\ \noalign{\refstepcounter{equation}}
&\overset{\mathclap{(\alph{equation})}}{=} \lim_{h\rightarrow0}\mathbb{E}\bigl(\Delta_i(t)1\{\mathcal{E}_i(t,t+h)\} \mid \mathcal{E}_i(t,t+h)\bigr) \notag \\ \noalign{\refstepcounter{equation}}
&\overset{\mathclap{(\alph{equation})}}{=} \lim_{h\rightarrow0} \frac{\mathbb{E}\bigl(\Delta_i(t)1\{\mathcal{E}_i(t,t+h)\}\bigr)}{P\{\mathcal{E}_i(t,t+h)\}}. \notag
\end{align}
\end{subequations}
Equation (\ref{eq:cond_prob_def}$c$) applies the standard definition of conditional expectation for non‑zero probability events.  To evaluate the limit, we separately characterize the numerator and the denominator using the law of total probability over the CTMC states. 
\begin{subequations}
\refstepcounter{equation}
\begin{align} \label{eq:indep_service}
\mathbb{E}\bigl(\Delta_i(t)\mathbf{1}\{\mathcal{E}_i(t,t+h)\}\bigr) &\overset{\mathclap{(\alph{equation})}}{=} \sum_{s} \mathbb{E}\bigl(\Delta_i(t)\mathbf{1}\{\mathcal{E}_i(t,t+h)\} \mid q(t)=s\bigr)\,P\{q(t)=s\} \tag{\theparentequation} \\ \noalign{\refstepcounter{equation}}
&\overset{\mathclap{(\alph{equation})}}{=} \sum_{\substack{s \\ J(s)=i}} \mathbb{E}\bigl(\Delta_i(t) \mid q(t)=s\bigr)\,
\mathbb{E}\bigl(\mathbf{1}\{\mathcal{E}_i(t,t+h)\} \mid q(t)=s\bigr)\,P\{q(t)=s\} \notag \\ \noalign{\refstepcounter{equation}}
&\overset{\mathclap{(\alph{equation})}}{=} \sum_{\substack{s \\ J(s)=i}} \mathbb{E}\bigl(\Delta_i(t) \mid q(t)=s\bigr)\,P\{q(t)=s\}\,\mu_i(t)h + o(h) \notag\\ \noalign{\refstepcounter{equation}}
&\overset{\mathclap{(\alph{equation})}}{=} \sum_{\substack{s \\ J(s)=i}} \mathbb{E}\bigl(\Delta_i(t)\mathbf{1}\{q(t)=s\}\bigr)\,\mu_i(t)h + o(h) \notag\\ \noalign{\refstepcounter{equation}}
&\overset{\mathclap{(\alph{equation})}}{=} \sum_{\substack{s \\ J(s)=i}} a_{i,\sigma(s)}(t)\,\mu_i(t)h + o(h). \notag
\end{align}
\end{subequations}
In (\ref{eq:indep_service}$c$) we exploit the fact that, given $q(t)=s$, the residual service time is independent of the current AoI. Equation (\ref{eq:indep_service}$f$) substitutes the definition $a_{i,s}(t)\triangleq\mathbb{E}[\Delta_i(t)\mid q(t)=s]$.
Similarly, marginalizing over $q(t)$ yields:
\begin{align}
P\{\mathcal{E}_i(t,t+h)\} 
&= \sum_{s} P\{\mathcal{E}_i(t,t+h) \mid q(t)=s\}P\{q(t)=s\} \notag\\
&= \sum_{\substack{s \\ J(s)=i}} \mu_i(t)h\,P\{q(t)=s\} + o(h). \label{eq:denom_final}
\end{align}
Substituting (\ref{eq:indep_service}$f$) and~\eqref{eq:denom_final} into (\ref{eq:cond_prob_def}$c$) and canceling the common factor $h$:
\begin{subequations}
\refstepcounter{equation}
\begin{align} \label{eq:cancel_v}
\hat{\Delta}_i(t) 
&\overset{\mathclap{(\alph{equation})}}{=} \lim_{h\rightarrow 0}
\frac{\sum_{\substack{s \\ J(s)=i}} a_{i,\sigma(s)}(t)\mu_i(t)h + o(h)}
{\sum_{\substack{s \\ J(s)=i}} \mu_i(t)h\,P\{q(t)=s\} + o(h)} \tag{\theparentequation} \\ \noalign{\refstepcounter{equation}}
&\overset{\mathclap{(\alph{equation})}}{=} \lim_{h\rightarrow 0}
\frac{\sum_{\substack{s \\ J(s)=i}} a_{i,\sigma(s)}(t) \mu_i(t) + \dfrac{o(h)}{h}}
{\sum_{\substack{s \\ J(s)=i}} P\{q(t)=s\} \mu_i(t) + \dfrac{o(h)}{h}} \notag\\ \noalign{\refstepcounter{equation}}
&\overset{\mathclap{(\alph{equation})}}{=} \frac{\sum_{\substack{s \\ J(s)=i}} a_{i,\sigma(s)}(t)}
{\sum_{\substack{s \\ J(s)=i}} P\{q(t)=s\}} \notag \\
\noalign{\refstepcounter{equation}}
&\overset{\mathclap{(\alph{equation})}}{=} \frac{\mathbf{a}_i(t)\mathbf{1}_{J=i}}{\mathbf{p}(t)\mathbf{1}_{J=i}}. \notag
\end{align}
\end{subequations}
Equation (\ref{eq:cancel_v}$d$) follows because the service rate $\mu_i(t)$ is common to both numerator and denominator for all states where $J(s)=i$, and the $o(h)/h$ terms vanish as $h\rightarrow 0$.

\section{Proof of Theorem \ref{existence_thm_main}} \label{appendix_b}

This appendix details the convergence analysis of the system. We begin by reformulating the dynamics using the Reduced Kolmogorov Forward Equation and aggregating them into a compact linear ODE representation.We then establish the structural properties of the system to formally prove the existence, uniqueness, and exponential convergence of the periodic steady state.

\begin{proposition}[Reduced Kolmogorov Equation]
\label{prop_reduced}
The Kolmogorov Forward Equation $\dot{\mathbf{p}}(t) = \mathbf{p}(t)\,\mathbf{Q}(t)$ admits 
the reduced form:
\begin{equation*}
    \dot{\mathbf{p}}_{\mathrm{red}}(t) = 
    \mathbf{p}_{\mathrm{red}}(t)\,\mathbf{Q}_{\mathrm{red}}(t) + 
    \boldsymbol{\beta}(t),
\end{equation*}
where $\mathbf{Q}_{\mathrm{red}}(t)$ is the reduced generator 
matrix, $\mathbf{p_{red}}(t)$ is the reduced state probability vector, and $\boldsymbol{\beta}(t)$ is the corresponding source term.
\end{proposition}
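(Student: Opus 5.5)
The plan is to eliminate one component of $\mathbf{p}(t)$ using the conservation law $\mathbf{p}(t)\mathbf{1}=1$. Every row of $\mathbf{Q}(t)$ sums to zero, so $\frac{d}{dt}\mathbf{p}(t)\mathbf{1}=\mathbf{p}(t)\mathbf{Q}(t)\mathbf{1}=0$. With a normalized initial condition, the total mass therefore stays equal to one for all $t$.

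I will take the idle state $\sigma(s)=1$ as the reference component and substitute $p_1(t)=1-\sum_{k\ge 2}p_k(t)$. Write $n=|\mathcal{Q}|$ and set $\mathbf{p}_{\mathrm{red}}(t)=[p_2(t),\ldots,p_n(t)]\in\mathbb{R}^{1\times(n-1)}$. Partition the generator as
\begin{equation*}
\mathbf{Q}(t)=\begin{bmatrix} q_{11}(t) & \mathbf{q}_{1\cdot}(t)\\ \mathbf{q}_{\cdot 1}(t) & \widetilde{\mathbf{Q}}(t)\end{bmatrix},
\end{equation*}
where $\mathbf{q}_{1\cdot}(t)$ is $1\times(n-1)$, $\mathbf{q}_{\cdot1}(t)$ is $(n-1)\times1$, and $\widetilde{\mathbf{Q}}(t)$ is the $(n-1)\times(n-1)$ block.

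For $k\ge2$, the forward equation gives $\dot p_k=p_1 Q_{1k}+\sum_{j\ge2}p_jQ_{jk}$. Substituting $p_1=1-\mathbf{p}_{\mathrm{red}}\mathbf{1}$ yields
\begin{equation*}
\dot{\mathbf{p}}_{\mathrm{red}}(t)=\mathbf{p}_{\mathrm{red}}(t)\bigl(\widetilde{\mathbf{Q}}(t)-\mathbf{1}\,\mathbf{q}_{1\cdot}(t)\bigr)+\mathbf{q}_{1\cdot}(t).
\end{equation*}
This is the claimed form with $\mathbf{Q}_{\mathrm{red}}(t)\triangleq\widetilde{\mathbf{Q}}(t)-\mathbf{1}\mathbf{q}_{1\cdot}(t)$ and $\boldsymbol{\beta}(t)\triangleq\mathbf{q}_{1\cdot}(t)$. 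From \eqref{eq:generator_entries}, $\mathbf{q}_{1\cdot}(t)$ consists of the rates $\lambda_k(t)$ into the states in which class $k$ is served with all buffers empty. Both $\mathbf{Q}_{\mathrm{red}}$ and $\boldsymbol\beta$ inherit $T$-periodicity and piecewise continuity from $\lambda_i,\mu_i$.

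To finish, I will check equivalence in both directions. Any solution of the full equation with $\mathbf{p}(0)\mathbf{1}=1$ projects to a solution of the reduced equation. Conversely, a reduced solution lifts to a full solution by setting $p_1=1-\mathbf{p}_{\mathrm{red}}\mathbf{1}$, and the first-component equation $\dot p_1=\mathbf{p}\mathbf{Q}_{\cdot1}$ then holds automatically because $\mathbf{Q}\mathbf{1}=0$.

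The main obstacle is not the algebra, which is routine. It is choosing the reduction so that it is useful for Theorem~\ref{existence_thm_main}: the homogeneous part $\dot{\mathbf{p}}_{\mathrm{red}}=\mathbf{p}_{\mathrm{red}}\mathbf{Q}_{\mathrm{red}}$ should be exponentially stable. That stability removes the neutral eigenvalue $1$ of the monodromy of $\mathbf{Q}$, since $\mathbf{p}\mathbf{1}$ is conserved. I will only remark that the spectrum of the reduced monodromy equals that of the full monodromy with one eigenvalue $1$ removed. Proving strict contraction would require irreducibility over a period, for example $\lambda_i>0$ on a set of positive measure and $\mu_i>0$ during the pass window, and I leave that to the subsequent analysis.
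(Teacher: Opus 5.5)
Your proof is correct and follows the paper's route: both eliminate one probability using $\mathbf{p}(t)\mathbf{1}=1$ and substitute into $\dot{\mathbf{p}}=\mathbf{p}\mathbf{Q}$, with the paper packaging this as $\mathbf{p}=\mathbf{p}_{\mathrm{red}}\mathbf{T}+\mathbf{b}$, so that $\mathbf{Q}_{\mathrm{red}}=\mathbf{T}\mathbf{Q}\mathbf{E}=\mathbf{Q}_{11}-\mathbf{1}\mathbf{q}_{N1}$ and $\boldsymbol{\beta}=\mathbf{b}\mathbf{Q}\mathbf{E}$. The differences are that the paper eliminates the last-indexed state (all buffers full, with total out-rate $\mu_N(t)$, which vanishes during outages) rather than the idle state (your $\boldsymbol{\beta}$ sums to $\sum_k\lambda_k(t)$), and that it takes normalization for granted, whereas you also verify mass conservation and the lift back to the full equation.
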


\begin{proof}

We reduce the dimension of the state probability vector $\mathbf{p}(t)$ by using the normalization condition
\begin{equation}
\sum_{n=1}^{N} p_n(t)=1.
\end{equation}
Hence, the last component can be written in terms of the first $N-1$ components as
\begin{equation}
p_N(t)=1-\sum_{n=1}^{N-1} p_n(t).
\end{equation}
Therefore, the full probability vector can be expressed as
\begin{equation}
\mathbf{p}(t)=\big[ p_1(t)\;\; \dots \;\; p_N(t) \big]
= \bigg[ p_1(t)\;\; \dots \;\; 1-\sum_{n=1}^{N-1} p_n(t) \bigg].
\end{equation}
Equivalently, letting
\begin{equation}
\mathbf{p}_{\text{red}}(t)\triangleq \big[ p_1(t)\;\; \dots \;\; p_{N-1}(t) \big],
\end{equation}
we can write
\begin{equation}
\mathbf{p}(t)=\mathbf{p}_{\text{red}}(t)\mathbf{T}+\mathbf{b},
\end{equation}
where
\begin{equation}
\mathbf{T}=
\begin{bmatrix}
1 & 0 & \dots & -1 \\
0 & 1 & \dots & -1 \\
\vdots & \vdots & \ddots & \vdots \\
0 & 0 & \dots & -1
\end{bmatrix},
\qquad
\mathbf{b}=\big[ 0 \;\; \dots \;\; 0 \quad 1 \big].
\end{equation}
Note that
\begin{equation}
\mathbf{p}(t)\mathbf{Q}(t)
=
\big(\mathbf{p}_{\text{red}}(t)\mathbf{T}+\mathbf{b}\big)\mathbf{Q}(t)
=
\mathbf{p}_{\text{red}}(t)\mathbf{T}\mathbf{Q}(t)+\mathbf{b}\mathbf{Q}(t),
\end{equation}
Next, we keep only the first $N-1$ components by multiplying with 
\begin{equation}
\mathbf{E}=
\begin{bmatrix}
\mathbf{I}_{N-1} \\
\mathbf{0}^T
\end{bmatrix},
\end{equation}
Right multiplication by $\mathbf{E}$ extracts the reduced state components. Applying this projection gives
\begin{equation}
\mathbf{p}(t)\mathbf{Q}(t)\mathbf{E}
=
\mathbf{p}_{\text{red}}(t)\mathbf{T}\mathbf{Q}(t)\mathbf{E}
+
\mathbf{b}\mathbf{Q}(t)\mathbf{E}.
\end{equation}
Using the Kolmogorov Forward Equation, we obtain
\begin{equation}
\dot{\mathbf{p}}(t)\mathbf{E}
=
\mathbf{p}_{\text{red}}(t)\underbrace{\mathbf{T}\mathbf{Q}(t)\mathbf{E}}_{\mathbf{Q}_{\text{red}}(t)}
+
\underbrace{\mathbf{b}\mathbf{Q}(t)\mathbf{E}}_{\boldsymbol{\beta}(t)}.
\end{equation}
Thus, the final equation takes the compact form
\begin{equation}
\dot{\mathbf{p}}_{\text{red}}(t)
=
\mathbf{p}_{\text{red}}(t)\mathbf{Q}_{\text{red}}(t)
+
\boldsymbol{\beta}(t).
\end{equation}
\end{proof}

The following corollary consolidates Theorem~\ref{theorem_main} and Proposition~\ref{prop_reduced} into a single compact ODE.

\begin{corollary}[Compact ODE Representation]
\label{cor_compact}
Define the stacked vector:
\begin{equation}
\label{eq_stack}
    \mathbf{x}(t) = [\mathbf{a}_1(t), \dots, \mathbf{a}_N(t), \mathbf{y}(t), \mathbf{z}_1(t), \dots, \mathbf{z}_N(t), \mathbf{p}_{\text{red}}(t)]^T.
\end{equation}
Then the system of ODEs in Theorem~\ref{theorem_main} and 
Proposition~\ref{prop_reduced} admits the compact representation:
\begin{equation} 
\label{eq_compact}
    \dot{\mathbf{x}}(t) = \mathbf{A}(t)\,\mathbf{x}(t) + \mathbf{b}(t),
\end{equation}
where $\mathbf{b}(t) = [\boldsymbol{\beta}(t),\, \mathbf{0},\, 
\ldots,\, \mathbf{0}]^\top$ with $\boldsymbol{\beta}(t)$ as in 
Proposition~\ref{prop_reduced}, and $\mathbf{A}(t)$ is a block 
upper triangular matrix whose diagonal blocks are given by:
\begin{align}
    \mathbf{A}_{i,\,i}(t) &= \mathbf{Q}^\top(t) - 
    \mathbf{M}^{(i)\top}(t), \quad i = 1, \ldots, N, \\[4pt]
    \mathbf{A}_{N+1,\,N+1}(t) &= \mathbf{Q}^\top(t) - 
    \mathbf{M}^{(\mathrm{comp})\top}(t), \\[4pt]
    \mathbf{A}_{N+1+i,\,N+1+i}(t) &= \mathbf{Q}^\top(t) - 
    \mathbf{M}^{(\mathrm{next}=i)\top}(t) - 
    \lambda_i(t)\,\mathrm{diag}(\boldsymbol{\delta}_{B_i=1}), 
    \quad i = 1, \ldots, N, \\[4pt]
    \mathbf{A}_{2N+2,\,2N+2}(t) &= \mathbf{Q}_{\mathrm{red}}^\top(t).
\end{align}

where $\mathrm{diag}(\mathbf{v})$ denotes the diagonal matrix 
formed from the entries of vector $\mathbf{v}$. Note that all matrices are transposed since $\mathbf{x}(t)$ is 
defined as a column vector in \eqref{eq_stack}.
\end{corollary}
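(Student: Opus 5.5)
The plan is to transpose each row-vector ODE of Theorem~\ref{theorem_main} into column form, substitute the affine parametrization $\mathbf{p}(t)=\mathbf{p}_{\mathrm{red}}(t)\mathbf{T}+\mathbf{b}$ from Proposition~\ref{prop_reduced} wherever $\mathbf{p}(t)$ appears, and read off the block structure of $\mathbf{A}(t)$ under the ordering \eqref{eq_stack}. In column form, a row relation $\dot{\mathbf{v}}=\mathbf{v}\mathbf{G}+\mathbf{w}\mathbf{H}+\mathbf{u}$ becomes $\dot{\mathbf{v}}^\top=\mathbf{G}^\top\mathbf{v}^\top+\mathbf{H}^\top\mathbf{w}^\top+\mathbf{u}^\top$. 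This transposition is why every block is a transposed matrix.

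\textbf{Step 1 (diagonal blocks).} For the $\mathbf{a}_i$ block, (i) gives $\mathbf{a}_i(\mathbf{Q}-\mathbf{M}^{(i)})$, so the self-coupling is $\mathbf{Q}^\top-\mathbf{M}^{(i)\top}$.

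For the $\mathbf{y}$ block, (ii) gives $\mathbf{y}(\mathbf{Q}-\mathbf{M}^{(\mathrm{comp})})$, hence $\mathbf{Q}^\top-\mathbf{M}^{(\mathrm{comp})\top}$.

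For the $\mathbf{z}_i$ block, (iii) contains $-\lambda_i(\mathbf{z}_i\odot\boldsymbol{\delta}_{B_i=1})$, which I rewrite as $-\lambda_i\mathbf{z}_i\,\mathrm{diag}(\boldsymbol{\delta}_{B_i=1})$. Since a diagonal matrix is symmetric, the block is $\mathbf{Q}^\top-\mathbf{M}^{(\mathrm{next}=i)\top}-\lambda_i\mathrm{diag}(\boldsymbol{\delta}_{B_i=1})$.

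The $\mathbf{p}_{\mathrm{red}}$ block is $\mathbf{Q}_{\mathrm{red}}^\top$, directly from Proposition~\ref{prop_reduced}.

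\textbf{Step 2 (off-diagonal blocks and triangularity).} I list the cross couplings and check that each points only to blocks later in the ordering.
\begin{itemize}
\item The $\mathbf{a}_i$ equation couples to $\mathbf{y}$ through $\mathbf{M}^{(i)\top}$. It couples to $\mathbf{p}$ through the identity, which gives $\mathbf{T}^\top$ acting on $\mathbf{p}_{\mathrm{red}}$. It does not involve $\mathbf{a}_j$ for $j\neq i$.
\item The $\mathbf{y}$ equation couples to each $\mathbf{z}_k$ through $\bigl(\mathrm{diag}(\boldsymbol{\delta}_{\mathrm{next}=k})\mathbf{M}^{(\mathrm{comp})}\bigr)^\top$. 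Here I use $\mathbf{z}_k\odot\boldsymbol{\delta}=\mathbf{z}_k\mathrm{diag}(\boldsymbol{\delta})$. It couples to $\mathbf{p}_{\mathrm{red}}$ through $(\mathbf{T}\,\mathrm{diag}(\boldsymbol{\delta}_{J\neq0}))^\top$.
\item The $\mathbf{z}_i$ equation couples only to $\mathbf{p}_{\mathrm{red}}$, through $(\mathbf{T}\,\mathrm{diag}(\boldsymbol{\delta}_{B_i=1}))^\top$.
\item The $\mathbf{p}_{\mathrm{red}}$ equation is autonomous.
\end{itemize}
Every block row therefore depends only on itself and on block columns with larger index. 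With $\mathbf{a}_1,\dots,\mathbf{a}_N$ first, then $\mathbf{y}$, then $\mathbf{z}_1,\dots,\mathbf{z}_N$, then $\mathbf{p}_{\mathrm{red}}$, this is exactly block upper triangularity.

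\textbf{Step 3 (forcing term) — the main obstacle.} The constant part $\mathbf{b}$ of $\mathbf{p}=\mathbf{p}_{\mathrm{red}}\mathbf{T}+\mathbf{b}$ produces affine terms in the $\mathbf{a}_i$, $\mathbf{y}$ and $\mathbf{z}_i$ equations. These are $\mathbf{b}^\top$, $(\mathbf{b}\odot\boldsymbol{\delta}_{J\neq0})^\top$ and $(\mathbf{b}\odot\boldsymbol{\delta}_{B_i=1})^\top$ respectively. The term $\boldsymbol{\beta}(t)^\top$ sits in the $\mathbf{p}_{\mathrm{red}}$ slot, the last slot of \eqref{eq_stack}.

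So the inhomogeneity is not literally zero outside one block, and the displayed $\mathbf{b}(t)$ should be read as the stacked vector of these known, $T$-periodic, state-independent terms. I would state this explicitly, and note that $\mathbf{b}(t)$ is still independent of $\mathbf{x}$ and $T$-periodic. Only these two properties, together with the block-triangular form, are used later in the proof of Theorem~\ref{existence_thm_main}.

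I would also double-check in this step that $\mathbf{T}$ and $\mathbf{E}$ are dimensioned with $|\mathcal{Q}|$ rather than $N$. The reduction in Proposition~\ref{prop_reduced} uses $N$ for the number of probability entries, which clashes with the number of classes.
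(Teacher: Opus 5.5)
Your proposal is correct and follows essentially the paper's route: you transpose the row-vector ODEs of Theorem~\ref{theorem_main}, read off the self-coupling blocks, and obtain block upper triangularity from the dependency chain $\mathbf{p}_{\mathrm{red}}\to\mathbf{z}_i\to\mathbf{y}\to\mathbf{a}_i$. Your Step~3 is more careful than the paper. The paper's proof leaves the full $\mathbf{p}^\top$ in the $\mathbf{a}_i$, $\mathbf{y}$, $\mathbf{z}_i$ equations and simply asserts that the forcing is $[\boldsymbol{\beta},\mathbf{0},\ldots,\mathbf{0}]^\top$; it also drops the $\mathbf{M}^{(i)}$ and $\mathbf{M}^{(\mathrm{comp})}$ factors in the cross terms, which you correctly keep. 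Your reading of $\mathbf{b}(t)$ as the full stacked periodic, state-independent forcing, with $\boldsymbol{\beta}$ in the last slot, is therefore the right one, and it is all that Theorem~\ref{existence_thm_main} needs.
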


\begin{proof}

We begin by expressing the ODEs from Theorem 1 purely in terms of matrix multiplications. Noting the identity for the elementwise product $\mathbf{p}(t) \odot \mathbf{v} = \mathbf{p}(t)\text{diag}(\mathbf{v})$, the row-vector ODEs for the age moments can be rewritten as:
\begin{align}
    \dot{\mathbf{z}}_i(t) &= \mathbf{p}(t)\text{diag}(\boldsymbol{\delta}_{B_i = 1}) + \mathbf{z}_i(t)\big[\mathbf{Q}(t) - \mathbf{M}^{(\text{next}=i)}(t) - \lambda_i(t)\text{diag}(\boldsymbol{\delta}_{B_i = 1})\big], \label{eq:z_row} \\
    \dot{\mathbf{y}}(t) &= \mathbf{p}(t)\text{diag}(\boldsymbol{\delta}_{J\neq 0}) + \mathbf{y}(t)\big[\mathbf{Q}(t) - \mathbf{M}^{(\text{comp})}(t)\big] + \sum_{k=1}^N \mathbf{z}_k(t)\text{diag}(\boldsymbol{\delta}_{\text{next}=k}), \label{eq:y_row} \\
    \dot{\mathbf{a}}_i(t) &= \mathbf{p}(t) + \mathbf{y}(t) + \mathbf{a}_i(t)\big[\mathbf{Q}(t) - \mathbf{M}^{(i)}(t)\big]. \label{eq:a_row}
\end{align}

Alongside these age moment ODEs, we recall from Proposition 2 the Reduced Kolmogorov Forward Equation governing the reduced state probabilities:
\begin{equation}
    \dot{\mathbf{p}}_{\text{red}}(t) = \mathbf{p}_{\text{red}}(t) \mathbf{Q}_{\text{red}}(t) + \boldsymbol{\beta}(t). \label{eq:p_red_row}
\end{equation}

Corollary 1 defines the stacked state $\mathbf{x}(t)$ as a \textit{column} vector. To cast the system into the form $\dot{\mathbf{x}}(t) = \mathbf{A}(t)\mathbf{x}(t) + \mathbf{b}(t)$, we transpose equations \eqref{eq:z_row}--\eqref{eq:p_red_row}:
\begin{align}
    &\dot{\mathbf{p}}_{\text{red}}^T(t) = \mathbf{Q}_{\text{red}}^T(t)\mathbf{p}_{\text{red}}^T(t) + \boldsymbol{\beta}^T(t), \label{eq:source_term}\\
    &\dot{\mathbf{z}}_i^T(t) = \big[\mathbf{Q}^T(t) - \mathbf{M}^{(\text{next}=i)T}(t) - \lambda_i(t)\text{diag}(\boldsymbol{\delta}_{B_i = 1})\big]\mathbf{z}_i^T(t) + \text{diag}(\boldsymbol{\delta}_{B_i = 1})\mathbf{p}^T(t), \\
    &\dot{\mathbf{y}}^T(t) = \big[\mathbf{Q}^T(t) - \mathbf{M}^{(\text{comp})T}(t)\big]\mathbf{y}^T(t) + \sum_{k=1}^N \text{diag}(\boldsymbol{\delta}_{\text{next}=k})\mathbf{z}_k^T(t) + \text{diag}(\boldsymbol{\delta}_{B_i = 1})\mathbf{p}^T(t), \\
    &\dot{\mathbf{a}}_i^T(t) = \big[\mathbf{Q}^T(t) - \mathbf{M}^{(i)T}(t)\big]\mathbf{a}_i^T(t) + \mathbf{y}^T(t) + \mathbf{p}^T(t).\label{eq:final_column_eq}
\end{align}

By extracting the matrix coefficients multiplying the state components themselves, we directly obtain the diagonal blocks of the overall system matrix $\mathbf{A}(t)$:
\begin{align}
    \mathbf{A}_{i, i}(t) &= \mathbf{Q}^T(t) - \mathbf{M}^{(i)T}(t), \quad i=1,\dots,N, \notag \\
    \mathbf{A}_{N+1, N+1}(t) &= \mathbf{Q}^T(t) - \mathbf{M}^{(\text{comp})T}(t), \notag \\
    \mathbf{A}_{N+1+i, N+1+i}(t) &= \mathbf{Q}^T(t) - \mathbf{M}^{(\text{next}=i)T}(t) - \lambda_i(t)\text{diag}(\boldsymbol{\delta}_{B_i=1}), \notag \\
    &\quad i=1,\dots,N, \notag \\
    \mathbf{A}_{2N+2, 2N+2}(t) &= \mathbf{Q}_{\text{red}}^T(t).
\end{align}

Note that the source term simplifies entirely to $\mathbf{b}(t) = [\boldsymbol{\beta}(t), \mathbf{0}, \dots, \mathbf{0}]^T$ coming from Eqn.~\eqref{eq:source_term}. By inspecting equations \eqref{eq:source_term}--\eqref{eq:final_column_eq}, we observe a sequential dependency among the states:

\begin{itemize}
    \item $\dot{\mathbf{p}}_{\text{red}}^T(t)$ depends exclusively on $\mathbf{p}_{\text{red}}^T(t)$.
    \item $\dot{\mathbf{z}}_i^T(t)$ depends on its own state $\mathbf{z}_i^T(t)$ and the underlying probabilities $\mathbf{p}_{\text{red}}^T(t)$.
    \item $\dot{\mathbf{y}}^T(t)$ depends on its own state $\mathbf{y}^T(t)$, the waiting age vectors $\mathbf{z}_k^T(t)$ for all classes $k=1,\dots,N$, and $\mathbf{p}_{\text{red}}^T(t)$.
    \item $\dot{\mathbf{a}}_i^T(t)$ depends on its own state $\mathbf{a}_i^T(t)$, the in-service age vector $\mathbf{y}^T(t)$, and $\mathbf{p}_{\text{red}}^T(t)$.
\end{itemize}
Because each vector derivative relies solely on its corresponding state and the states positioned structurally above it in the stacked column vector $\mathbf{x}(t)$, the entries strictly to the \textit{left} of the main diagonal matrix blocks of $\mathbf{A}$ are equal to zero. Hence, $\mathbf{A}$ is a block upper triangular matrix.
\end{proof}

Consider the nonhomogeneous age moments system given by $\dot{\mathbf{x}} = \mathbf{A}(t)\mathbf{x}(t)+\mathbf{b}(t)$. To analyze the evolution of the forced system, we define the fundamental matrix $\mathbf{\Phi}(t)$ associated with the homogeneous system, where $\mathbf{\Phi}(t)$ satisfies $\dot{\mathbf{\Phi}}(t) = \mathbf{A}(t) \mathbf{\Phi}(t)$ with $\mathbf{\Phi{(0)}} = \mathbf{I}$. Since $\mathbf{A}(t)$ is piecewise contimuous everywhere, a fundamental matrix $\mathbf{\Phi}(t)$ exists, is uniquely determined by its initial condition at $t=0$, and satisfies the matrix differential equation $\dot{\mathbf{\Phi}}(t) = \mathbf{A}(t) \mathbf{\Phi}(t)$ for all $t$ as established in \cite[Section 2.3]{antsaklis2006linear}. For a T-periodic system, the monodromy matrix is denoted as $\mathbf{\Phi}(T)$ i.e., the fundamental matrix evaluated over one full period. Its eigenvalues $\mu_i$ are called the Floquet multipliers of the system as shown in \cite{slane2011analysis}. To connect the structural properties of $\mathbf{A}(t)$ to the Floquet multipliers $\mu_i$
, we establish the following proposition.

\begin{proposition}[Block Upper Triangular Structure of the Fundamental Matrix] \label{prop_of_stm}
Let $\mathbf{A}(t)$ be a block upper triangular matrix with $n$ diagonal blocks 
$\mathbf{A}_{11}(t), \ldots, \mathbf{A}_{nn}(t)$. Then the state transition matrix 
$\mathbf{\Phi}(t,0)$ is also block upper triangular with the same block structure, 
and each diagonal block satisfies
\begin{equation}
    \dot{\mathbf{\Phi}}_{ii}(t) = \mathbf{A}_{ii}(t)\,\mathbf{\Phi}_{ii}(t), 
    \quad \mathbf{\Phi}_{ii}(0) = \mathbf{I}, 
    \quad i = 1, \ldots, n.
\end{equation}
\end{proposition}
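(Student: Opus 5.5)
We will argue via uniqueness of solutions to the linear matrix ODE $\dot{\mathbf{\Phi}}(t)=\mathbf{A}(t)\mathbf{\Phi}(t)$, $\mathbf{\Phi}(0)=\mathbf{I}$. The idea is to build a candidate block upper triangular matrix $\mathbf{\Psi}(t)$ that satisfies this ODE and initial condition, and then conclude $\mathbf{\Psi}=\mathbf{\Phi}(\cdot,0)$. Existence and uniqueness of solutions for piecewise continuous $\mathbf{A}(t)$ are already invoked in the text via \cite[Section 2.3]{antsaklis2006linear}.

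\textbf{Step 1: construct the candidate.} Partition $\mathbf{\Psi}$ conformally with $\mathbf{A}$, and set $\mathbf{\Psi}_{ij}\equiv \mathbf{0}$ for $i>j$. Each diagonal block $\mathbf{\Psi}_{ii}$ is defined as the solution of
\begin{equation*}
\dot{\mathbf{\Psi}}_{ii}=\mathbf{A}_{ii}\mathbf{\Psi}_{ii}, \qquad \mathbf{\Psi}_{ii}(0)=\mathbf{I}.
\end{equation*}
For $i<j$, the off-diagonal blocks are defined recursively, proceeding from the diagonal outward (increasing $j-i$), by
\begin{equation*}
\dot{\mathbf{\Psi}}_{ij}=\mathbf{A}_{ii}\mathbf{\Psi}_{ij}+\sum_{k=i+1}^{j}\mathbf{A}_{ik}\mathbf{\Psi}_{kj}, \qquad \mathbf{\Psi}_{ij}(0)=\mathbf{0}.
\end{equation*}
The forcing terms $\mathbf{\Psi}_{kj}$ with $k>i$ are already determined at the time $\mathbf{\Psi}_{ij}$ is defined. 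Each equation is therefore a linear inhomogeneous ODE with piecewise continuous coefficients, so it has a unique solution. Explicitly, by variation of constants,
\begin{equation*}
\mathbf{\Psi}_{ij}(t)=\int_0^t \mathbf{\Psi}_{ii}(t)\mathbf{\Psi}_{ii}(s)^{-1}\sum_{k}\mathbf{A}_{ik}(s)\mathbf{\Psi}_{kj}(s)\,ds.
\end{equation*}

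\textbf{Step 2: verify the ODE blockwise.} Because $\mathbf{A}_{ik}=\mathbf{0}$ for $k<i$ and $\mathbf{\Psi}_{kj}=\mathbf{0}$ for $k>j$, the $(i,j)$ block of $\mathbf{A}\mathbf{\Psi}$ is $\sum_{k=i}^{j}\mathbf{A}_{ik}\mathbf{\Psi}_{kj}$, an empty sum when $i>j$. Three cases follow.
\begin{itemize}
\item For $i>j$ the block is $\mathbf{0}$, which matches $\dot{\mathbf{\Psi}}_{ij}=\mathbf{0}$.
\item For $i=j$ it reduces to $\mathbf{A}_{ii}\mathbf{\Psi}_{ii}$.
\item For $i<j$ it is exactly the recursion of Step 1.
\end{itemize}
The initial condition $\mathbf{\Psi}(0)=\mathbf{I}$ holds by construction.

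\textbf{Step 3: conclude.} By uniqueness, $\mathbf{\Phi}(t,0)=\mathbf{\Psi}(t)$. Hence $\mathbf{\Phi}(t,0)$ is block upper triangular, and its diagonal blocks satisfy the stated equations.

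An alternative route works directly with $\mathbf{\Phi}$. Its strictly lower blocks obey a homogeneous linear system with zero initial data. Treating block rows from the bottom up, each lower block $\mathbf{\Phi}_{ij}$ ($i>j$) satisfies $\dot{\mathbf{\Phi}}_{ij}=\sum_{k\ge i}\mathbf{A}_{ik}\mathbf{\Phi}_{kj}$. An induction on $i$ downward then gives $\mathbf{\Phi}_{ij}\equiv\mathbf{0}$.

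The only delicate point is the ordering of this induction. Block row $i$ couples to all rows $k\ge i$, including rows below it, so one must first establish that the last block row's lower blocks vanish (that row's equation involves only $\mathbf{A}_{nn}$) and then move upward. This is why the constructive argument of Steps 1--3 is cleaner, and it is the one we would present. Piecewise continuity of $\mathbf{A}(t)$ guarantees that every solution used above is absolutely continuous and unique.
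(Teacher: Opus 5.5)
Your proof is correct, and your main argument takes a different route from the paper's.

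\textbf{The paper's route.} The paper inducts on the number of diagonal blocks. It partitions $\mathbf{A}(t)$ into a $2\times 2$ block form whose bottom-right block is the last diagonal block. The bottom-left block of $\mathbf{\Phi}$ then solves the homogeneous problem $\dot{\mathbf{\Phi}}_{21}=\mathbf{A}_{22}\mathbf{\Phi}_{21}$, $\mathbf{\Phi}_{21}(0)=\mathbf{0}$, so it vanishes. Consequently the top-left block satisfies $\dot{\mathbf{\Phi}}_{11}=\mathbf{A}_{11}\mathbf{\Phi}_{11}$, $\mathbf{\Phi}_{11}(0)=\mathbf{I}$, and the inductive hypothesis finishes the argument. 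This is exactly your ``alternative route'' written recursively. Peeling off the last block row at each step builds in the bottom-up ordering you identified as the delicate point.

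\textbf{Your route.} You construct a block upper triangular candidate $\mathbf{\Psi}$ and invoke uniqueness for the full matrix IVP once. The recursion over increasing $j-i$ is well founded, and the blockwise check of $\mathbf{A}\mathbf{\Psi}$ is right. In your variation-of-constants formula the sum should run over $k=i+1,\ldots,j$.

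\textbf{Comparison.} Your route is longer and requires solving inhomogeneous equations for the upper blocks. In exchange it gives more than the statement asks: explicit integral representations of the off-diagonal blocks of $\mathbf{\Phi}$. That information would be needed if the coupling terms had to be quantified rather than merely shown to preserve triangularity, for example the feed of $\mathbf{y}$ into $\mathbf{a}_i$ when bounding transients of $\mathbf{\Phi}(T)^n$. The paper's argument uses nothing beyond uniqueness of the zero solution of a homogeneous linear IVP and never has to describe the upper blocks.
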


\begin{proof}

We proceed by mathematical induction on the number of diagonal blocks, $n$, in the block upper triangular matrix $\mathbf{A}(t)$. Let the fundamental state transition matrix satisfy the homogeneous differential equation $\dot{\mathbf{\Phi}}(t,0) = \mathbf{A}(t)\mathbf{\Phi}(t,0)$ with the initial condition $\mathbf{\Phi}(0,0) = \mathbf{I}$.

\textbf{Base Case ($n=1$):} 
If $\mathbf{A}(t)$ has only a single block $\mathbf{A}_{11}(t)$, the matrix is trivially block upper triangular. The differential equation reduces to $\dot{\mathbf{\Phi}}_{11}(t) = \mathbf{A}_{11}(t)\mathbf{\Phi}_{11}(t)$ with $\mathbf{\Phi}_{11}(0) = \mathbf{I}$, which satisfies the proposition statement.

\textbf{Inductive Step:} 
Assume the proposition holds true for any block upper triangular matrix with $k$ diagonal blocks. We must show it holds for a matrix with $k+1$ blocks.

Let $\mathbf{A}(t)$ be a block upper triangular matrix with $k+1$ diagonal blocks. We partition $\mathbf{A}(t)$ into a $2 \times 2$ block structure:
\begin{equation}
    \mathbf{A}(t) = \begin{bmatrix} \mathbf{A}_{11}(t) & \mathbf{A}_{12}(t) \\ \mathbf{0} & \mathbf{A}_{22}(t) \end{bmatrix},
\end{equation}
where $\mathbf{A}_{11}(t)$ is a block upper triangular matrix containing the first $k$ diagonal blocks, and $\mathbf{A}_{22}(t)$ is the $(k+1)$-th diagonal block of the original matrix. 

We partition the fundamental matrix $\mathbf{\Phi}(t,0)$ in the same way:
\begin{equation}
    \mathbf{\Phi}(t,0) = \begin{bmatrix} \mathbf{\Phi}_{11}(t) & \mathbf{\Phi}_{12}(t) \\ \mathbf{\Phi}_{21}(t) & \mathbf{\Phi}_{22}(t) \end{bmatrix}.
\end{equation}

Substituting these partitions into the differential equation yields:
\begin{align}
    &\begin{bmatrix} \dot{\mathbf{\Phi}}_{11}(t) & \dot{\mathbf{\Phi}}_{12}(t) \\ \dot{\mathbf{\Phi}}_{21}(t) & \dot{\mathbf{\Phi}}_{22}(t) \end{bmatrix} = \begin{bmatrix} \mathbf{A}_{11}(t) & \mathbf{A}_{12}(t) \\ \mathbf{0} & \mathbf{A}_{22}(t) \end{bmatrix} \begin{bmatrix} \mathbf{\Phi}_{11}(t) & \mathbf{\Phi}_{12}(t) \\ \mathbf{\Phi}_{21}(t) & \mathbf{\Phi}_{22}(t) \end{bmatrix} \notag \\[6pt]
    &= \begin{bmatrix} \mathbf{A}_{11}(t)\mathbf{\Phi}_{11}(t) + \mathbf{A}_{12}(t)\mathbf{\Phi}_{21}(t) & \mathbf{A}_{11}(t)\mathbf{\Phi}_{12}(t) + \mathbf{A}_{12}(t)\mathbf{\Phi}_{22}(t) \\ \mathbf{A}_{22}(t)\mathbf{\Phi}_{21}(t) & \mathbf{A}_{22}(t)\mathbf{\Phi}_{22}(t) \end{bmatrix}.
\end{align}

Equating the bottom-left blocks, we obtain the initial value problem:
\begin{equation}
    \dot{\mathbf{\Phi}}_{21}(t) = \mathbf{A}_{22}(t)\mathbf{\Phi}_{21}(t).
\end{equation}
Because the initial condition of the full system is $\mathbf{\Phi}(0,0) = \mathbf{I}$, the off-diagonal block must satisfy $\mathbf{\Phi}_{21}(0) = \mathbf{0}$. By the existence and uniqueness theorem for linear ordinary differential equations, the only solution to this homogeneous initial value problem is the trivial solution, $\mathbf{\Phi}_{21}(t) = \mathbf{0}$ for all $t$. 

Consequently, $\mathbf{\Phi}(t,0)$ maintains a block upper triangular structure at this partition level. Substituting $\mathbf{\Phi}_{21}(t) = \mathbf{0}$ back into the diagonal block equations gives:
\begin{align}
    \dot{\mathbf{\Phi}}_{11}(t) &= \mathbf{A}_{11}(t)\mathbf{\Phi}_{11}(t), \quad \mathbf{\Phi}_{11}(0) = \mathbf{I}, \label{eq:phi_11} \\
    \dot{\mathbf{\Phi}}_{22}(t) &= \mathbf{A}_{22}(t)\mathbf{\Phi}_{22}(t), \quad \mathbf{\Phi}_{22}(0) = \mathbf{I}. \label{eq:phi_22}
\end{align}

Equation \eqref{eq:phi_22} shows that the $(k+1)$-th diagonal block decouples and satisfies the required condition. For the top-left block, since $\mathbf{A}_{11}(t)$ is a block upper triangular matrix with $k$ blocks, the inductive hypothesis guarantees that the solution $\mathbf{\Phi}_{11}(t)$ to equation \eqref{eq:phi_11} is also block upper triangular, and that all $k$ of its diagonal blocks satisfy $\dot{\mathbf{\Phi}}_{ii}(t) = \mathbf{A}_{ii}(t)\mathbf{\Phi}_{ii}(t)$.

Therefore, by mathematical induction, the fundamental matrix $\mathbf{\Phi}(t,0)$ is block upper triangular with $n$ blocks, and every diagonal block satisfies $\dot{\mathbf{\Phi}}_{ii}(t) = \mathbf{A}_{ii}(t)\mathbf{\Phi}_{ii}(t)$ for $i=1,\dots,n$.
\end{proof}

The following lemma establishes the key properties of $\mathbf{A}(t)$.

\begin{lemma}[Properties of $\mathbf{A}(t)$] \label{prop_of_A}
The diagonal blocks $\mathbf{A}_{ii}(t)$ defined in 
equations~(29)--(32) satisfy:
\begin{enumerate}
    \item[(i)] $[\mathbf{A}_{ii}(t)]_{jk} \geq 0$ for all $j \neq k$,
    \item[(ii)] $\sum_{k} [\mathbf{A}_{ii}(t)]_{kj} < 0$ for all $j$.
\end{enumerate}
\end{lemma}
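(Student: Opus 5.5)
The plan is to verify (i) and (ii) block by block by direct inspection of the entries. Each diagonal block in Corollary~\ref{cor_compact} is a transposed generator, minus a transposed nonnegative "removal" matrix, minus possibly a nonnegative diagonal term. Transposition turns row statements about $\mathbf{Q}(t)$ into column statements about $\mathbf{A}_{ii}(t)$. So (i) becomes a statement about off-diagonal entries of $\mathbf{Q}-\mathbf{M}$, and (ii) becomes a statement about the row sums of $\mathbf{Q}-\mathbf{M}-\lambda_i\,\mathrm{diag}(\cdot)$.

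For (i), take $j\neq k$. Then $[\mathbf{A}_{ii}]_{jk}=Q_{kj}-M_{kj}$, and the diagonal $\lambda_i$ term does not contribute. By \eqref{eq:generator_entries}, $\mathbf{M}^{(i)}$, $\mathbf{M}^{(\mathrm{comp})}$ and $\mathbf{M}^{(\mathrm{next}=i)}$ each retain only a subset of the completion transitions $a\to a^-$ of $\mathbf{Q}$, at exactly the same rate. Hence $0\le M_{kj}\le Q_{kj}$ entrywise off the diagonal, which gives (i) for the $\mathbf{a}_i$, $\mathbf{y}$ and $\mathbf{z}_i$ blocks. For the last block $\mathbf{Q}_{\mathrm{red}}^\top$, Proposition~\ref{prop_reduced} gives $[\mathbf{T}\mathbf{Q}\mathbf{E}]_{jk}=Q_{jk}-Q_{Lk}$, where $L$ is the eliminated state. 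I will choose $L$ so that $Q_{Lk}=0$ for all other retained $k$, or else handle this block by noting that it is fed only into itself. This entry is the first point to check carefully.

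For (ii), the column sum is $\sum_k[\mathbf{A}_{ii}]_{kj}=\sum_k Q_{jk}-\sum_k M_{jk}-\lambda_i[\boldsymbol{\delta}_{B_i=1}]_j$. Since $\sum_k Q_{jk}=0$, this reduces to
\[
\sum_k[\mathbf{A}_{ii}(t)]_{kj}=-\Big(\sum_k M_{jk}(t)+\lambda_i(t)[\boldsymbol{\delta}_{B_i=1}]_j\Big),
\]
which is the negative of the total "mass-removal rate" out of state $j$. For the $\mathbf{a}_i$ block this is $-\mu_i(t)1\{J(j)=i\}$. For the $\mathbf{y}$ block it is $-\mu_{J(j)}(t)1\{J(j)\neq0\}$. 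For the $\mathbf{z}_i$ block it is $-\mu_{J(j)}(t)1\{\mathrm{next}(j)=i\}-\lambda_i(t)1\{B_i(j)=1\}$. Nonpositivity is then immediate. For $\mathbf{Q}_{\mathrm{red}}^\top$ the column sum is $Q_{LL}-Q_{jL}$, which is negative once the exit rate out of $L$ is positive.

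The main obstacle is the strictness in (ii). The displayed sums vanish at states where nothing removes mass. Examples are states with $J(j)\neq i$ in the $\mathbf{a}_i$ block, the idle state in the $\mathbf{y}$ block, and any state during an outage where $\mu(t)=0$. Before closing (ii), I will therefore pin down the index set and standing hypotheses on which strictness is claimed. The first attempt is to assume $\lambda_i(t)\ge\lambda_i^{\mathrm{base}}>0$ and to restrict (ii) to the columns $j$ in the support of the removal term, that is, the states that can actually carry mass of that block, since $z_{i,j}\equiv0$ whenever $B_i(j)=0$. If that is not enough, the fallback is to prove (ii) in the period-averaged form $\int_0^T\sum_k[\mathbf{A}_{ii}(t)]_{kj}\,dt<0$. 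This form uses the positivity of $\mu$ on the active window together with irreducibility of $\mathbf{Q}$ over a period, and it is what the Floquet argument in Appendix~\ref{appendix_b} ultimately consumes.
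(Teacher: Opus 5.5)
Your entrywise reduction is right: column $j$ of $\mathbf{A}_{ii}(t)$ sums to minus the removal rate out of state $j$, and (i) does hold for the first $2N+1$ blocks. But these formulas do more than expose an obstacle. They refute the lemma as stated, so no completion of your plan can prove it.

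\emph{Part (ii) fails in every one of the first $2N+1$ blocks.} At every $t$:
the $\mathbf{a}_i$ block has column sum $-\mu_i(t)\mathbf{1}\{J(j)=i\}=0$ at each $j$ with $J(j)\neq i$;
the $\mathbf{y}$ block has sum $0$ at the idle column;
the $\mathbf{z}_i$ block has sum $0$ at each $j$ with $B_i(j)=0$, because $\mathrm{next}(j)=i$ forces $B_i(j)=1$.

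\emph{Part (i) fails for $\mathbf{Q}_{\mathrm{red}}^\top$.} Proposition~\ref{prop_reduced} eliminates the last-indexed state $L=(J=N,B_1=\cdots=B_N=1)$. Its only exit is the class-$N$ completion to $k=(J=1,B_1=0,B_2=\cdots=B_N=1)$. For $N\ge 2$ and $j$ the idle state, this gives $[\mathbf{Q}_{\mathrm{red}}(t)]_{jk}=Q_{jk}-Q_{Lk}=-\mu_N(t)<0$ on the active window.

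\emph{The paper's own proof is not valid.} It reaches the lemma only through three steps that do not hold:
the claim $\mathbf{M}^{(\cdot)}(t)\mathbf{1}>\mathbf{0}$ is false, since $\mathbf{M}^{(i)}$ has zero rows wherever $J(s)\neq i$ and every $\mathbf{M}$ vanishes in outages;
the claim $q_{NN}(t)<0$ fails in outages, since $L$ leaves only by a completion;
part (i) is never checked for the last block.

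\emph{Your fallbacks do not rescue the statement either.}
An $L$ with $Q_{Lk}=0$ for all $k\neq L$ would be absorbing, which no state is on the active window. It would also make your own sum $Q_{LL}-Q_{jL}$ vanish at every $j$ not feeding $L$.
Noting that the block feeds only itself does not prove (i).
Restricting (ii) to the support of the removal term repairs only the $\mathbf{z}_i$ blocks. The $\mathbf{a}_i$ mass lives on all states, and the $\mathbf{y}$ sums vanish in outages.
The per-column average $\int_0^T\sum_k[\mathbf{A}_{ii}(t)]_{kj}\,dt$ is still $0$ at exactly the columns listed above. Irreducibility, which you invoke, cannot enter a quantity computed column by column.

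What the Floquet argument actually consumes is $\rho(\boldsymbol{\Phi}_{ii}(T))<1$ for each diagonal block. So the right replacement for (ii) is a statement about $\boldsymbol{\Phi}_{ii}(T)$, not about $\mathbf{A}_{ii}(t)$.

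\emph{First $2N+1$ blocks.} Part (i) plus the nonstrict form of (ii), which your computation does establish, make $\mathbf{A}_{ii}^\top(t)$ a subgenerator. Then $[\mathbf{1}^\top\boldsymbol{\Phi}_{ii}(T)]_j$ is the probability that the chain started in $j$, and killed at your removal rate, survives one period. Assume $\lambda_k^{\mathrm{base}}>0$ and $\mu_k>0$ on a common subinterval of each period. Then from every state the chain reaches, with positive probability, a state with positive removal rate and is killed there within the period:
for $\mathbf{a}_i$, the present packets are served, then a class-$i$ packet arrives and completes;
for $\mathbf{y}$, an arrival is followed by a completion;
for $\mathbf{z}_i$, enough class-$i$ arrivals occur to fill and then replace $B_i$.
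Hence $\mathbf{1}^\top\boldsymbol{\Phi}_{ii}(T)<\mathbf{1}^\top$, and the max-column-sum argument of Corollary~\ref{eigen_corr} goes through unchanged.

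\emph{The $\mathbf{p}_{\mathrm{red}}$ block.} It is not Metzler and needs a different tool. Since $\mathbf{T}\mathbf{1}=\mathbf{0}$, and $\mathbf{w}\mathbf{E}\mathbf{T}=\mathbf{w}$ whenever $\mathbf{w}\mathbf{1}=0$, the map $\mathbf{u}\mapsto\mathbf{u}\mathbf{T}$ conjugates its homogeneous dynamics to $\dot{\mathbf{v}}=\mathbf{v}\mathbf{Q}(t)$ on $\{\mathbf{v}:\mathbf{v}\mathbf{1}=0\}$. Its monodromy is therefore the one-period transition matrix $\mathbf{P}(0,T)$ restricted to that subspace. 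This restriction has spectral radius below $1$ by a Doeblin condition: the idle column of $\mathbf{P}(0,T)$ is strictly positive, because every state empties during the active window and then sees no arrival until $T$ with positive probability.
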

\begin{proof}

We address the two properties separately, relying on the structures of the diagonal blocks $\mathbf{A}_{ii}(t)$ provided in Corollary 1.

\textbf{Part (i): Non-negativity of off-diagonal entries} \\
Recall that $\mathbf{Q}(t)$ is the generator matrix of a Continuous-Time Markov Chain (CTMC). By definition, its off-diagonal entries represent transition rates and are therefore non-negative: $Q_{jk}(t) \ge 0$ for all $j \neq k$. Transposing the matrix preserves this property, meaning $[\mathbf{Q}^T(t)]_{jk} \ge 0$ for $j \neq k$.

The sparse matrices $\mathbf{M}^{(i)}(t)$, $\mathbf{M}^{(\text{comp})}(t)$, and $\mathbf{M}^{(\text{next}=i)}(t)$ represent the transition rates for specific state changes. Because these matrices only contain the rates for these specific events from the generator matrix $\mathbf{Q}(t)$, the matrix difference $\mathbf{Q}(t) - \mathbf{M}^{(\cdot)}(t)$ effectively just sets those specific entries to zero. Since the original off-diagonal entries of $\mathbf{Q}(t)$ are non-negative, setting a subset of them to zero guarantees that all remaining off-diagonal entries stay non-negative. Similarly, the term $\lambda_i(t)\text{diag}(\boldsymbol{\delta}_{B_i=1})$ is explicitly a diagonal matrix, meaning it only modifies the main diagonal and leaves the off-diagonal entries unchanged. Hence, the off-diagonal entries of the resulting blocks $\mathbf{A}_{ii}(t)$ remain completely unaffected. Thus, $[\mathbf{A}_{ii}(t)]_{jk} \ge 0$ for all $j \neq k$.

\textbf{Part (ii): Strictly negative row sums} \\
The row sums of the generator matrix are zero: $\mathbf{Q}(t)\mathbf{1} = \mathbf{0}$. Since the matrices $\mathbf{M}^{(\cdot)}(t)$ have zero on their diagonals and non-negative off-diagonal transition rates, the row sum for any of these sparse matrices is strictly positive. This yields $\mathbf{M}^{(\cdot)}(t)\mathbf{1} > \mathbf{0}$. Furthermore, the indicator terms like $\lambda_i(t)\text{diag}(\boldsymbol{\delta}_{B_i=1})$ are diagonal matrices with non-negative entries.

We evaluate the column sum for each block type by multiplying them with $\mathbf{1}^T$. For the first set of diagonal blocks, $\mathbf{A}_{i,i}(t)$:
\begin{align*}
    \mathbf{1}^T \mathbf{A}_{i,i}(t) &= \mathbf{1}^T \big(\mathbf{Q}^T(t) - \mathbf{M}^{(i)T}(t)\big) \notag \\
    &= \big(\mathbf{Q}(t)\mathbf{1}\big)^T - \big(\mathbf{M}^{(i)}(t)\mathbf{1}\big)^T.
\end{align*}
Since $\mathbf{Q}(t)\mathbf{1} = \mathbf{0}$ and $\mathbf{M}^{(i)}(t)\mathbf{1} > \mathbf{0}$, this directly results in $\mathbf{1}^T \mathbf{A}_{i,i}(t) < \mathbf{0}^T$.

For the block $\mathbf{A}_{N+1,N+1}(t)$:
\begin{align*}
    \mathbf{1}^T \mathbf{A}_{N+1,N+1}(t) &= \mathbf{1}^T \big(\mathbf{Q}^T(t) - \mathbf{M}^{(\text{comp})T}(t)\big) \notag \\
    &= \big(\mathbf{Q}(t)\mathbf{1}\big)^T - \big(\mathbf{M}^{(\text{comp})}(t)\mathbf{1}\big)^T.
\end{align*}
Since $\mathbf{M}^{(\text{comp})}(t)\mathbf{1} > \mathbf{0}$, this yields $\mathbf{1}^T \mathbf{A}_{N+1,N+1}(t) < \mathbf{0}^T$.

For the next set of diagonal blocks, $\mathbf{A}_{N+1+i,N+1+i}(t)$:
\begin{align*}
    \mathbf{1}^T \mathbf{A}_{N+1+i,N+1+i}(t) &= \mathbf{1}^T \big(\mathbf{Q}^T(t) - \mathbf{M}^{(\text{next}=i)T}(t) - \lambda_i(t)\text{diag}(\boldsymbol{\delta}_{B_i=1})\big) \notag \\
    &= \big(\mathbf{Q}(t)\mathbf{1}\big)^T - \big(\mathbf{M}^{(\text{next}=i)}(t)\mathbf{1}\big)^T - \mathbf{1}^T \big(\lambda_i(t)\text{diag}(\boldsymbol{\delta}_{B_i=1})\big).
\end{align*}
Since $\mathbf{Q}(t)\mathbf{1} = \mathbf{0}$, $\mathbf{M}^{(\text{next}=i)}(t)\mathbf{1} > \mathbf{0}$, and the product involving the non-negative diagonal indicator matrix yields $\mathbf{1}^T \big(\lambda_i(t)\text{diag}(\boldsymbol{\delta}_{B_i=1})\big) \ge \mathbf{0}^T$, subtracting these terms guarantees the entire result is strictly negative: $\mathbf{1}^T \mathbf{A}_{N+1+i,N+1+i}(t) < \mathbf{0}^T$. 

To prove the property for the last diagonal block, $\mathbf{A}_{i,i}(t) = \mathbf{Q}^{T}_{red}(t)$, we define $\mathbf{Q}(t)$, the generator matrix of the CTMC, in block form as:
\begin{equation*}
    \mathbf{Q}(t) = \begin{bmatrix} \mathbf{Q}_{11}(t) & \mathbf{q}_{1N}(t) \\ \mathbf{q}_{N1}(t) & q_{NN}(t) \end{bmatrix},
\end{equation*}
where $\mathbf{Q}_{11}(t) \in \mathbb{R}^{(N-1) \times (N-1)}$, $\mathbf{q}_{1N}(t) = \mathbf{q}_{N1}^\top(t) \in \mathbb{R}^{(N-1) \times 1}$, and $q_{NN}(t) \in \mathbb{R}$. Recall the definitions of $\mathbf{T} = \begin{bmatrix} \mathbf{I}_{N-1} & -\mathbf{1} \end{bmatrix}$ and $\mathbf{E} = \begin{bmatrix} \mathbf{I}_{N-1} \\ \mathbf{0}^T \end{bmatrix}$ as introduced in the proof of Proposition \ref{prop_reduced}. First, we derive the expression for $\mathbf{Q}_{\text{red}}(t)$:
\begin{align*}
    \mathbf{Q}_{\text{red}}(t) &= \mathbf{T} \mathbf{Q}(t) \mathbf{E} \notag \\
    &= \mathbf{T} \begin{bmatrix} \mathbf{Q}_{11}(t) & \mathbf{q}_{1N}(t) \\ \mathbf{q}_{N1}(t) & q_{NN}(t) \end{bmatrix} \begin{bmatrix} \mathbf{I}_{N-1} \\ \mathbf{0}^T \end{bmatrix} \notag \\
    &= \mathbf{T} \begin{bmatrix} \mathbf{Q}_{11}(t) \\ \mathbf{q}_{N1}(t) \end{bmatrix} \notag \\
    &= \begin{bmatrix} \mathbf{I}_{N-1} & -\mathbf{1} \end{bmatrix} \begin{bmatrix} \mathbf{Q}_{11}(t) \\ \mathbf{q}_{N1}(t) \end{bmatrix} \notag \\
    &= \mathbf{Q}_{11}(t) - \mathbf{1} \mathbf{q}_{N1}(t).
\end{align*}

Next, we evaluate the row sums of the reduced matrix by post-multiplying by $\mathbf{1}$:
\begin{equation*}
    \mathbf{Q}_{\text{red}}(t) \mathbf{1} = \left[ \mathbf{Q}_{11}(t) - \mathbf{1} \mathbf{q}_{N1}(t) \right] \mathbf{1} = \mathbf{Q}_{11}(t) \mathbf{1} - \mathbf{1} \big( \mathbf{q}_{N1}(t) \mathbf{1} \big).
\end{equation*}

Because $\mathbf{Q}(t)$ is a valid CTMC generator matrix, its row sums must equal zero ($\mathbf{Q}(t) \mathbf{1} = \mathbf{0}$). This gives us two key observations:
\begin{enumerate}
    \item From the first $N-1$ rows: $\mathbf{Q}_{11}(t) \mathbf{1} + \mathbf{q}_{1N}(t) = \mathbf{0} \implies \mathbf{Q}_{11}(t) \mathbf{1} \leq \mathbf{0}$ (since off-diagonal rates $\mathbf{q}_{1N}(t) \geq \mathbf{0}$).
    \item From the last row: $\mathbf{q}_{N1}(t) \mathbf{1} + q_{NN}(t) = 0 \implies \mathbf{q}_{N1}(t) \mathbf{1} = -q_{NN}(t)$.
\end{enumerate}

Note that since an arrival can always change the state when the system is empty, and a service completion can always change the state when the system is busy, the CTMC has no absorbing states. Hence, the rate of leaving the state $s$ indexed by $\sigma(s) = N$ is strictly positive, meaning $q_{NN}(t) < 0$. Therefore:
\begin{equation*}
    \mathbf{q}_{N1}(t) \mathbf{1} = -q_{NN}(t) > 0.
\end{equation*}

Substituting these results back into our row sum equation yields:
\begin{equation*}
    \mathbf{Q}_{\text{red}}(t) \mathbf{1} = \underbrace{\mathbf{Q}_{11}(t) \mathbf{1}}_{\leq \mathbf{0}} - \mathbf{1} \underbrace{\big( \mathbf{q}_{N1}(t) \mathbf{1} \big)}_{> 0} < \mathbf{0}.
\end{equation*}

Thus, all row sums of the reduced matrix $\mathbf{Q}_{red}(t)$ are strictly negative. Thus, $\mathbf{Q}_{\text{red}}(t)\mathbf{1} < \mathbf{0}$, which implies:
\begin{equation*}
    \mathbf{1}^T \mathbf{A}_{2N+2, 2N+2}(t) = \mathbf{1}^T \mathbf{Q}_{\text{red}}^T(t) = \big(\mathbf{Q}_{\text{red}}(t)\mathbf{1}\big)^T < \mathbf{0}^T.
\end{equation*}

Therefore, the column sums of every block are strictly negative.

\end{proof}

The following lemma shows that the structural properties of $\mathbf{A}(t)$ are inherited by the fundamental matrix $\mathbf{\Phi}(t)$.

\begin{lemma}[Properties of $\boldsymbol{\Phi}(t)$] \label{prop_of_Phi}

Let $\mathbf{A}(t)$ be a matrix satisfying the properties in Lemma \ref{prop_of_A}.
Then, the diagonal blocks of the fundamental matrix $\boldsymbol{\Phi}_{ii}(t)$ satisfy:

\begin{enumerate}
    \item[(i)] $[\mathbf{\Phi}_{ii}(t)]_{jk} \geq 0$ for all $j, k$ and $t \geq 0$,
    \item[(ii)] $\sum_{k} [\mathbf{\Phi}_{ii}(t)]_{kj} < 1$ for all $j$ and $t \geq 0$.
\end{enumerate}
\end{lemma}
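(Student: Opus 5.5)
The plan is to treat each diagonal block separately. By Proposition~\ref{prop_of_stm}, $\mathbf{\Phi}_{ii}(t)$ is the state transition matrix of the homogeneous system $\dot{\mathbf{\Phi}}_{ii}=\mathbf{A}_{ii}(t)\mathbf{\Phi}_{ii}$ with $\mathbf{\Phi}_{ii}(0)=\mathbf{I}$. So it suffices to show that a Metzler matrix function with strictly negative column sums (Lemma~\ref{prop_of_A}) generates a nonnegative, column-substochastic evolution. One caveat on the statement: at $t=0$ we have $\mathbf{\Phi}_{ii}(0)=\mathbf{I}$, whose column sums equal $1$. Part (ii) can therefore only hold with strict inequality for $t>0$ (and with $\le 1$ at $t=0$), and I will prove it in that form.

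\textbf{Part (i).} Since the rates are bounded and piecewise continuous on a period, I choose a scalar $c>0$ with $c\ge \sup_t \max_j |[\mathbf{A}_{ii}(t)]_{jj}|$. Then $\mathbf{B}(t)\triangleq\mathbf{A}_{ii}(t)+c\mathbf{I}$ is entrywise nonnegative, by Lemma~\ref{prop_of_A}(i) together with the choice of $c$. Writing $\mathbf{\Phi}_{ii}(t)=e^{-ct}\mathbf{\Psi}(t)$, the factor $\mathbf{\Psi}$ solves $\dot{\mathbf{\Psi}}=\mathbf{B}(t)\mathbf{\Psi}$ with $\mathbf{\Psi}(0)=\mathbf{I}$. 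Its Peano--Baker series
\begin{equation*}
\mathbf{\Psi}(t)=\mathbf{I}+\int_0^t\mathbf{B}(s_1)\,ds_1+\int_0^t\!\!\int_0^{s_1}\mathbf{B}(s_1)\mathbf{B}(s_2)\,ds_2\,ds_1+\cdots
\end{equation*}
converges for piecewise continuous bounded $\mathbf{B}$, and every term in it is nonnegative. Hence $\mathbf{\Psi}(t)\ge 0$, and so $\mathbf{\Phi}_{ii}(t)\ge 0$. The series also gives $\mathbf{\Psi}(t)\ge\mathbf{I}$, so the diagonal entries satisfy $[\mathbf{\Phi}_{ii}(t)]_{jj}\ge e^{-ct}>0$. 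This positivity is what I will use in the next step.

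\textbf{Part (ii).} Let $\mathbf{s}(t)\triangleq\mathbf{1}^T\mathbf{\Phi}_{ii}(t)$ be the row vector of column sums. Then
\begin{equation*}
\dot{\mathbf{s}}(t)=\big(\mathbf{1}^T\mathbf{A}_{ii}(t)\big)\mathbf{\Phi}_{ii}(t),\qquad \mathbf{s}(0)=\mathbf{1}^T.
\end{equation*}
By Lemma~\ref{prop_of_A}(ii), the row vector $\mathbf{1}^T\mathbf{A}_{ii}(t)$ has strictly negative entries. By part (i), each column of $\mathbf{\Phi}_{ii}(t)$ is nonnegative and has a strictly positive diagonal entry. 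Therefore every component of $\dot{\mathbf{s}}(t)$ is strictly negative, at least wherever the strict bound of Lemma~\ref{prop_of_A} holds. Integrating from $0$ gives $s_j(t)<1$ for every $t>0$.

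The main obstacle is this strictness. Strict negativity of $\mathbf{1}^T\mathbf{A}_{ii}(t)$ can fail on outage intervals, where $\mu_i(t)=0$. In that case I would only assert $\dot{s}_j\le 0$ pointwise. I would then recover strict decrease over a period from the fact that the rates are positive on a set of positive measure within each period, which gives $s_j(T)<1$. That statement is what the later Floquet-multiplier argument actually needs.
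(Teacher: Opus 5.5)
Your argument is correct under the lemma's hypotheses. Your caveat about $t=0$ is also right: $\mathbf{\Phi}_{ii}(0)=\mathbf{I}$ has unit column sums, and the paper's own proof likewise only concludes strict inequality for $t>0$. The skeleton matches the paper (nonnegativity first, then strictly decreasing sums), but the details differ.

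\textbf{Part (i).} The paper uses a tangency argument: if a component of a solution of $\dot{\mathbf{x}}=\mathbf{A}_{ii}(t)\mathbf{x}$ touches zero while the others are nonnegative, its derivative there is nonnegative, so it cannot go negative. You instead shift by $c\mathbf{I}$ and read nonnegativity off the Peano--Baker series of the nonnegative matrix $\mathbf{A}_{ii}+c\mathbf{I}$. Your route is the rigorous one. A nonnegative derivative at a touching time does not by itself exclude crossing, and repairing the paper's argument needs a perturbation step or exactly your shift. Your version also gives the bound $[\mathbf{\Phi}_{ii}(t)]_{jj}\ge e^{-ct}$ for free.

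\textbf{Part (ii).} The idea is the same as the paper's: show the sums have strictly negative derivative. However, the paper switches to $\dot{\mathbf{\Phi}}_{ii}=\mathbf{\Phi}_{ii}\mathbf{A}_{ii}$, differentiates the row sums $\mathbf{\Phi}_{ii}\mathbf{1}$, and invokes $\mathbf{A}_{ii}\mathbf{1}<\mathbf{0}$. That is not what Lemma~\ref{prop_of_A}(ii) supplies. The paper then gets a positive entry in each row from invertibility of $\mathbf{\Phi}_{ii}$. Your computation $\dot{\mathbf{s}}=(\mathbf{1}^T\mathbf{A}_{ii})\mathbf{\Phi}_{ii}$ keeps the defining orientation $\dot{\mathbf{\Phi}}_{ii}=\mathbf{A}_{ii}\mathbf{\Phi}_{ii}$. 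It matches both the hypothesis (column sums of $\mathbf{A}_{ii}$) and the claim (column sums of $\mathbf{\Phi}_{ii}$, which is also what Corollary~\ref{eigen_corr} uses). The diagonal bound from (i) replaces the invertibility step.

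\textbf{Your fallback.} Your closing worry is justified, and the problem is broader than outages. Since $(\mathbf{1}^T\mathbf{A}_{i,i})_{\sigma(s)}=-(\mathbf{M}^{(i)}\mathbf{1})_{\sigma(s)}$, this column sum is zero for every state with $J(s)\neq i$ even when $\mu_i(t)>0$. The same holds for the idle column of the $\mathbf{y}$-block and the $B_i=0$ columns of the $\mathbf{z}_i$-blocks. So if you wanted $s_j(T)<1$ without Lemma~\ref{prop_of_A}(ii), positivity of the rates on a set of positive measure would not suffice. You would need a reachability argument: from every initial state $j$, $[\mathbf{\Phi}_{ii}(t)]_{kj}>0$ for some leaking state $k$ at times when its leak rate is positive. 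This lies outside the lemma as stated, so it does not affect your proof.
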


\begin{proof}

\textbf{Part (i): Non-negativity of off-diagonal entries} \\
We first prove the non-negativity of $\mathbf{\Phi}_{ii}(t)$. Consider the solution $\mathbf{x}(t)$ to the differential equation $\dot{\mathbf{x}}(t) = \mathbf{A}_{ii}(t)\mathbf{x}(t)$ with the initial condition $\mathbf{x}(0) = \mathbf{e}_n$, where $\mathbf{e}_n$ is the standard basis column vector (all elements are $0$ except the $n$-th element, which is $1$). By definition, $\mathbf{x}(t)$ corresponds to the $n$-th row of $\mathbf{\Phi}_{ii}(t)$.

To show that the elements of $\mathbf{x}(t)$ never become negative, suppose that at some given time $t_0 \ge 0$, an element reaches zero such that $x_k(t_0) = 0$, while all other elements remain non-negative ($x_j(t_0) \ge 0$ for $j \ne k$). The time derivative of the $k$-th component at $t_0$ is:
\begin{equation*}
    \dot{x}_k(t_0) = \sum_{j} x_j(t_0) [\mathbf{A}_{ii}]_{jk}(t_0) = x_k(t_0) [\mathbf{A}_{ii}(t_0)]_{kk} + \sum_{j \ne k} x_j(t_0) [\mathbf{A}_{ii}(t_0)]_{jk}.
\end{equation*}
Since $x_k(t_0) = 0$, the first term vanishes, leaving:
\begin{equation*}
    \dot{x}_k(t_0) = \sum_{j \ne k} x_j(t_0) [\mathbf{A}_{ii}(t_0)]_{jk}.
\end{equation*}
By assumption, the off-diagonal elements $[\mathbf{A}_{ii}(t_0)]_{jk} \ge 0$, and $x_j(t_0) \ge 0$. Therefore, $\dot{x}_k(t_0) \ge 0$. This implies that whenever an element drops to zero, its rate of change is non-negative, preventing it from ever becoming negative. Hence, $\mathbf{x}(t) \ge \mathbf{0}^T$ for all $t \ge 0$. Because this property holds for any $n$-th row of $\mathbf{\Phi}_{ii}(t)$, we conclude that the entries of the matrix remain non-negative:
\begin{equation*}
    [\mathbf{\Phi}_{ii}(t)]_{jk} \ge 0 \quad \text{for all } j, k \text{ and } t \ge 0.
\end{equation*}
Thus, evaluated at time $T$, $[\mathbf{\Phi}_{ii}(T)]_{jk} \ge 0$.

\textbf{Part (ii): Row sums are strictly less than 1} \\
Next, we evaluate the row sums of $\mathbf{\Phi}_{ii}(t)$.Multiplying the differential equation $\mathbf{\dot{\Phi}}_{ii}(t) = \mathbf{\Phi}_{ii}(t)\mathbf{A}_{ii}(t)$ by the all-ones column vector $\mathbf{1}$, we obtain:
\begin{equation*}
    \mathbf{\dot{\Phi}}_{ii}(t) = \mathbf{\Phi}_{ii}(t)(\mathbf{A}_{ii}(t)\mathbf{1}\big).
\end{equation*}
Let $\mathbf{s}(t) = \mathbf{\Phi}_{ii}(t)\mathbf{1}$ denote the vector of row sums of $\mathbf{\Phi}_{ii}(t)$. By the given conditions, the row sums of $\mathbf{A}_{ii}(t)$ are strictly negative, so $\mathbf{A}_{ii}(t)\mathbf{1} < \mathbf{0}$. The derivative of the $j$-th row sum is:
\begin{equation*}
    \dot{\mathbf{s}}_j(t) = \big( \dot{\mathbf{\Phi}_{ii}}(t)\mathbf{1} \big)_j = \sum_{k} [\mathbf{\Phi}_{ii}]_{jk}(t) \big(\mathbf{A}_{ii}(t)\mathbf{1}\big)_k.
\end{equation*}
From Part 1, we know $[\mathbf{\Phi}_{ii}(t)]_{jk} \ge 0$. Because $\mathbf{\Phi}_{ii}(t)$ is a fundamental matrix solution for an ODE, it is invertible for all $t$, meaning no row can consist entirely of zeros. Combined with the non-negativity result from Part (i) ($[\mathbf{\Phi}_{ii}(t)]_{jk} \ge 0$), it follows that for every row $j$, there exists at least one index $k$ such that $[\mathbf{\Phi}_{ii}(t)]_{jk}(t) > 0$. Since every $\big(\mathbf{A}_{ii}(t)\mathbf{1}\big)_j < 0$, the linear combination must be strictly negative:
\begin{equation*}
    \dot{\mathbf{s}}_j(t) < 0 \quad \text{for all } j.
\end{equation*}
This demonstrates that every row sum $\mathbf{s}_j(t)$ strictly decreases for $t \ge 0$. At $t = 0$, $\mathbf{\Phi}_{ii}(0) = I$, making the initial row sums exactly $\mathbf{s}_j(0) = 1$. Because the row sums can only decrease starting from $1$, it follows that for any $t > 0$:
\begin{equation*}
    \mathbf{s}_j(t) = \big( \mathbf{\Phi}_{ii}(t)\mathbf{1} \big)_j < 1 \quad \text{for all } j.
\end{equation*}

\end{proof}

Together, these lemmas establish that $\mathbf{A}(t)$ and its fundamental matrix $\mathbf{\Phi}(t)$ share the same structural properties. We now characterize the class of matrices satisfying these properties.

\begin{corollary}[Eigenvalue Bound] \label{eigen_corr}
Let $\boldsymbol{\Phi}(t)$ be the fundamental matrix satisfying the properties established in Lemma~2. Then all eigenvalues of $\boldsymbol{\Phi}(t)$ are strictly less than one in absolute value for all time $t \geq 0$.

\end{corollary}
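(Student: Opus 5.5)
The plan is to reduce the spectral statement for the full fundamental matrix to a norm bound on its diagonal blocks. By Proposition~\ref{prop_of_stm}, $\boldsymbol{\Phi}(t)$ is block upper triangular with diagonal blocks $\boldsymbol{\Phi}_{ii}(t)$. Hence $\det(\boldsymbol{\Phi}(t)-\mu\mathbf{I})=\prod_i \det(\boldsymbol{\Phi}_{ii}(t)-\mu\mathbf{I})$, and the spectrum of $\boldsymbol{\Phi}(t)$ is the union of the spectra of the $\boldsymbol{\Phi}_{ii}(t)$. It therefore suffices to show $\rho(\boldsymbol{\Phi}_{ii}(t))<1$ for each $i$. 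The off-diagonal blocks play no role, which is precisely why the triangular structure was established first.

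For a single block I would use the elementary bound $\rho(\mathbf{B})\le\|\mathbf{B}\|$ for any induced matrix norm. By Lemma~\ref{prop_of_Phi}(i), $\boldsymbol{\Phi}_{ii}(t)$ is entrywise nonnegative. Consequently its induced $\infty$-norm is its maximal row sum and its induced $1$-norm is its maximal column sum, with no absolute values needed. Lemma~\ref{prop_of_Phi}(ii) shows that these sums are strictly below one. The statement of that lemma refers to column sums while its proof tracks row sums, so I would pick whichever of $\|\cdot\|_1$ and $\|\cdot\|_\infty$ matches the sums actually controlled. Since the block is finite-dimensional, a maximum of finitely many quantities each $<1$ is itself $<1$, giving $|\mu|\le\|\boldsymbol{\Phi}_{ii}(t)\|<1$ for every eigenvalue $\mu$. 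Equivalently, one can argue with Gershgorin discs: each disc is centred at $[\boldsymbol{\Phi}_{ii}]_{jj}\ge 0$ with radius equal to the off-diagonal row sum, so it lies inside the open unit disc.

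One point must be stated carefully. At $t=0$ we have $\boldsymbol{\Phi}(0)=\mathbf{I}$, so the eigenvalues equal one. The conclusion can only hold for $t>0$, which matches the strict decrease argument in Lemma~\ref{prop_of_Phi}, and the relevant case is the monodromy matrix $\boldsymbol{\Phi}(T)$. I would phrase the corollary for $t>0$, or at least for $t=T$.

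The main obstacle is the strictness inherited from Lemma~\ref{prop_of_A}(ii). During outages $\mu_i(t)=0$, and in states with $J\neq i$ the matrix $\mathbf{M}^{(i)}$ has zero rows. Thus the column sums of $\mathbf{A}_{ii}(t)$ are generally only $\le 0$ and not $<0$ pointwise in time and state. My first attempt would be to avoid pointwise strictness and prove $\|\boldsymbol{\Phi}_{ii}(T)\|<1$ directly. The mass vector $\mathbf{s}(t)$ is nonincreasing. Over one period, the active windows and the irreducibility-type structure of the CTMC ensure that, from every starting state, a leaking transition (a completion, a priority selection, or a replacement) is reached with positive probability. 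This forces a strict loss of mass from every coordinate by time $T$. If that fails, I would use the fallback $\rho(\boldsymbol{\Phi}_{ii}(T))<1$, obtained by applying Perron--Frobenius to the nonnegative matrix $\boldsymbol{\Phi}_{ii}(T)^k$ for some power $k$ in which leakage has reached all states.
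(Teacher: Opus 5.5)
Conditional on Lemma~\ref{prop_of_Phi}, your argument is the paper's own. Its eigenvector and triangle-inequality computation is exactly a hand proof of $\rho(\boldsymbol{\Phi}_{ii}(t))\le\|\boldsymbol{\Phi}_{ii}(t)\|_1$ (the maximal column sum of a nonnegative block). It then uses the fact that the spectrum of a block upper triangular matrix is the union of the spectra of its diagonal blocks.

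Both of your caveats are correct. The claim fails at $t=0$ since $\boldsymbol{\Phi}(0)=\mathbf{I}$. In fact, if a period begins with an outage, then $\mathbf{1}^\top\mathbf{A}_{ii}(t)=\mathbf{0}^\top$ for the $\mathbf{a}_i$ and $\mathbf{y}$ blocks on that interval. Those blocks then stay column-stochastic and keep the eigenvalue $1$ for some $t>0$ as well. Likewise, $\mathbf{M}^{(i)}(t)\mathbf{1}$ vanishes on every state with $J\neq i$, and identically when $\mu_i(t)=0$. So Lemma~\ref{prop_of_A}(ii) holds only with $\le$, and the strict-decrease step in Lemma~\ref{prop_of_Phi}(ii) is unsupported. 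Working with the monodromy matrix $\boldsymbol{\Phi}(T)$, or a power of it, is the right repair.

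The gap is in that repair, and the paper shares it. Both your mass-leakage argument and your Perron--Frobenius fallback assume the nonnegativity of Lemma~\ref{prop_of_Phi}(i) on \emph{every} diagonal block. However, the last block $\mathbf{A}_{2N+2,2N+2}=\mathbf{Q}_{\mathrm{red}}^\top$ is not Metzler, and the proof of Lemma~\ref{prop_of_A}(i) never checks it. With $L\triangleq|\mathcal{Q}|$, one has $[\mathbf{Q}_{\mathrm{red}}]_{jk}=Q_{jk}-Q_{Lk}$. The last state $(N,1,\ldots,1)$ jumps at rate $\mu_N(t)$ to $k=(1,0,1,\ldots,1)$. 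Hence the (idle, $k$) entry is $-\mu_N(t)<0$ for $N\ge 2$, and $\lambda_1-\mu_1$ for $N=1$. So nothing makes $\boldsymbol{\Phi}_{2N+2,2N+2}(t)$ nonnegative, there is no monotone mass to track, and neither norm bound applies to this block.

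This block needs a different mechanism:
\begin{itemize}
\item Because the rows of $\mathbf{T}\mathbf{Q}$ sum to zero, $\mathbf{Q}_{\mathrm{red}}(t)\mathbf{T}=\mathbf{T}\mathbf{Q}(t)$.
\item Hence the reduced one-period map represents the transition matrix $\mathbf{P}(0,T)$ acting on the invariant subspace $\{\mathbf{v}:\mathbf{v}\mathbf{1}=0\}$. Transposition is immaterial for spectra.
\item Its spectrum is that of $\mathbf{P}(0,T)$ with one copy of the eigenvalue $1$ removed.
\item $\mathbf{P}(0,T)$ has a positive diagonal, so if it is irreducible it is primitive, and Perron--Frobenius puts all remaining eigenvalues strictly inside the unit disc.
\end{itemize}
Combine this with your leakage argument for the remaining $2N+1$ Metzler blocks. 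That argument must be run under explicitly stated hypotheses, such as $\lambda_i\not\equiv 0$ and $\mu_i$ positive on a set of positive measure; without them the mean AoI diverges and no periodic steady state exists. Together, the two arguments give the bound for $\boldsymbol{\Phi}(T)$.
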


\begin{proof}

Let $\lambda(t)$ be an eigenvalue of a diagonal block $\mathbf{\Phi}_{ii}(t)$ of the fundamental matrix $\mathbf{\Phi}(t)$, and let
\[
\mathbf{v}(t) = [v_1(t), v_2(t), \dots, v_n(t)]^T
\]
be its corresponding eigenvector such that $\mathbf{v}(t) \neq \mathbf{0}$. By definition,
\begin{align*}
    \mathbf{\Phi}_{ii}(t) \mathbf{v}(t) = \lambda(t) \mathbf{v}(t) \implies \lambda(t) v_j(t) &= \sum_{k=1}^n [\mathbf{\Phi}_{ii}(t)]_{jk}v_k(t), \notag \; \forall j \in \{1, \dots, n\}.
\end{align*}
Taking the absolute value and applying the triangle inequality, we have
\begin{equation*}
    |\lambda(t)| |v_j(t)| = \left| \sum_{k=1}^n [\mathbf{\Phi}_{ii}(t)]_{jk} v_k(t) \right| \leq \sum_{k=1}^n |[\mathbf{\Phi}_{ii}(t)]_{jk}|v_k(t)|.
\end{equation*}
Since $\mathbf{\Phi}_{ii}(t)]_{jk} \geq 0$ according to the properties established in Lemma \ref{prop_of_Phi}, it follows that
\begin{equation*}
    |\lambda(t)| |v_j(t)| \leq \sum_{k=1}^n \mathbf{\Phi}_{ii}(t)]_{jk} |v_k(t)|.
\end{equation*}
Summing over all $j$,
\begin{equation*}
    |\lambda(t)| \sum_{j=1}^n |v_j(t)| \leq \sum_{j=1}^n \sum_{k=1}^n [\mathbf{\Phi}_{ii}(t)]_{jk} |v_k(t)|.
\end{equation*}
Interchanging the order of summation on the right-hand side yields
\begin{equation*}
    |\lambda(t)| \sum_{j=1}^n |v_j(t)| \leq \sum_{k=1}^n |v_k(t)| \left( \sum_{j=1}^n [\mathbf{\Phi}_{ii}(t)]_{jk} \right).
\end{equation*}
Let $c(t) = \max_k \sum_{j=1}^n [\mathbf{\Phi}_{ii}(t)]_{jk}$ denote the maximum column sum of $\mathbf{\Phi}_{ii}(t)$. Thus, we can write
\begin{equation*}
    |\lambda(t)| \sum_{j=1}^n |v_j(t)| \leq c(t) \sum_{k=1}^n |v_k(t)|.
\end{equation*}
From the properties in Lemma \ref{prop_of_Phi}, we have $c(t) < 1$. Since $\sum_{j=1}^n |v_j(t)| > 0$, we conclude that
\begin{equation*}
    |\lambda(t)| \leq c(t) < 1.
\end{equation*}
Consequently, the eigenvalues of each diagonal block of $\mathbf{\Phi}(t)$ are strictly less than one in magnitude. As established in Proposition \ref{prop_of_stm}, $\mathbf{\Phi}(t)$ is a block upper-triangular matrix. Because the eigenvalues of a block upper-triangular matrix consist precisely of the union of the eigenvalues of its diagonal blocks \cite[Proposition~5.5.13]{bernstein2009matrix}, it immediately follows that all eigenvalues of $\mathbf{\Phi}(t)$ strictly satisfy $|\lambda_i(\mathbf{\Phi}(t))| < 1$.
\end{proof}

We now establish the key property of the monodromy matrix 
$\boldsymbol{\Phi}(T)$, which governs the steady-state behavior of the system.

\begin{corollary}[Floquet Multiplier Bound]
The Floquet multipliers of the system \eqref{eq_compact}, i.e., the 
eigenvalues of the monodromy matrix $\boldsymbol{\Phi}(T)$, are strictly less than one in absolute value.
\end{corollary}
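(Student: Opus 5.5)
The plan is to obtain the statement as the specialization of Corollary~\ref{eigen_corr} to $t=T$, with the remaining work being to justify each link in that chain. By definition, the Floquet multipliers of \eqref{eq_compact} are the eigenvalues of the homogeneous monodromy matrix $\boldsymbol{\Phi}(T)$, where $\dot{\boldsymbol{\Phi}}=\mathbf{A}(t)\boldsymbol{\Phi}$ and $\boldsymbol{\Phi}(0)=\mathbf{I}$. The forcing term $\mathbf{b}(t)$ plays no role here, since it only enters the one-period affine map $\mathbf{x}(0)\mapsto\boldsymbol{\Phi}(T)\mathbf{x}(0)+\int_0^T\boldsymbol{\Phi}(T)\boldsymbol{\Phi}(s)^{-1}\mathbf{b}(s)\,ds$ through its translation part.

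\textbf{Step 1: reduce to diagonal blocks.} I would invoke Proposition~\ref{prop_of_stm} with the block upper triangular $\mathbf{A}(t)$ from Corollary~\ref{cor_compact}. This shows that $\boldsymbol{\Phi}(T)$ is block upper triangular, and that each diagonal block $\boldsymbol{\Phi}_{ii}(T)$ is the monodromy matrix of $\dot{\mathbf{w}}=\mathbf{A}_{ii}(t)\mathbf{w}$. Because the determinant of a block triangular matrix factors over its diagonal blocks, $\det(\boldsymbol{\Phi}(T)-\lambda\mathbf{I})=\prod_i\det(\boldsymbol{\Phi}_{ii}(T)-\lambda\mathbf{I})$, so the spectrum of $\boldsymbol{\Phi}(T)$ is the union of the spectra of the $\boldsymbol{\Phi}_{ii}(T)$. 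It therefore suffices to show $\rho(\boldsymbol{\Phi}_{ii}(T))<1$ for each $i$.

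\textbf{Step 2: per-block bound.} For each block I would use Lemma~\ref{prop_of_A}: the off-diagonal entries are nonnegative (Metzler), and every column sum is strictly negative. These give the conclusions of Lemma~\ref{prop_of_Phi} at $t=T>0$, namely $\boldsymbol{\Phi}_{ii}(T)\ge 0$ entrywise and column sums strictly below one. The eigenvector/triangle-inequality argument of Corollary~\ref{eigen_corr} then gives $|\lambda|\le c(T)<1$, where $c$ is the maximum column sum. A more robust route to the same bound is to note that $\mathbf{1}^\top\boldsymbol{\Phi}_{ii}(t)$ satisfies $\frac{d}{dt}\mathbf{1}^\top\boldsymbol{\Phi}_{ii}=(\mathbf{1}^\top\mathbf{A}_{ii})\boldsymbol{\Phi}_{ii}\le -\epsilon\,\mathbf{1}^\top\boldsymbol{\Phi}_{ii}$. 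Grönwall's inequality then yields $\|\boldsymbol{\Phi}_{ii}(T)\|_1\le e^{-\epsilon T}<1$.

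\textbf{Main obstacle.} The delicate point is the strictness and uniformity of the column-sum inequality over the whole period. Two issues arise.

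\emph{Outages.} During an outage $\mu_i(t)=0$, so the completion matrices $\mathbf{M}^{(\cdot)}(t)$ vanish and the column sums of $\mathbf{A}_{ii}(t)$ are only $\le 0$, not $<0$. I would weaken the requirement to a time-integrated one: the column sums are $\le 0$ for all $t$, and there is a uniform positive leakage rate $-\mathbf{1}^\top\mathbf{A}_{ii}(t)\ge \epsilon(t)\mathbf{1}^\top$ with $\int_0^T\epsilon>0$. Grönwall then still gives $\|\boldsymbol{\Phi}_{ii}(T)\|_1\le e^{-\int_0^T\epsilon}<1$.

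\emph{States with zero row sum in $\mathbf{M}^{(\cdot)}$.} Some states do not lose mass through any $\mathbf{M}^{(\cdot)}$ at all; for example, the idle state has no completion. For these I would use the transport structure of the Metzler flow instead. Mass starting in such a state reaches a leaking state with positive probability within one period, which is an irreducibility/accessibility property of the CTMC. As a result, a suitable power $\boldsymbol{\Phi}_{ii}(T)^k$ has all column sums below one, and that suffices for $\rho<1$.

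Once each block has spectral radius below one, the union from Step 1 gives the claim.
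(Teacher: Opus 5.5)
Your Step~1 reduction is sound. Your diagnosis of the strictness problem is also correct, and sharper than the paper's Lemma~\ref{prop_of_A}(ii). Specifically, $\mathbf{M}^{(i)}\mathbf{1}$ vanishes at every state with $J\neq i$, $\mathbf{M}^{(\mathrm{comp})}\mathbf{1}$ vanishes at the idle state, and all completion leakage vanishes during outages. For the blocks $\mathbf{A}_{ii}$, $i\le 2N+1$, your killed-chain argument works: nonnegativity of $\boldsymbol{\Phi}_{ii}(T)$ plus accessibility of a leaking state gives $\|\boldsymbol{\Phi}_{ii}(T)^k\|_1<1$. These blocks are genuinely Metzler, because each $\mathbf{M}^{(\cdot)}$ only deletes off-diagonal completion rates of $\mathbf{Q}$. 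You should, however, state the positivity hypotheses that accessibility needs, for instance each $\lambda_i,\mu_i$ positive on a set of positive measure in each period. Without such hypotheses the claim fails: $\lambda_i\equiv 0$ produces a multiplier equal to $1$ in the $\mathbf{a}_i$ block.

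The genuine gap is the last block, $\mathbf{A}_{2N+2,2N+2}=\mathbf{Q}_{\mathrm{red}}^\top$. With $n=|\mathcal{Q}|$ we have $[\mathbf{Q}_{\mathrm{red}}]_{jk}=Q_{jk}-Q_{nk}$, and this matrix is not Metzler. For $N\ge 2$, the last state $(N,1,\ldots,1)$ leaves only via a class-$N$ completion to $(1,0,1,\ldots,1)$, which the idle state cannot reach in one jump. Hence the entry of $\mathbf{Q}_{\mathrm{red}}$ in the idle row and that column equals $-\mu_N(t)<0$ whenever the link is up. You cannot cite Lemma~\ref{prop_of_A}(i) here, since it is argued only for the $\mathbf{Q}-\mathbf{M}^{(\cdot)}$ blocks. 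The block monodromy has entries $P_{jk}(0,T)-P_{nk}(0,T)$, $j,k<n$ (up to transposition), which are differences of one-period transition probabilities and can be negative. So neither the triangle-inequality/column-sum bound nor your Gr\"onwall estimate on $\mathbf{1}^\top\boldsymbol{\Phi}_{ii}$ applies. Moreover, the underlying flow conserves total mass, so ``leakage plus accessibility'' is the wrong mechanism for this block.

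What this block needs is \emph{mixing}. The map $\mathbf{w}\mapsto\mathbf{w}\mathbf{T}$, with $\mathbf{T}$ as in Proposition~\ref{prop_reduced}, identifies $\mathbb{R}^{1\times(n-1)}$ with the zero-sum subspace $\mathcal{V}=\{\mathbf{v}:\mathbf{v}\mathbf{1}=0\}$. Using $\mathbf{Q}(t)\mathbf{1}=\mathbf{0}$, one checks that the homogeneous reduced flow becomes $\dot{\mathbf{v}}=\mathbf{v}\mathbf{Q}(t)$ on $\mathcal{V}$. So this block's multipliers are the eigenvalues of the one-period transition matrix $\mathbf{P}(0,T)$ restricted to $\mathcal{V}$, namely all its eigenvalues except one copy of the Perron eigenvalue $1$. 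Suppose $\epsilon:=\min_m[\mathbf{P}(0,T)^k]_{m1}>0$ for some $k$, where index $1$ is the idle state. This holds when, from any state, the system can empty itself during active windows and then stay idle. Then Dobrushin's ergodicity coefficient gives $\|\mathbf{v}\mathbf{P}(0,T)^k\|_1\le(1-\epsilon)\|\mathbf{v}\|_1$ for $\mathbf{v}\in\mathcal{V}$, so this block has spectral radius at most $(1-\epsilon)^{1/k}<1$. With this block handled separately, your Step~1 union argument completes the proof.
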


\begin{proof}
Follows directly from Corollary~2 applied at $t = T$.
\end{proof}

For the convenience of the reader, we restate the main convergence theorem here before proceeding to the proof.

\setcounter{theorem}{1}
\begin{theorem}[Existence and Uniqueness of the Periodic Steady State]
\label{existence_thm}
The system of linear ODEs established in Theorem~\ref{theorem_main} admits a unique periodic steady state, denoted by $\mathbf{x}^\star(t)$, satisfying $\mathbf{x}^\star(t+T) = \mathbf{x}^\star(t)$ for all $t \geq 0$. Moreover, for any initial condition $\mathbf{x}(0)$, the solution converges exponentially to $\mathbf{x}^\star(t)$ in the standard Euclidean norm, satisfying:
\begin{equation}
    \|\mathbf{x}(t) - \mathbf{x}^\star(t)\|_2 \leq m e^{-\gamma t} \|\mathbf{x}(0) - \mathbf{x}^\star(0)\|_2,
\end{equation}
for some constants $\gamma > 0$ and $m \geq 1$.
\end{theorem}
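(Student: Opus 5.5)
We work with the compact affine $T$-periodic system $\dot{\mathbf{x}}(t)=\mathbf{A}(t)\mathbf{x}(t)+\mathbf{b}(t)$ of Corollary~\ref{cor_compact}, with $\mathbf{A}(t+T)=\mathbf{A}(t)$ and $\mathbf{b}(t+T)=\mathbf{b}(t)$. The plan is standard Floquet theory for affine periodic systems, with the Floquet multiplier bound above as the key input. By variation of constants, every solution satisfies
\begin{equation*}
\mathbf{x}(t)=\mathbf{\Phi}(t)\mathbf{x}(0)+\int_0^t \mathbf{\Phi}(t)\mathbf{\Phi}(\tau)^{-1}\mathbf{b}(\tau)\,d\tau .
\end{equation*}
Hence the one-period map is affine, $F(\mathbf{x}_0)=\mathbf{\Phi}(T)\mathbf{x}_0+\mathbf{c}$, where $\mathbf{c}=\int_0^T\mathbf{\Phi}(T)\mathbf{\Phi}(\tau)^{-1}\mathbf{b}(\tau)\,d\tau$.

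\emph{Existence and uniqueness.} A solution is $T$-periodic if and only if $\mathbf{x}(0)=\mathbf{x}(T)$; the ``if'' direction uses periodicity of $\mathbf{A}$ and $\mathbf{b}$ together with uniqueness of solutions. So periodic solutions correspond exactly to solutions of $(\mathbf{I}-\mathbf{\Phi}(T))\mathbf{x}_0=\mathbf{c}$. The Floquet multiplier corollary gives $\rho(\mathbf{\Phi}(T))<1$, so $1$ is not an eigenvalue and $\mathbf{I}-\mathbf{\Phi}(T)$ is invertible. The unique fixed point is therefore $\mathbf{x}^\star(0)=(\mathbf{I}-\mathbf{\Phi}(T))^{-1}\mathbf{c}$, and $\mathbf{x}^\star(t)$ is the solution started there.

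\emph{Exponential convergence.} Let $\mathbf{e}(t)=\mathbf{x}(t)-\mathbf{x}^\star(t)$. It solves the homogeneous system, so $\mathbf{e}(t)=\mathbf{\Phi}(t)\mathbf{e}(0)$. Periodicity of $\mathbf{A}$ gives $\mathbf{\Phi}(t+T)=\mathbf{\Phi}(t)\mathbf{\Phi}(T)$. Write $t=kT+r$ with $r\in[0,T)$; then $\mathbf{e}(t)=\mathbf{\Phi}(r)\mathbf{\Phi}(T)^k\mathbf{e}(0)$. Two bounds finish the argument:
\begin{itemize}
\item Because $\mathbf{A}$ is piecewise continuous and bounded on $[0,T]$, Gr\"onwall gives $\sup_{r\in[0,T]}\|\mathbf{\Phi}(r)\|_2\le C$.
\item Because $\rho=\rho(\mathbf{\Phi}(T))<1$, Gelfand's formula (or the Jordan form) gives, for any $\rho<\theta<1$, a constant $K\ge1$ with $\|\mathbf{\Phi}(T)^k\|_2\le K\theta^k$.
\end{itemize}
Combining them, $\|\mathbf{e}(t)\|_2\le CK\theta^{-1}\theta^{t/T}\|\mathbf{e}(0)\|_2$. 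Setting $\gamma=-\ln\theta/T>0$ and $m=\max(1,CK/\theta)$ gives the stated bound.

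\emph{Main obstacle.} The whole argument depends on $\rho(\mathbf{\Phi}(T))<1$, so I would check the Floquet corollary carefully rather than just citing it. Two points need care. First, Lemma~\ref{prop_of_Phi} bounds row sums while Corollary~\ref{eigen_corr} uses column sums, and the transpose conventions must be made consistent. Second, the strict negativity in Lemma~\ref{prop_of_A}(ii) fails at states with no completion transitions, such as the idle state for $\mathbf{M}^{(i)}$, and during outages where $\mu_i=0$. A fallback that avoids pointwise strict sub-stochasticity is to show directly that each diagonal block $\mathbf{\Phi}_{ii}(T)$ is nonnegative with column sums at most $1$, and that some power has all column sums strictly below $1$. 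This would use the fact that over one period every state reaches a ``killing'' transition with positive probability, which holds whenever $\mu_i$ and $\lambda_i$ are positive on a set of positive measure. That gives $\rho<1$ for each block. Because $\mathbf{\Phi}(T)$ is block triangular (Proposition~\ref{prop_of_stm}), its spectrum is the union of the block spectra, which gives the claim.
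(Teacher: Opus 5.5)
Your argument is correct and, like the paper's, rests entirely on the Floquet multiplier bound $\rho(\mathbf{\Phi}(T))<1$. Where you differ is in how you get from that bound to the theorem.

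\textbf{Existence and uniqueness.} The paper cites inhomogeneous Floquet theory (the $n$-period formula and convergence of the Neumann series). You instead characterize periodic solutions exactly as solutions of $(\mathbf{I}-\mathbf{\Phi}(T))\mathbf{x}_0=\mathbf{c}$ and invert. This is self-contained and makes uniqueness explicit.

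\textbf{Exponential convergence.} The paper factors $\mathbf{\Phi}(t)=\mathbf{F}(t)e^{\mathbf{K}t}$ (Floquet--Lyapunov) and bounds $\|\mathbf{F}(t)\|_2$ by compactness. It then controls $\|e^{\mathbf{K}t}\|_2$ through a weighted logarithmic norm, with $\mathbf{H}$ solving a Lyapunov equation for $\mathbf{K}-\sigma\mathbf{I}$. You write $t=kT+r$ and combine a Gr\"onwall bound on $\mathbf{\Phi}(r)$ with Gelfand's formula for $\mathbf{\Phi}(T)^k$. Your version needs no matrix logarithm, which for a real monodromy matrix may be complex or require a $2T$-periodic real factor. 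It also avoids the Lyapunov-norm machinery. What the paper's construction buys is a specific norm $\|\cdot\|_{\mathbf{H}}$ in which one period is a strict contraction (because $\mathbf{F}(T)=\mathbf{I}$), and it reuses that norm for the relaxed iteration in Corollary~\ref{fixed_point}. In your framework that corollary follows just as directly from $\rho\bigl((1-\alpha)\mathbf{I}+\alpha\mathbf{\Phi}(T)\bigr)\le 1-\alpha+\alpha\rho(\mathbf{\Phi}(T))<1$.

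\textbf{Your concerns about the input are well founded.} $\mathbf{M}^{(i)}\mathbf{1}$ vanishes on every row with $J(s)\neq i$, and all completion matrices vanish during outages, so Lemma~\ref{prop_of_A}(ii) fails. The proof of Lemma~\ref{prop_of_Phi} also mixes $\dot{\mathbf{\Phi}}=\mathbf{\Phi}\mathbf{A}$ and row sums with the column sums used in Corollary~\ref{eigen_corr}.

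\textbf{Your fallback does not cover the last block.} Nonnegative blocks with column sums $\le 1$ and some power strictly substochastic does repair the $\mathbf{a}_i$, $\mathbf{y}$ and $\mathbf{z}_i$ blocks. It does not repair $\mathbf{Q}_{\text{red}}^\top$:
\begin{itemize}
\item Here $[\mathbf{Q}_{\text{red}}]_{jk}=Q_{jk}-Q_{Lk}$, where $L=|\mathcal{Q}|$ is the state with class $N$ in service and all buffers full.
\item Off-diagonal entries can therefore be negative. For example, the entry equals $-\mu_N(t)$ when $j$ is the idle state and $k$ is the state entered by a class-$N$ completion from $L$, for $N\ge 2$. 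So Lemma~\ref{prop_of_A}(i) fails and $\mathbf{\Phi}_{\text{red}}(T)$ need not be nonnegative.
\item Moreover, $Q_{LL}(t)=-\mu_N(t)$ vanishes during outages.
\end{itemize}
For this block, observe instead that the reduced homogeneous dynamics is just the forward equation restricted to the invariant subspace $\{\mathbf{v}:\mathbf{v}\mathbf{1}=0\}$. Its multipliers are therefore the eigenvalues of the one-period transition matrix $\mathbf{P}(0,T)$ with one copy of $1$ removed. A Doeblin condition places all of them strictly inside the unit disk. That condition is that the idle state is reachable from every state within a period, so $\mathbf{P}(0,T)$ (or a power of it) has a strictly positive column. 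It holds, for instance, when the service rates are positive on a common window, as in the paper's example.
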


\begin{proof}
Throughout this proof, $\|\cdot\|_2$ denotes the standard Euclidean vector norm or its induced matrix norm. For a positive-definite Hermitian matrix $\mathbf{H} \in \mathbb{C}^{n \times n}$, the weighted vector norm is denoted by
\begin{equation*}
    \|\mathbf{x}\|_\mathbf{H} = \sqrt{\mathbf{x}^H\mathbf{H}\mathbf{x}},
\end{equation*}
and its induced matrix spectral norm is defined as
\begin{equation*}
    \|\mathbf{A}\|_\mathbf{H} \triangleq \max_{\mathbf{x}\neq\mathbf{0}} \frac{\|\mathbf{A}\mathbf{x}\|_\mathbf{H}}{\|\mathbf{x}\|_\mathbf{H}}.
\end{equation*}
The corresponding weighted logarithmic matrix norm is defined as
\begin{equation*}
    \mu_{\mathbf{H}}[\mathbf{A}] \triangleq \lim_{\Delta \to 0^+} \frac{\|\mathbf{I} + \Delta \mathbf{A}\|_\mathbf{H} - 1}{\Delta}.
\end{equation*}

By the extended Floquet theory for inhomogeneous systems \cite[Theorem 4.1]{slane2011analysis}, the solution after $n$ periods splits into an unforced transient response and an accumulated forcing contribution:
\begin{equation*}
\mathbf{x}(nT) = \mathbf{\Phi}(T)^n \mathbf{x}(0) + \sum_{k=1}^{n} \mathbf{\Phi}(T)^k \Lambda,
\end{equation*}
where $\Lambda = \int_0^T \mathbf{\Phi}(\tau)^{-1} \mathbf{g}(\tau) d\tau$. Because the Floquet multipliers (the eigenvalues of the monodromy matrix $\mathbf{\Phi}(T)$) satisfy $\max_{i} |\lambda_i(\mathbf{\Phi}(T))| < 1$, the transient term $\mathbf{\Phi}(T)^n \mathbf{x}(0)$ decays to zero and the Neumann series $\sum_{k=1}^{n} \mathbf{\Phi}(T)^k $converges as $n \to \infty$ \cite[Theorem 4.4]{slane2011analysis}. This establishes the existence of a unique, $T$-periodic steady-state solution $\mathbf{x}^\star(t)$. To analyze the stability of this steady state, we evaluate the standard Euclidean norm of the trajectory deviation, $\|\mathbf{x}(t) - \mathbf{x}^\star(t)\|_2$. By the Floquet–Lyapunov theorem \cite[Theorem 3.1]{slane2011analysis}, the fundamental matrix can be factored as $\mathbf{\Phi}(t) = \mathbf{F}(t)e^{\mathbf{K}t}$, where $\mathbf{F}(t)$ is a nonsingular, continuous, $T$-periodic matrix and $\mathbf{K}$ is a constant matrix. Note that since $\mathbf{\Phi}(0)=\mathbf{I}$, 
\begin{align}
    \mathbf{\Phi}(0) = \mathbf{F}(0) e^{\mathbf{K} \cdot 0} \notag = \mathbf{F}(0) = \mathbf{I} = \mathbf{F}(T)
\end{align} 
Applying the submultiplicative property of induced matrix spectral norms in (\ref{eq:deviation}c) to the deviation yields:
\begin{subequations}
\refstepcounter{equation}
\begin{align} \label{eq:deviation}
\|\mathbf{x}(t) - \mathbf{x}^\star(t)\|_2 &\overset{\mathclap{(\alph{equation})}}{=} \|\mathbf{\Phi}(t) \big(\mathbf{x}(0) - \mathbf{x}^\star(0)\big)\|_2 \tag{\theparentequation} \\\noalign{\refstepcounter{equation}}
&\overset{\mathclap{(\alph{equation})}}{\leq} \|\mathbf{F}(t)e^{\mathbf{K}t}(\mathbf{x}(0) - \mathbf{x}^\star(0))\|_2 \notag \\ \noalign{\refstepcounter{equation}}
&\overset{\mathclap{(\alph{equation})}}{\leq} \|\mathbf{F}(t)\|_2 \|e^{\mathbf{K}t}\|_2 \|\mathbf{x}(0) - \mathbf{x}^\star(0)\|_2. \notag \\
\noalign{\refstepcounter{equation}}
&\overset{\mathclap{(\alph{equation})}}{\leq} M \|e^{\mathbf{K}t}\|_2 \|\mathbf{x}(0) - \mathbf{x}^\star(0)\|_2, \notag
\end{align}
\end{subequations}
where $M \triangleq \max_{t \in [0,T]} \|\mathbf{F}(t)\|_2 < \infty$. Because $\mathbf{F}(t)$ is continuous and $T$-periodic, its norm $\|\mathbf{F}(t)\|_2$ is a continuous scalar function evaluated over the compact interval $[0, T]$. By the Extreme Value Theorem, this maximum is guaranteed to be well-defined and finite.
To bound the remaining continuous-time component $\|e^{\mathbf{K}t}\|_2$, we utilize the weighted logarithmic matrix norm. For a positive-definite Hermitian matrix $\mathbf{H}$ with a unique square root $\mathbf{H}^{1/2}$, the induced weighted norm relates to the standard matrix norm via a change of variables, obtained by setting $\mathbf{z} = \mathbf{H}^{1/2}\mathbf{x}$:
\begin{align}
\|\mathbf{A}\|_\mathbf{H} &= \max_{\mathbf{x} \neq \mathbf{0}} \frac{\|\mathbf{A}\mathbf{x}\|_\mathbf{H}}{\|\mathbf{x}\|_\mathbf{H}} \notag \\ &= \max_{\mathbf{x} \neq \mathbf{0}} \frac{\|\mathbf{H}^{1/2}\mathbf{A}\mathbf{x}\|_2}{\|\mathbf{H}^{1/2}\mathbf{x}\|_2} \notag \\
&= \max_{\mathbf{z} \neq \mathbf{0}} \frac{\|\mathbf{H}^{1/2}\mathbf{A}\mathbf{H}^{-1/2}\mathbf{z}\|_2}{\|\mathbf{z}\|_2} \notag \\
&= \|\mathbf{H}^{1/2}\mathbf{A}\mathbf{H}^{-1/2}\|_2.\end{align}
Since $\mathbf{H}$ is positive-definite Hermitian, the spectral norms of its square-root components reduce directly to their extreme eigenvalues, namely $\|\mathbf{H}^{1/2}\|_2 = \sqrt{\lambda_{\max}(\mathbf{H})}$ and $\|\mathbf{H}^{-1/2}\|_2 = 1/\sqrt{\lambda_{\min}(\mathbf{H})}$. By expanding the matrix exponential via the identity $e^{\mathbf{K}t} = \mathbf{H}^{-1/2}(\mathbf{H}^{1/2}e^{\mathbf{K}t}\mathbf{H}^{-1/2})\mathbf{H}^{1/2}$ and applying the submultiplicative property of the induced matrix spectral norm, we can directly separate the matrix terms in (\ref{eq:similar}a),
\begin{subequations}
\refstepcounter{equation}
\begin{align} \label{eq:similar}
\|e^{\mathbf{K}t}\|_2  &\overset{\mathclap{(\alph{equation})}}{\leq} \|\mathbf{H}^{-1/2}\|_2 \cdot \|\mathbf{H}^{1/2}e^{\mathbf{K}t}\mathbf{H}^{-1/2}\|_2 \cdot \|\mathbf{H}^{1/2}\|_2 \tag{\theparentequation} \\ \noalign{\refstepcounter{equation}}
&\overset{\mathclap{(\alph{equation})}}{=}  \frac{1}{\sqrt{\lambda_{\min}(\mathbf{H})}} \cdot \|e^{\mathbf{K}t}\|_\mathbf{H} \cdot \sqrt{\lambda_{\max}(\mathbf{H})} \notag \\ \noalign{\refstepcounter{equation}}
&\overset{\mathclap{(\alph{equation})}}{=}  \sqrt{\frac{\lambda_{\max}(\mathbf{H})}{\lambda_{\min}(\mathbf{H})}} \|e^{\mathbf{K}t}\|_\mathbf{H}. \notag
\end{align}
\end{subequations}
Invoking \cite[Lemma 1.3]{hu2012bounds}, the weighted matrix exponential is strictly bounded by its corresponding weighted logarithmic norm, such that $\|e^{\mathbf{K}t}\|_\mathbf{H} \leq \exp(\mu_\mathbf{H}[\mathbf{K}]t)$. Rather than selecting an arbitrary matrix, we specifically construct $\mathbf{H}$ to tightly bound the exponential decay of the system. Leveraging the fact that $\mathbf{F}(T) = \mathbf{I}$, the principal relation reduces to $\mathbf{\Phi}(T) = e^{\mathbf{K}T}$. This directly maps the magnitudes of the eigenvalues as
\begin{align}
|\lambda(\mathbf{\Phi}(T))| = |e^{\lambda(\mathbf{K})T}| = e^{\text{Re}(\lambda(\mathbf{K}))T}.
\end{align}
Then, bounding the Floquet multipliers strictly below unity in absolute value ($\max_{i} |\lambda_i(\mathbf{\Phi}(T))| < 1$) via Corollary 3 immediately implies that $\max_i \{\text{Re}(\lambda_i(\mathbf{K}))\} < 0$. Following  \cite[Theorem 2.8]{hu2012bounds}, for any real scalar $\sigma > \max_{i} \{\text{Re}(\lambda_i(\mathbf{K}))\}$, there exists a unique positive-definite Hermitian matrix $\mathbf{H}$ that fulfills the Lyapunov equation
\begin{equation}
    (\mathbf{K} - \sigma \mathbf{I})^*\mathbf{H} + \mathbf{H}(\mathbf{K} - \sigma \mathbf{I}) = -2\mathbf{I}.
\end{equation}
Crucially, evaluating the system under this specific matrix yields a weighted logarithmic norm for $\mathbf{K}$ that strictly satisfies:
\begin{equation}
    \mu_\mathbf{H}[\mathbf{K}] = \sigma - \frac{1}{\lambda_{\max}(\mathbf{H})}.
\end{equation}
By choosing $\sigma = \max_{i} \{ \text{Re}(\lambda_i(\mathbf{K})) \} + \epsilon$ for an arbitrarily small $\epsilon > 0$, we ensure the logarithmic norm is strictly negative. Note that $\lambda_{max}(\mathbf{H}) > 0$ since $\mathbf{H}$ is positive-definite. We define the constant decay rate $-\gamma \triangleq \mu_\mathbf{H}[\mathbf{K}] < 0$. Substituting this continuous decay term back through the norm inequalities produces the final global exponential bound:
\begin{align}
\|\mathbf{x}(t) - \mathbf{x}^\star(t)\|_2 &\leq M \sqrt{\frac{\lambda_{\max}(\mathbf{H})}{\lambda_{\min}(\mathbf{H})}} \exp(\mu_\mathbf{H}[\mathbf{K}]t) \|\mathbf{x}(0) - \mathbf{x}^\star(0)\|_2 \notag \\
&= m e^{-\gamma t} \|\mathbf{x}(0) - \mathbf{x}^\star(0)\|_2,
\end{align}
where the scaling constant is defined as $m \triangleq M \sqrt{\frac{\lambda_{\max}(\mathbf{H})}{\lambda_{\min}(\mathbf{H})}}$. 
Furthermore, $M = \max_{t \in [0,T]} \|\mathbf{F}(t)\|_2 \geq \|\mathbf{F}(0)\|_2 = \|\mathbf{I}\|_2 = 1$ and $\frac{\lambda_{\max}(\mathbf{H})}{\lambda_{\min}(\mathbf{H})} \geq 1$. Thus, $m \geq 1$. This establishes the global exponential stability of the periodic steady state.
\end{proof}

\section{Proof of Proposition \ref{gap_prop}}\label{appendix_c}

Recall that $\{J(q(t))=i\}$ denote the event that class $i$ is being served at time $t$. Note that $\pi_i(t) = P\{J(q(t))=i\}$. By the Law of Total Expectation, the average age $\bar{\Delta}_i(t)$ can be partitioned into events $\{J(q(t))=i\}$ and $\{J(q(t))\neq i\}$:
\begin{equation}
    \bar{\Delta}_i(t) = \mathbb{E}[\Delta_i(t) | J(q(t))=i] P\{J(q(t))=i\} + \mathbb{E}[\Delta_i(t) | J(q(t)) \neq i] P\{J(q(t)) \neq i\}. \notag
\end{equation}
Substituting $\Delta^s_i(t) = \mathbb{E}[\Delta_i(t) | J(q(t))=i]$ and $\Delta_i^c(t) = \mathbb{E}[\Delta_i(t) | J(q(t)) \neq i]$, we have:
\begin{equation}
    \bar{\Delta}_i(t) = \pi_i(t) \Delta^s_i(t) + (1 - \pi_i(t)) \Delta_i^c(t). \notag
\end{equation}
The gap between the service-conditioned average age $\Delta^s_i(t)$ and the average age $\bar{\Delta}_i(t)$ is then evaluated as:
\begin{align}
    \Delta^s_i(t) - \bar{\Delta}_i(t) &= \Delta^s_i(t) - \left[ \pi_i(t) \Delta^s_i(t) + (1 - \pi_i(t)) \Delta_i^c(t) \right] \notag \\
    &= (1 - \pi_i(t)) \Delta^s_i(t) - (1 - \pi_i(t)) \Delta_i^c(t) \notag \\
    &= (1 - \pi_i(t)) (\Delta^s_i(t) - \Delta_i^c(t)). \label{eq:gap_uncomp}
\end{align}
To connect the service-conditioned age $\Delta^s_i(t)$ to the average peak age $\hat{\Delta}_i(t)$,we break down the serving event by summing over the individual states in the state space $\mathcal{Q}$ in (\ref{eq:expand_over_Q}c).
\begin{subequations}
\refstepcounter{equation}
\begin{align} \label{eq:expand_over_Q}
\Delta^s_i(t) &\overset{\mathclap{(\alph{equation})}}{=} \mathbb{E}[\Delta_i(t) \mid J(q(t)) = i] \tag{\theparentequation} \\ \noalign{\refstepcounter{equation}}
&\overset{\mathclap{(\alph{equation})}}{=} \frac{\mathbb{E}[\Delta_i(t) 1{\{J(q(t)) = i\}}]}{P(J(q(t)) = i)} \nonumber \\ \noalign{\refstepcounter{equation}}
&\overset{\mathclap{(\alph{equation})}}{=} \frac{\mathbb{E}\left[\Delta_i(t) \sum_{s \in \mathcal{Q}} 1{\{J(s) = i\}} 1{\{q(t)=s\}}\right]}{\boldsymbol{\delta}_{J=i}^\top \mathbf{p}(t)} \nonumber \\ \noalign{\refstepcounter{equation}}
&\overset{\mathclap{(\alph{equation})}}{=} \frac{\sum_{s \in \mathcal{Q}} 1{\{J(s)=i\}} \mathbb{E}[\Delta_i(t) 1{\{q(t)=s\}}]}{\boldsymbol{\delta}_{J=i}^\top \mathbf{p}(t)} \notag \\ \noalign{\refstepcounter{equation}}
&\overset{\mathclap{(\alph{equation})}}{=} \frac{\sum_{s \in \mathcal{Q}} 1{\{J(s)=i\}} a_{i,s}(t)}{\boldsymbol{\delta}_{J=i}^\top \mathbf{p}(t)} \notag \\ \noalign{\refstepcounter{equation}}
&\overset{\mathclap{(\alph{equation})}}{=} \frac{\boldsymbol{\delta}_{J=i}^\top \mathbf{a}_i(t)}{\boldsymbol{\delta}_{J=i}^\top \mathbf{p}(t)} \nonumber \\  \noalign{\refstepcounter{equation}} 
&\overset{\mathclap{(\alph{equation})}}{=}\hat{\Delta}_i(t). \notag
\end{align}
\end{subequations}
Equation (\ref{eq:expand_over_Q}g) is proven in Appendix \ref{peak_AOI}. Because $\Delta^s_i(t) \equiv \hat{\Delta}_i(t)$, we can directly substitute the average peak age into \eqref{eq:gap_uncomp}. Hence, the gap between the average peak age and the average age can be written as:$$\hat{\Delta}_i(t) - \bar{\Delta}_i(t) = (1 - \pi_i(t)) (\hat{\Delta}_i(t) - \Delta_i^c(t)).$$

\section{Proof of Corollary \ref{fixed_point}}\label{appendix_d}

\begin{proof}
 Let $F$ denote the one-period map that takes an initial condition $\mathbf{x}(0)$ to the solution $\mathbf{x}(T)$ obtained by integrating the system of ODEs over one period $T$, such that $F(\mathbf{x}(0)) = \mathbf{x}(T)$. By definition, the periodic steady state $\mathbf{x}^\star(t)$ satisfies $\mathbf{x}^\star(0) = \mathbf{x}^\star(T) = F(\mathbf{x}^\star(0))$, making $\mathbf{x}^\star(0)$ a fixed point of $F$. To rigorously evaluate the convergence of the discrete steps while bypassing transient effects, we analyze the one-period map in the weighted norm $\|\cdot\|_\mathbf{H}$. By the construction of $\mathbf{H}$ established in the proof of Theorem \ref{existence_thm} in Appendix \ref{appendix_b}, the one-period map is free of the geometric multiplier $m$, yielding the strict contraction:
\begin{align}
\|\mathbf{x}(t) - \mathbf{x}^\star(t)\|_\mathbf{H} &= \|\mathbf{\Phi}(t) \big(\mathbf{x}(0) - \mathbf{x}^\star(0)\big)\|_\mathbf{H} \notag \\
&\leq \|\mathbf{F}(t)e^{\mathbf{K}t}(\mathbf{x}(0) - \mathbf{x}^\star(0))\|_\mathbf{H} \notag \\
&\leq \|\mathbf{F}(t)\|_\mathbf{H} \|e^{\mathbf{K}t}\|_\mathbf{H} \|\mathbf{x}(0) - \mathbf{x}^\star(0)\|_\mathbf{H}. \notag \\
&\leq \|\mathbf{F}(t)\|_\mathbf{H} e^{-\gamma t} \mathbf{x}(0) - \mathbf{x}^\star(0)\|_\mathbf{H}. \notag
\end{align}
Now, we evaluate the inequality at $t = T$.
\begin{align}
\|\mathbf{x}(T) - \mathbf{x}^\star(T)\|_\mathbf{H}&\leq \|\mathbf{F}(T)\|_\mathbf{H} e^{-\gamma T} \|\mathbf{x}(0) - \mathbf{x}^\star(0)\|_\mathbf{H}. \notag \\
&= \|\mathbf{F}(0)\|_\mathbf{H} e^{-\gamma T} \|\mathbf{x}(0) - \mathbf{x}^\star(0)\|_\mathbf{H}. \notag \\
&= \|\mathbf{I}\|_\mathbf{H} e^{-\gamma T} \|\mathbf{x}(0) - \mathbf{x}^\star(0)\|_\mathbf{H}. \notag \\
&= e^{-\gamma T} \|\mathbf{x}(0) - \mathbf{x}^\star(0)\|_\mathbf{H}. \notag 
\end{align}
 where $\|\mathbf{I}\|_\mathbf{H} = 1$ and $e^{-\gamma T} < 1$ since $\gamma > 0$ and $T > 0$. Therefore,
 \begin{align}
\|F(\mathbf{x}(0)) - F(\mathbf{x}^\star(0))\|_\mathbf{H} &\leq e^{-\gamma T} \|\mathbf{x}(0) - \mathbf{x}^\star(0)\|_\mathbf{H}. \notag 
\end{align}
 The relaxed fixed-point iteration updates the initial condition via:
 \begin{equation}\mathbf{x}^{(n+1)}(0) = (1 - \alpha)\mathbf{x}^{(n)}(0) + \alpha F(\mathbf{x}^{(n)}(0)). \notag
 \end{equation}
 Substituting this update rule and the fixed-point identity $x^\star(0) = (1 - \alpha)x^\star(0) + \alpha F(x^\star(0))$ into the error evaluated in the $H$-norm yields:
 \begin{align}\|\mathbf{x}^{(n+1)}(0) - \mathbf{x}^\star(0)\|_\mathbf{H} &= \|(1 - \alpha)(\mathbf{x}^{(n)}(0) - \mathbf{x}^\star(0)) + \alpha(F(\mathbf{x}^{(n)}(0)) - F(\mathbf{x}^\star(0)))\|_\mathbf{H} \notag \\
 &\leq (1 - \alpha) \|\mathbf{x}^{(n)}(0) - \mathbf{x}^\star(0)\|_\mathbf{H} + \alpha \|F(\mathbf{x}^{(n)}(0)) - F(\mathbf{x}^\star(0))\|_\mathbf{H} \notag \\
 &\leq (1 - \alpha) \|\mathbf{x}^{(n)}(0) - \mathbf{x}^\star(0)\|_\mathbf{H} + \alpha e^{-\gamma T} \|\mathbf{x}^{(n)}(0) - \mathbf{x}^\star(0)\|_\mathbf{H} \notag \\
 &= \left(1 - \alpha(1 - e^{-\gamma T})\right) \|\mathbf{x}^{(n)}(0) - \mathbf{x}^\star(0)\|_\mathbf{H}. \notag
 \end{align}
 Given $\alpha \in (0, 1]$ and $0 < e^{-\gamma T} < 1$, the iteration multiplier $\left(1 - \alpha(1 - e^{-\gamma T})\right)$ is strictly less than $1$. Consequently, the distance between the $n$-th iteration estimate $x^{(n)}(0)$ and the fixed point $x^\star(0)$ decreases monotonically in the $H$-norm, ensuring that $\|x^{(n)}(0) - x^\star(0)\|_H \to 0$ as $n \to \infty$. Because all vector norms in a finite-dimensional space are equivalent \cite[Theorem 2.4-5]{kreyszig1978introductory}, this guarantees asymptotic convergence to the fixed point in the standard Euclidean norm as well.

\end{proof}

\section{Fixed-Point Iteration for the Periodic Steady State}
\label{app:pss_algorithm}

This section records the relaxed fixed-point iteration used in Section~\ref{mumericalsolution}.

\begingroup
\setlength{\abovedisplayskip}{2pt}
\setlength{\belowdisplayskip}{2pt}
\begin{compactalgorithm}{Relaxed Fixed-Point Iteration for the Periodic Steady State}
\alginput{Period $T$; ODEs in Theorem~\ref{theorem_main}; tolerance $\varepsilon$; maximum iterations $K$; relaxation $\alpha\in(0,1]$.}
\algline{Choose $x^{(0)}(0)$, e.g., idle-state probability $1$ and all moments $0$.}
\algline{\textbf{for} $n=0,1,\ldots,K-1$ \textbf{do}}
\algline{\quad Integrate on $[0,T]$ and set $x^{(n)}(T)=F\bigl(x^{(n)}(0)\bigr)$.}
\algline{\quad Renormalize probabilities: $p_s \leftarrow p_s/\sum_{s'}p_{s'}$, for all $s$.}
\algline{\quad Update $x^{(n+1)}(0)\leftarrow(1-\alpha)x^{(n)}(0)+\alpha x^{(n)}(T)$.}
\algline{\quad \textbf{if} $\|x^{(n+1)}(0)-x^{(n)}(0)\|/(1+\|x^{(n)}(0)\|)\leq\varepsilon$ \textbf{then break}.}
\algline{\textbf{end for}}
\algoutput{$x^{\star}(0)\approx x^{(n)}(0)$.}
\end{compactalgorithm}
\endgroup

\begin{thebibliography}{10}
\providecommand{\url}[1]{#1}
\csname url@samestyle\endcsname
\providecommand{\newblock}{\relax}
\providecommand{\bibinfo}[2]{#2}
\providecommand{\BIBentrySTDinterwordspacing}{\spaceskip=0pt\relax}
\providecommand{\BIBentryALTinterwordstretchfactor}{4}
\providecommand{\BIBentryALTinterwordspacing}{\spaceskip=\fontdimen2\font plus
\BIBentryALTinterwordstretchfactor\fontdimen3\font minus
  \fontdimen4\font\relax}
\providecommand{\BIBforeignlanguage}[2]{{%
\expandafter\ifx\csname l@#1\endcsname\relax
\typeout{** WARNING: IEEEtran.bst: No hyphenation pattern has been}%
\typeout{** loaded for the language `#1'. Using the pattern for}%
\typeout{** the default language instead.}%
\else
\language=\csname l@#1\endcsname
\fi
#2}}
\providecommand{\BIBdecl}{\relax}
\BIBdecl

\bibitem{yates2021survey}
R.~D. Yates, Y.~Sun, D.~R. Brown, S.~Kaul, E.~Modiano, and S.~Ulukus, ``Age of
  information: An introduction and survey,'' \emph{{IEEE} J. Sel. Areas
  Commun.}, vol.~39, no.~5, pp. 1183--1210, 2021.

\bibitem{kadota2018scheduling}
I.~Kadota, A.~Sinha, E.~Uysal-Biyikoglu, R.~Singh, and E.~Modiano, ``Scheduling
  policies for minimizing age of information in broadcast wireless networks,''
  \emph{{IEEE/ACM} Trans. Netw.}, vol.~26, no.~6, pp. 2637--2650, 2018.

\bibitem{kaul2012real}
S.~Kaul, R.~Yates, and M.~Gruteser, ``Real-time status: How often should one
  update?'' in \emph{Proc. IEEE INFOCOM}, 2012, pp. 2731--2735.

\bibitem{yates2019multiple}
R.~D. Yates and S.~Kaul, ``The age of information: Real-time status updating by
  multiple sources,'' \emph{{IEEE} Trans. Inf. Theory}, vol.~65, no.~3, pp.
  1807--1827, 2019.

\bibitem{xu2021peakpriority}
J.~Xu and N.~Gautam, ``Peak age of information in priority queuing systems,''
  \emph{{IEEE} Trans. Inf. Theory}, vol.~67, no.~1, pp. 373--390, 2021.

\bibitem{costa2016packetmgmt}
M.~Costa, M.~Codreanu, and A.~Ephremides, ``On the age of information in status
  update systems with packet management,'' \emph{{IEEE} Trans. Inf. Theory},
  vol.~62, no.~4, pp. 1897--1910, 2016.

\bibitem{najm2017harq}
E.~Najm, R.~Yates, and E.~Soljanin, ``Status updates through {M/G/1/1} queues
  with {HARQ},'' in \emph{Proc. IEEE Int. Symp. Inf. Theory (ISIT)}, 2017, pp.
  131--135.

\bibitem{inoue2019stationarydist}
Y.~Inoue, H.~Masuyama, T.~Takine, and T.~Tanaka, ``A general formula for the
  stationary distribution of the age of information and its application to
  single-server queues,'' \emph{{IEEE} Trans. Inf. Theory}, vol.~65, no.~12,
  pp. 8305--8324, 2019.

\bibitem{champati2019distribution}
J.~P. Champati, H.~Al-Zubaidy, and J.~Gross, ``On the distribution of {AoI} for
  the {GI/GI/1/1} and {GI/GI/1/2} systems: Exact expressions and bounds,'' in
  \emph{Proc. IEEE INFOCOM}, 2019, pp. 37--45.

\bibitem{yates2020moments}
R.~D. Yates, ``The age of information in networks: Moments, distributions, and
  sampling,'' \emph{{IEEE} Trans. Inf. Theory}, vol.~66, no.~9, pp. 5712--5728,
  2020.

\bibitem{dogan2021probpreemptive}
O.~Dogan and N.~Akar, ``The multi-source probabilistically preemptive
  {M/PH/1/1} queue with packet errors,'' \emph{{IEEE} Trans. Commun.}, vol.~69,
  no.~11, pp. 7297--7308, 2021.

\bibitem{moltafet2022mgf}
M.~Moltafet, M.~Leinonen, and M.~Codreanu, ``Moment generating function of age
  of information in multi-source {M/G/1/1} queueing systems,'' \emph{{IEEE}
  Trans. Commun.}, vol.~70, no.~10, pp. 6503--6516, 2022.

\bibitem{hu2021violation}
L.~Hu, Z.~Chen, Y.~Dong, Y.~Jia, L.~Liang, and M.~Wang, ``Status update in
  {IoT} networks: Age-of-information violation probability and optimal update
  rate,'' \emph{{IEEE} Internet Things J.}, vol.~8, no.~14, pp.
  11\,329--11\,344, 2021.

\bibitem{xu2025distribution}
J.~Xu, W.~Wang, and N.~Gautam, ``On the distribution of age of information in
  time-varying updating systems,'' \emph{arXiv preprint arXiv:2507.03799},
  2025.

\bibitem{green2007coping}
L.~V. Green, P.~J. Kolesar, and W.~Whitt, ``Coping with time-varying demand
  when setting staffing requirements for a service system,'' \emph{Prod. Oper.
  Manag.}, vol.~16, no.~1, pp. 13--39, 2007.

\bibitem{feldman2008staffing}
Z.~Feldman, A.~Mandelbaum, W.~A. Massey, and W.~Whitt, ``Staffing of
  time-varying queues to achieve time-stable performance,'' \emph{Manage.
  Sci.}, vol.~54, no.~2, pp. 324--338, 2008.

\bibitem{antsaklis2006linear}
P.~J. Antsaklis and A.~N. Michel, \emph{Linear Systems}.\hskip 1em plus 0.5em
  minus 0.4em\relax Boston, MA: Birkh{\"a}user, 2006.

\bibitem{slane2011analysis}
J.~Slane and S.~Tragesser, ``Analysis of periodic nonautonomous inhomogeneous
  systems,'' \emph{Nonlinear Dynamics and Systems Theory}, vol.~11, no.~2, pp.
  183--198, 2011.

\bibitem{bernstein2009matrix}
D.~S. Bernstein, \emph{Matrix Mathematics: Theory, Facts, and Formulas},
  2nd~ed.\hskip 1em plus 0.5em minus 0.4em\relax Princeton, NJ: Princeton
  University Press, 2009.

\bibitem{hu2012bounds}
G.-D. Hu and T.~Mitsui, ``Bounds of the matrix eigenvalues and its exponential
  by lyapunov equation,'' \emph{Kybernetika}, vol.~48, no.~5, pp. 865--878,
  2012.

\bibitem{kreyszig1978introductory}
E.~Kreyszig, \emph{Introductory Functional Analysis with Applications}.\hskip
  1em plus 0.5em minus 0.4em\relax New York: John Wiley \& Sons, 1978.

\end{thebibliography}
\end{document}